\documentclass[11pt,a4paper]{article}
\usepackage[margin=27mm]{geometry}
\usepackage[T1]{fontenc}
\usepackage{lmodern}
\usepackage[nopatch=footnote]{microtype}
\usepackage{amsmath,amssymb,mathtools,bm}
\usepackage{amsthm}
\usepackage{booktabs,array,tabularx,longtable}
\usepackage{enumitem}
\usepackage{xcolor}
\usepackage{xurl}
\usepackage{setspace}
\usepackage{hyperref}
\usepackage{cleveref}
\usepackage{authblk}
\usepackage{csquotes}

\definecolor{draftblue}{RGB}{30,76,130}
\hypersetup{colorlinks=true,linkcolor=draftblue,citecolor=draftblue,urlcolor=draftblue,pdftitle={A Sharpened Entropy Principle for Two-Fluid Polymer Thermodynamics},pdfauthor={Dieter Bothe}}
\newtheorem{principle}{Principle}

\newtheorem{proposition}{Proposition}
\newtheorem{theorem}{Theorem}

\newtheorem{remark}{Remark}
\newtheorem{definition}{Definition}

\newcommand{\R}{\mathbb R}
\newcommand{\I}{\mathbf I}

\newcommand{\T}{\mathbf T}
\newcommand{\D}{\mathbf D}
\newcommand{\W}{\mathbf W}
\newcommand{\Lgrad}{\mathbf L}
\newcommand{\C}{\mathbf C}
\newcommand{\taup}{\bm\tau_{p}}
\newcommand{\Jr}{\mathbf J_r}

\newcommand{\dd}{\mathrm d}
\newcommand{\Div}{\operatorname{div}}
\newcommand{\tr}{\operatorname{tr}}
\newcommand{\sym}{\operatorname{sym}}
\newcommand{\dev}{\operatorname{dev}}

\newcommand{\Db}{\frac{D}{Dt}}
\newcommand{\Zc}{\mathbf Z}
\newcommand{\CV}{\mathcal C_V}
\newcommand{\Dp}{\frac{D_p}{Dt}}

\newcommand{\eqdef}{\mathrel{:=}}

\title{\textbf{A Sharpened Entropy Principle for Two-Fluid Polymer Thermodynamics}\\[0.7em]
\large\normalfont\itshape Dedicated to the remembrance of John C. Slattery.}
\author[1]{Dieter Bothe}
\affil[1]{Department of Mathematics, Technical University of Darmstadt, Germany}
\date{}

\begin{document}
\maketitle

\begin{abstract}
A compressible, non-isothermal dilute polymer solution is formulated as a Class-II binary mixture with separate constituent mass and momentum balances and a single mixture energy and entropy balance. An additional balance-anchoring postulate restricts the constitutive factorization of selected dissipative mechanisms in a fixed balance representation.

A connector population balance fixes polymer convection and, by second moments, yields a two-velocity upper-convected conformation rate. For a density-linear configurational free energy, conformation transport cancels its chemical-potential counterpart, while deformation power cancels the elastic partial-stress power when their weights match. Within this mechanism-wise constitutive framework, a Gordon--Schowalter modification is compatible with the unchanged Kramers stress only for the affine choice, unless another reversible channel is supplied. For Hookean springs, the moment equation closes without Gaussianity; explicit nonnegative remainder identities quantify the free energy and dissipation not resolved by the conformation tensor.

An entropy-invariant Class-II-to-Class-I constitutive identification yields thermo-chemical, configurational-stress and partial-viscous-stress diffusion forces. Eliminating relative acceleration is distinguished from omitting the quadratic relative-inertia momentum flux and its kinetic-energy storage. Coordinated stress--interaction changes preserve the momentum and energy balances but shift the local entropy pair by a divergence. The temperature equations distinguish relaxation heating from entropy production and retain the chain-rule terms generated by temperature-dependent conformation normalization. Classical Oldroyd-B/UCM rheology is recovered only after further one-velocity, incompressible and isothermal reductions.
\end{abstract}

\noindent\textbf{Keywords:} entropy principle; two-fluid polymer solution; conformation tensor; non-isothermal viscoelasticity; entropy-invariant model reduction.

\section{Introduction}

On 24 February 2007, I wrote to John C. Slattery seeking advice on surfactant transport, surface viscosity and the differential geometry of interfacial balances. In his reply, he directed me to the second edition of \emph{Interfacial Transport Phenomena}, coauthored with Leonard Sagis and Eun-Suok Oh~\cite{SlatterySagisOh2007}, and warmly invited further questions. He emphasized that neglecting surfactant mass transfer to the adjacent phases is an approximation, even when it is appropriate for the problem at hand.\footnote{Personal correspondence with John C. Slattery, in reply to the author's inquiry of 24 February 2007.} His response combined scientific clarity with generosity toward a colleague working between mathematics and chemical engineering. I dedicate this contribution to his memory with gratitude.

The systematic distinction between balance laws and constitutive assumptions is central to the continuum treatment of momentum, energy and mass transport exemplified by Slattery's \emph{Advanced Transport Phenomena}~\cite{Slattery1999}. Although the present study concerns a bulk solvent--polymer mixture rather than an interface, the distinction between a physical mechanism and its neglect in a reduced model is equally pertinent. The thermodynamic two-fluid formulation developed below therefore keeps balance identities, constitutive choices and model reductions explicit.

Class-II mixture theory retains a separate momentum balance for each constituent, whereas Class-I theory replaces the partial momenta by a barycentric momentum balance while retaining partial mass balances. In a binary polymer solution this distinction exposes the relative velocity, interconstituent momentum production and separate constituent stress powers before reduction. These quantities participate in the mechanical exchange between solvent, polymer and polymer configuration and are therefore relevant to the local entropy exploitation~\cite{BotheDreyer2015}.

The entropy principle used below is the total entropy balance of the complete modeled system. Its exploitation is supplemented by Principle~V, introduced by the author in the present work, which requires a constitutive cofactor in each selected binary dissipative product to be anchored in the unclosed balances. This is an additional constitutive postulate, not a consequence of the scalar second-law inequality alone. This requirement is imposed after sign-indefinite reversible powers have been identified and removed from the entropy production.

Polymer configuration is represented first by a connector distribution and subsequently by a conformation tensor. The transport kinematics are not selected by objectivity alone: polymer centers are carried with $\bm v_p$ in the connector population balance, while connector deformation is generated by $\Lgrad_d$. Taking the second moment therefore fixes the polymer material derivative and yields the two-velocity upper-convected conformation operator; observer covariance is then verified as a consequence. A reduced configurational relative-free-energy balance contains a signed mechanical exchange term. In the Class-II formulation this term is paired with the complementary constituent stress power and cancels before the total entropy production is decomposed. At conformation level that cancellation constrains deformation velocity and elastic-stress decomposition, and the later Gordon--Schowalter test shows that, for the stated dumbbell free energy and Kramers stress with no additional reversible channel, the affine upper-convected rate is the compatible choice.

The Class-II model is subsequently reduced algebraically to a Class-I polymer mixture. Partial mass and internal-energy balances are identified, the Class-I entropy flux is selected so that $\zeta^{\rm I}\equiv\zeta^{\rm II}$, and the Class-I diffusion force inherits thermo-chemical, configurational-stress and partial-viscous-stress contributions. The reduction distinguishes three operations: a change of variables, a constitutive identification eliminating relative acceleration, and omission of quadratic relative inertia from the reduced momentum and energy architecture. On parent processes satisfying the identification, the quadratic momentum flux is paired with a reversible relative kinetic-energy storage and transport term. This conditional identity does not make the full relative dynamics reversible or establish dynamical equivalence of the two model classes. A zero-diffusion limit is taken only afterward.

The non-isothermal reductions are compared with established viscoelastic energy equations in \Cref{sec:temperature-equation}; the broader two-fluid polymer literature is discussed in \Cref{sec:literature-comparison}.

Several ingredients used below have established precedents: Kramers stress~\cite{Bird1987,Ottinger1996}, two-fluid stress--composition coupling~\cite{DoiOnuki1992,Apostolakis2002,ZhouDoi2022}, non-isothermal conformation thermodynamics~\cite{WapperomHulsen1998,Hron2017,Dostalik2020}, and entropy-invariant Class-II to Class-I reduction for molecular mixtures~\cite{BotheDreyer2015}. The contribution of the present work is their coupled Class-II polymer synthesis and the structural results obtained from that synthesis: explicit cancellation of configurational and elastic partial-stress powers within the total entropy identity in a fixed balance representative; a microstructural derivation and thermodynamic compatibility test for the two-velocity upper-convected conformation rate; an explicit characterization of the representation dependence of local mechanism-wise production under coordinated stress--interaction changes; and an entropy-invariant Class-II-to-Class-I reduction that selects a descendant entropy flux preserving the parent production while exposing the inherited diffusion-force structure and distinguishing relative-acceleration elimination from quadratic-inertia truncation.

Throughout, the qualifier ``exact'' is used relative to the explicitly stated balance architecture, constitutive choices and state variables. Thus the entropy and representation identities are algebraic identities of the Class-II model. At the kinetic distribution level, the population and free-energy identities are exact under their stated regularity and boundary assumptions, while nonnegative connector production follows after the connector-flux law has been selected. For Hookean springs the second-moment evolution closes exactly at the level of $\C$ without a Gaussian assumption; Gaussianity is needed for equality of the kinetic and finite-dimensional free energies and dissipations; below, explicit nonnegative remainder identities quantify their difference for non-Gaussian distributions. The Class-I constitutive map additionally eliminates relative acceleration, and its standard momentum/energy architecture omits quadratic relative inertia. This terminology is used to keep balance identities, constitutive assumptions, closure assumptions and algebraic reductions logically distinct.

\section{The sharpened entropy principle: total-system entropy balance and balance anchoring}
\label{sec:entropy-principle}

The second-law statement used throughout is the entropy balance of the complete solvent--polymer system. No separate nonnegative entropy production is postulated for the constituents. Reduced subsystem identities, including relative-free-energy balances, may contain signed exchange terms and are not assigned independent entropy inequalities. Mechanism-wise nonnegativity is imposed only on the total entropy production after reversible internal exchanges have been combined.

The first four principles below follow the axiomatic formulation of Bothe--Dreyer~\cite{BotheDreyer2015}, adapted to the present notation, a volumetric entropy density and the total solvent--polymer entropy balance. Principle~V is the additional balance-anchoring axiom introduced here. In the standard bulk balances used below, constitutive quantities occur as additive parts of unclosed fluxes and sources; Principle~V identifies one such constitutive quantity in every selected binary entropy-production product.

\begin{definition}[Admissible process and local thermodynamic test]
An \emph{admissible process} is a sufficiently smooth solution of the balances, kinematic constraints and constitutive restrictions imposed at the stage under consideration. A \emph{local thermodynamic test} is a variation of the state and the derivatives entering those relations at a point, subject to the relations already imposed. Independence of particular derivatives is an explicit richness hypothesis of the constitutive exploitation, not an existence theorem for the field equations. In particular, the sign-variation tests below assume that the stated deformation components can be varied independently of the fixed state. No assertion is made that every formally compatible local test extends to a solution of an initial-boundary-value problem.
\end{definition}

\begin{principle}[Entropy/entropy-flux pair]
There is an entropy/entropy-flux pair $(s,\bm\Phi)$ as material-dependent quantities, where $s$ is an objective scalar and $\bm\Phi$ is an objective vector.
\end{principle}

\begin{principle}[Entropy balance]
The pair $(s,\bm\Phi)$ satisfies
\begin{equation}
 \partial_t s+\Div(s\bm v+\bm\Phi)=\zeta,
 \label{eq:total-entropy-balance}
\end{equation}
where $\zeta$ is the total entropy production.
\end{principle}

\begin{principle}[Parity]
The parity of a physical quantity is determined from its physical dimension in SI base units: it is negative if the unit of time, second, occurs with an odd power and positive if it occurs with an even power. This is the dimensional parity convention for the present non-electromagnetic balance theory; it is not a statement that dissipative solutions are invariant under time reversal. Hence the volumetric entropy density $s$ has positive parity, whereas the entropy flux $\bm\Phi$ and the entropy production $\zeta$ have negative parity.
\end{principle}

\begin{principle}[Binary dissipative mechanisms]
Any admissible entropy flux is such that
\begin{enumerate}[label=(\roman*)]
\item the entropy production consists of a sum of binary products,
\begin{equation}
 \zeta=\sum_m N_mP_m,
 \label{eq:binary-entropy}
\end{equation}
where the $N_m$ denote quantities of negative parity and the $P_m$ quantities of positive parity. Products denote the appropriate scalar, vector or tensor contractions, or duality pairings of specified coupled blocks;
\item each binary product describes a dissipative mechanism, or a coupled block of mechanisms, introduced in advance, and
\begin{equation}
 N_mP_m\ge0
 \label{eq:mechanism-nonnegative}
\end{equation}
for all $m$ and every admissible local thermodynamic process.
\end{enumerate}
\end{principle}

\begin{principle}[Balance anchoring]
\label{pr:V}
Fix a balance representation and its primitive state variables. For a bulk balance
\[
 \partial_t a+\Div\bm J=r,
\]
a \emph{balance anchor} is an additive constitutive part of $\bm J$ or $r$ that is not fixed kinematically or thermodynamically before constitutive closure. Every selected binary dissipative product $N_mP_m$ contains such an anchor; for a coupled block, the selected constitutive factors are collected into an anchored vector.
\end{principle}

Principle~V restricts refactorization after the balance representative and the principal mechanisms have been fixed. It does not uniquely select those mechanisms, their grouping, a stress representative or an entropy flux. Parity is attached to the selected anchored factors, rather than assigned anew after an arbitrary factorization of the same scalar product. Rescaling a selected factor by a nonvanishing even state scalar, with inverse rescaling of its partner, leaves its parity and production unchanged. More general entropy-invariant mixing must be specified explicitly. Positivity alone constrains the symmetric part of a linear constitutive block; no independent derivation of microscopic Onsager--Casimir reciprocity from Principle~V alone is claimed.

The mechanism-wise condition above does not exclude cross-effects. Principal mechanisms are first selected with their anchored factors; entropy-invariant mixing may subsequently combine several such mechanisms into an Onsager block. Positivity is then imposed on the resulting block quadratic form, not on each individual cross-term after expansion. This is the sense in which mechanism-wise nonnegativity and entropy-invariant cross-coupling are used below. For further details on this axiomatic form of the entropy principle and its extension to sharp-interface multicomponent systems, see Bothe--Dreyer~\cite{BotheDreyer2015} and Bothe~\cite{BotheSharpInterface2022,BotheMultiVelocity2025}.

\section{Class-II parent model: a compressible non-isothermal binary solution}
\label{sec:mixture-architecture}

At Class-II level, each constituent has a mass and momentum balance, while energy and entropy are balanced for the mixture. The separate momentum balances retain the interconstituent momentum production and constituent stress powers needed below to identify reversible configurational exchange before reduction to a barycentric momentum equation.

Consider two molecular constituents, solvent $s$ and polymer $p$. The polymer species is a prototype representing one population of long chains. No incompressibility or isothermal constraint is imposed at the constitutive starting point. Spatial dimension is denoted by $d$ (physically $d=3$). We use $(\nabla\bm v)_{ij}=\partial_jv_i$, $(\Div\T)_i=\partial_jT_{ij}$ and $\mathbf A:\mathbf B=\tr(\mathbf A^{\sf T}\mathbf B)$, and set $\dev\mathbf A=\mathbf A-(\tr\mathbf A)\I/d$. A constituent subscript $\alpha\in\{s,p\}$ is distinct from the unindexed scalar deformation weight $\alpha$ introduced below.

Let
\begin{equation}
 \rho_s>0,\quad \rho_p>0,\qquad \bm v_s,\quad \bm v_p,\qquad T>0
 \label{eq:basic-fields}
\end{equation}
be the partial mass densities, constituent velocities and common local temperature. The local entropy exploitation is restricted to the interior of the state space, where both constituent densities are positive; limiting states with a vanishing polymer or solvent density require a separate degenerate-limit analysis. The common-temperature assumption expresses fast thermal equilibration between the molecular species but does \emph{not} impose $T=\mathrm{const}$. Define
\begin{equation}
 \rho=\rho_s+\rho_p,
 \qquad
 \bm v=\frac{\rho_s\bm v_s+\rho_p\bm v_p}{\rho},
 \qquad
 \bm w=\bm v_p-\bm v_s.
 \label{eq:barycentric}
\end{equation}
The partial mass balances are
\begin{equation}
 \partial_t\rho_\alpha+\Div(\rho_\alpha\bm v_\alpha)=0,
 \qquad \alpha\in\{s,p\}.
 \label{eq:partial-mass}
\end{equation}
No chemical conversion between solvent and polymer is considered.

The partial momentum balances are written as
\begin{equation}
 \partial_t(\rho_\alpha\bm v_\alpha)
 +\Div(\rho_\alpha\bm v_\alpha\otimes\bm v_\alpha)
 =\Div\T_\alpha+\bm m_\alpha,
 \label{eq:partial-momentum}
\end{equation}
with
\begin{equation}
 \bm m_s+\bm m_p=\bm0.
 \label{eq:momentum-exchange}
\end{equation}
Here $\bm m_\alpha$ is the interconstituent momentum production. External body forces are omitted for clarity; they may be included provided their mechanical power is incorporated consistently in the total-energy balance, in which case the internal-energy identities below are unchanged after that external power is accounted for. Couple stresses and body couples are excluded, and as an additional constitutive restriction the constituent Cauchy stresses are taken symmetric, $\T_\alpha=\T_\alpha^{\sf T}$. This symmetry assumption is what permits the mechanical power to be written as $\T_\alpha:\D_\alpha$ rather than $\T_\alpha:\Lgrad_\alpha$; the formulation does not rely on deriving symmetry of each partial stress from the total angular-momentum balance alone.

Use
\begin{equation}
 \Lgrad_\alpha=\nabla\bm v_\alpha,
 \qquad
 \D_\alpha=\sym\Lgrad_\alpha,
 \qquad \alpha\in\{s,p\}.
 \label{eq:Dalpha}
\end{equation}

The model contains one total energy balance. With
\begin{equation}
 K=\frac12\rho_s|\bm v_s|^2+\frac12\rho_p|\bm v_p|^2,
 \qquad E=u+K,
 \label{eq:energy-split}
\end{equation}
write
\begin{equation}
 \partial_t E+\Div\bm J_E=0.
 \label{eq:total-energy}
\end{equation}
Here $u$ is the total internal-energy density of the solvent--polymer mixture, including configurational energy. Multiplication of the partial momentum balances by the constituent velocities and subtraction from \eqref{eq:total-energy} gives the exact internal-energy equation
\begin{equation}
 \partial_t u+\Div\bm J_u
 =\sum_{\alpha=s,p}\T_\alpha:\D_\alpha-\bm m_p\cdot\bm w,
 \label{eq:internal-energy-exact}
\end{equation}
where
\begin{equation}
 \bm J_u
 =\bm J_E-\sum_{\alpha=s,p}
 \left(\frac12\rho_\alpha|\bm v_\alpha|^2\bm v_\alpha-\T_\alpha\bm v_\alpha\right).
 \label{eq:internal-energy-flux}
\end{equation}
Thus the relative-motion interaction power is part of the \emph{single total} energy balance; it is not an external heat source.

Define the barycentric diffusion fluxes
\begin{equation}
 \bm J_\alpha=\rho_\alpha(\bm v_\alpha-\bm v),
 \qquad \bm J_s+\bm J_p=\bm0,
 \label{eq:mass-diffusion-flux}
\end{equation}
and decompose
\begin{equation}
 \bm J_u=u\bm v+\bm j_u.
 \label{eq:ju-bary}
\end{equation}

\subsection{Thermodynamic state and reduced entropy production rate}
\label{sec:exact-entropy}

At the conformation-tensor level the local Helmholtz free-energy density is taken as
\begin{equation}
 f=f(\rho_s,\rho_p,T,\C),
 \qquad \C=\C^{\sf T}>0\quad\text{(symmetric positive definite)}.
 \label{eq:full-free-energy-state}
\end{equation}
On the interior state domain used below $f$ is assumed to be at least $C^2$ in its arguments, with the additional smoothness required when higher temperature derivatives are invoked in the specialized thermal reductions.
Introduce
\begin{align}
 s&=-\frac{\partial f}{\partial T},
 &\mu_\alpha&=\frac{\partial f}{\partial\rho_\alpha},
 &\Zc&=\frac{\partial f}{\partial\C},
 \label{eq:thermo-conjugates}\\
 u&=f+Ts,
 &p&=\rho_s\mu_s+\rho_p\mu_p-f.
 \label{eq:u-p-def}
\end{align}
Here $s$ is the entropy per physical volume, $\mu_\alpha$ are mass-specific chemical potentials, and $\Zc$ is the tensor thermodynamic force conjugate to conformation. The Gibbs relation is
\begin{equation}
 \dd u=T\,\dd s+\mu_s\,\dd\rho_s+\mu_p\,\dd\rho_p+\Zc:\dd\C.
 \label{eq:gibbs-u}
\end{equation}
In the chosen density variables define mass-specific entropy and internal-energy derivative coefficients by
\begin{equation}
 \bar s_\alpha=-\frac{\partial\mu_\alpha}{\partial T},
 \qquad
 h_\alpha=\mu_\alpha+T\bar s_\alpha,
 \label{eq:partial-entropy-enthalpy}
\end{equation}
All derivatives in \eqref{eq:partial-entropy-enthalpy} hold the other partial density and $\C$ fixed. Thus $h_\alpha=(\partial u/\partial\rho_\alpha)_{T,\rho_{\beta\ne\alpha},\C}$ is an internal-energy derivative coefficient in these coordinates; it is not asserted to be the conventional partial specific enthalpy defined at fixed pressure. The overbar distinguishes the mass-specific coefficient $\bar s_\alpha$ from the volumetric entropy density $s$. These coefficients specify a heat/relative-energy transport convention. Define the corresponding reduced heat flux $\bm q$ through
\begin{equation}
 \bm q=\bm j_u-\sum_{\alpha=s,p}h_\alpha\bm J_\alpha.
 \label{eq:reduced-heat-flux}
\end{equation}
The corresponding nonconvective entropy flux is
\begin{equation}
 \bm\Phi=\sum_{\alpha=s,p}\bar s_\alpha\bm J_\alpha+\frac{\bm q}{T}.
 \label{eq:entropy-flux-exact}
\end{equation}
Thus the total entropy flux appearing under the divergence in Principle~II is $s\bm v+\bm\Phi$. The nonconvective internal-energy flux is partitioned into the selected reduced heat flux and relative constituent transport with the coefficients $h_\alpha$. This convention must be preserved when comparing heat-flux closures. Once the entropy flux \eqref{eq:entropy-flux-exact} has been selected, the entropy balance determines the \emph{reduced entropy production rate}; ``reduced'' refers here to the prior selection of the entropy flux, not to the later Class-II-to-Class-I reduction.

\begin{proposition}[Reduced entropy production rate]
\label{prop:exact-entropy}
Assume \eqref{eq:partial-mass}, \eqref{eq:partial-momentum}, \eqref{eq:internal-energy-exact} and the Gibbs structure \eqref{eq:thermo-conjugates}--\eqref{eq:gibbs-u}. Then, for the selected entropy flux \eqref{eq:entropy-flux-exact}, the reduced entropy production rate is
\begin{align}
 T\zeta={}&
 \sum_{\alpha=s,p}\T_\alpha:\D_\alpha
 -\bm m_p\cdot\bm w
 +p\Div\bm v
 -\Zc:\Db\C \notag\\
 &-\sum_{\alpha=s,p}\bm J_\alpha\cdot
 \left(\nabla\mu_\alpha+\bar s_\alpha\nabla T\right)
 -\frac{1}{T}\bm q\cdot\nabla T,
 \label{eq:exact-entropy-identity}
\end{align}
where $\Db=\partial_t+\bm v\cdot\nabla$ is the barycentric material derivative.
\end{proposition}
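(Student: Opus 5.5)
The identity is a standard Gibbs-relation computation. I would compute $T(\partial_t s+\Div(s\bm v+\bm\Phi))$ directly and show that it equals the right-hand side of \eqref{eq:exact-entropy-identity}.

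\emph{Material form of the Gibbs relation.} Write $s=(u-\mu_s\rho_s-\mu_p\rho_p-\Zc:\C+\dots)$ only implicitly. It is simpler to use \eqref{eq:gibbs-u} along the barycentric material derivative:
\[
 T\Db s=\Db u-\sum_\alpha\mu_\alpha\Db\rho_\alpha-\Zc:\Db\C .
\]
Next rewrite each balance in barycentric form. The mass balances \eqref{eq:partial-mass} give $\Db\rho_\alpha=-\rho_\alpha\Div\bm v-\Div\bm J_\alpha$. The internal-energy balance \eqref{eq:internal-energy-exact} with \eqref{eq:ju-bary} gives
\[
 \Db u=-u\Div\bm v-\Div\bm j_u+\sum_\alpha\T_\alpha:\D_\alpha-\bm m_p\cdot\bm w .
\]
Substituting these and using $-u+\sum_\alpha\mu_\alpha\rho_\alpha=-Ts+p$, by \eqref{eq:u-p-def}, yields
\[
 T\bigl(\Db s+s\Div\bm v\bigr)=p\Div\bm v-\Div\bm j_u+\sum_\alpha\mu_\alpha\Div\bm J_\alpha+\sum_\alpha\T_\alpha:\D_\alpha-\bm m_p\cdot\bm w-\Zc:\Db\C .
\]
The left-hand side is $T(\partial_t s+\Div(s\bm v))$.

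\emph{Flux terms.} It remains to add $T\Div\bm\Phi$ and absorb the divergence terms. By \eqref{eq:reduced-heat-flux}, $-\Div\bm j_u=-\Div\bm q-\sum_\alpha\Div(h_\alpha\bm J_\alpha)$. Using $h_\alpha=\mu_\alpha+T\bar s_\alpha$, the combination
\[
 -\Div(h_\alpha\bm J_\alpha)+\mu_\alpha\Div\bm J_\alpha+T\Div(\bar s_\alpha\bm J_\alpha)
\]
collapses to $-\bm J_\alpha\cdot\nabla\mu_\alpha-\bar s_\alpha\bm J_\alpha\cdot\nabla T$. Similarly, $-\Div\bm q+T\Div(\bm q/T)=-\bm q\cdot\nabla T/T$. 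Collecting all terms gives exactly \eqref{eq:exact-entropy-identity}, with $\zeta$ defined by the entropy balance \eqref{eq:total-entropy-balance}.

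\emph{Main obstacle.} The only delicate point is the bookkeeping of the $h_\alpha$ versus $\mu_\alpha$ and $\bar s_\alpha$ terms, together with the pressure identity. In particular, one must check that the product rule applied to $\Div(h_\alpha\bm J_\alpha)$ leaves no residual $\bm J_\alpha\cdot\nabla h_\alpha$ term. This works out because $T\nabla\bar s_\alpha$ and $\nabla(T\bar s_\alpha)$ differ exactly by $\bar s_\alpha\nabla T$.

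Smoothness of $f$ ($C^2$) is needed so that $\mu_\alpha$ and $\bar s_\alpha$ are differentiable along the fields and the chain rule for $\Db s$ is justified. Positivity of $T$ is needed to divide by it.
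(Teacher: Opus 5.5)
Your proposal is correct and follows essentially the same route as the paper's proof. Both use the material Gibbs relation, the barycentric forms of the mass and internal-energy balances, and the pressure identity from $u=f+Ts$. Both then absorb the $\mu_\alpha$, $h_\alpha=\mu_\alpha+T\bar s_\alpha$ and $\bm q/T$ divergence terms into the selected entropy flux by the product rule. You expand $\Div(h_\alpha\bm J_\alpha)$ directly, whereas the paper first groups $\bm j_u-\sum_\alpha\mu_\alpha\bm J_\alpha=\bm q+T\sum_\alpha\bar s_\alpha\bm J_\alpha$; this is the same bookkeeping.

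Your opening aside expressing $s$ through $u$, $\mu_\alpha\rho_\alpha$ and $\Zc:\C$ is not a valid identity here; the correct relation is $Ts=u+p-\sum_\alpha\rho_\alpha\mu_\alpha$. Since you never use it, it does no harm and can simply be deleted.
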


\begin{proof}
Write \eqref{eq:internal-energy-exact} as
\[
 \Db u+u\Div\bm v+\Div\bm j_u
 =\sum_\alpha\T_\alpha:\D_\alpha-\bm m_p\cdot\bm w.
\]
The Gibbs relation gives
\[
 T\Db s=\Db u-\sum_\alpha\mu_\alpha\Db\rho_\alpha-\Zc:\Db\C.
\]
Using
\[
 \Db\rho_\alpha=-\rho_\alpha\Div\bm v-\Div\bm J_\alpha
\]
and $\partial_t s+\Div(s\bm v)=\Db s+s\Div\bm v$, one obtains
\begin{align*}
 T\big[\partial_t s+\Div(s\bm v)\big]
 ={}&\sum_\alpha\T_\alpha:\D_\alpha-\bm m_p\cdot\bm w
 -\Div\bm j_u-\Zc:\Db\C\\
 &+\big(-u+Ts+\sum_\alpha\rho_\alpha\mu_\alpha\big)\Div\bm v
 +\sum_\alpha\mu_\alpha\Div\bm J_\alpha.
\end{align*}
Because $u=f+Ts$, the coefficient of $\Div\bm v$ is exactly
$\sum_\alpha\rho_\alpha\mu_\alpha-f=p$. Moreover,
\[
 \sum_\alpha\mu_\alpha\Div\bm J_\alpha
 =\Div\!\left(\sum_\alpha\mu_\alpha\bm J_\alpha\right)
 -\sum_\alpha\bm J_\alpha\cdot\nabla\mu_\alpha.
\]
By \eqref{eq:reduced-heat-flux},
\[
 \bm j_u-\sum_\alpha\mu_\alpha\bm J_\alpha
 =\bm q+T\sum_\alpha\bar s_\alpha\bm J_\alpha.
\]
Finally use
\begin{align*}
 T\Div\!\left(\sum_\alpha\bar s_\alpha\bm J_\alpha\right)
 &=\Div\!\left(T\sum_\alpha\bar s_\alpha\bm J_\alpha\right)
 -\sum_\alpha\bar s_\alpha\bm J_\alpha\cdot\nabla T,\\
 T\Div\!\left(\frac{\bm q}{T}\right)
 &=\Div\bm q-\frac{1}{T}\bm q\cdot\nabla T.
\end{align*}
Collecting these two divergence contributions according to the already selected entropy flux \eqref{eq:entropy-flux-exact} gives precisely \eqref{eq:exact-entropy-identity}. Thus \eqref{eq:exact-entropy-identity} is the reduced entropy-production identity at this stage; constitutive sign restrictions are imposed only in the subsequent exploitation.
\end{proof}

For the dilute conformation model, the configurational free energy is homogeneous of degree one in polymer density,
\begin{equation}
 f(\rho_s,\rho_p,T,\C)
 =f_0(\rho_s,\rho_p,T)
 +\frac{\rho_p}{m_p^{\mathrm{mol}}}\,a_C(T,\C).
 \label{eq:dilute-f-split}
\end{equation}
Consequently the configurational part contributes to $\mu_p$, entropy and heat capacity, but cancels from the thermodynamic pressure $p$; this cancellation is used in the later pressure--temperature reduction.

Split the polymer chemical-potential gradient into its ordinary thermo-compositional part and its explicit conformation part:
\begin{equation}
 \nabla\mu_p
 =\nabla^{\circ}\mu_p
 +\frac{1}{\rho_p}\,\Zc:\nabla\C,
 \qquad
 \nabla\mu_s=\nabla^{\circ}\mu_s,
 \label{eq:mu-circle}
\end{equation}
where $\nabla^{\circ}$ denotes a spatial derivative at fixed $\C$ and the vector $(\Zc:\nabla\C)_i=\Zc:\partial_i\C$.

\begin{proposition}[Exact cancellation of conformation transport in the entropy identity]
\label{prop:convective-cancellation}
For the dilute free energy \eqref{eq:dilute-f-split}, the conformation-dependent part of polymer chemical-potential transport cancels exactly the difference between barycentric and polymer convection of $\C$:
\begin{equation}
 -\Zc:\Db\C
 -\bm J_p\cdot\frac{1}{\rho_p}\Zc:\nabla\C
 =-\Zc:\Dp\C.
 \label{eq:convective-cancellation}
\end{equation}
\end{proposition}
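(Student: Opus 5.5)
The plan is to verify \eqref{eq:convective-cancellation} by direct computation. First I fix the meaning of the polymer material derivative, $\Dp\eqdef\partial_t+\bm v_p\cdot\nabla$, so that $\Dp\C=\partial_t\C+(\bm v_p\cdot\nabla)\C$ componentwise. The barycentric derivative is $\Db\C=\partial_t\C+(\bm v\cdot\nabla)\C$. Subtracting gives
\[
 \Dp\C-\Db\C=\big((\bm v_p-\bm v)\cdot\nabla\big)\C=\frac{1}{\rho_p}(\bm J_p\cdot\nabla)\C .
\]
Here I use the definition \eqref{eq:mass-diffusion-flux} of $\bm J_p$ and the standing assumption $\rho_p>0$ on the interior state domain.

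Second, I contract this identity with $\Zc$. By the componentwise convention in \eqref{eq:mu-circle}, $(\Zc:\nabla\C)_i=\Zc:\partial_i\C$. Hence
\[
 \bm J_p\cdot\frac{1}{\rho_p}\Zc:\nabla\C=\frac{1}{\rho_p}\sum_i J_{p,i}\,\Zc:\partial_i\C=\Zc:\frac{1}{\rho_p}(\bm J_p\cdot\nabla)\C=\Zc:(\Dp\C-\Db\C).
\]
Moving the terms across then gives exactly $-\Zc:\Db\C-\bm J_p\cdot\rho_p^{-1}\Zc:\nabla\C=-\Zc:\Dp\C$. This is the claimed identity, and it is purely linear-algebraic.

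The only substantive step, which I expect to be the main point rather than an obstacle, is to justify the decomposition \eqref{eq:mu-circle} from the dilute form \eqref{eq:dilute-f-split}. Here the density-linearity enters. From \eqref{eq:dilute-f-split}, $\mu_p=\partial_{\rho_p}f_0+a_C(T,\C)/m_p^{\mathrm{mol}}$ and $\Zc=(\rho_p/m_p^{\mathrm{mol}})\,\partial_\C a_C$. The chain rule then gives
\[
 \nabla\mu_p=\nabla^{\circ}\mu_p+\frac{1}{m_p^{\mathrm{mol}}}\partial_\C a_C:\nabla\C=\nabla^{\circ}\mu_p+\frac{1}{\rho_p}\Zc:\nabla\C .
\]
Similarly, $\mu_s=\partial_{\rho_s}f_0$ is independent of $\C$, so $\nabla\mu_s=\nabla^\circ\mu_s$. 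Thus the $\C$-dependent part of $-\sum_\alpha\bm J_\alpha\cdot\nabla\mu_\alpha$ in \eqref{eq:exact-entropy-identity} is precisely the second term on the left of \eqref{eq:convective-cancellation}. The coefficient $1/\rho_p$ matches the factor $\rho_p$ in $\bm J_p$ exactly because $f$ is homogeneous of degree one in $\rho_p$ at fixed $\C$. I would remark that for a general $f$ this matching fails and a residual term remains. Smoothness ($f\in C^2$) suffices for all derivatives used.
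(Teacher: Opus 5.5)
Your proof is correct and is essentially the paper's argument: use $\bm v_p-\bm v=\bm J_p/\rho_p$ to write $\Db\C=\Dp\C-(\bm J_p/\rho_p)\cdot\nabla\C$, contract with $\Zc$, and observe that the two $\bm J_p\cdot(\Zc:\nabla\C)/\rho_p$ terms cancel without any constitutive input. Your extra chain-rule check of the split \eqref{eq:mu-circle} from \eqref{eq:dilute-f-split} supplies what the paper leaves implicit, namely why the second term on the left is exactly the conformation part of the polymer chemical-potential transport. Strictly, it is the configurational part of $f$, not $f$ itself, that is linear in $\rho_p$ and independent of $\rho_s$.
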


\begin{proof}
Because $\bm v_p-\bm v=\bm J_p/\rho_p$,
\[
 \Db\C=\Dp\C-\frac{\bm J_p}{\rho_p}\cdot\nabla\C.
\]
Substitution gives two equal and opposite contractions with $\bm J_p\cdot(\Zc:\nabla\C)/\rho_p$. No constitutive closure is involved.
\end{proof}

Relative transport of configurational free energy is already contained in the polymer chemical-potential/relative-energy transport and is therefore not an additional dissipative mechanism.

\subsection{Explicit reversible extraction of the Class-II mixture part}
\label{sec:explicit-mixture-gauge}

After Proposition~\ref{prop:convective-cancellation}, the remaining chemical-potential gradients in \eqref{eq:exact-entropy-identity} are the ordinary derivatives $\nabla^{\circ}\mu_\alpha$. An explicit reversible Class-II representation is fixed next and its cancellation is verified within the same total entropy identity.

Set
\begin{equation}
 \omega_\alpha=\frac{\rho_\alpha}{\rho},
 \qquad
 M_{sp}=\frac{\rho_s\rho_p}{\rho},
 \qquad
 \bm X_\alpha^{\circ}=\nabla^{\circ}\mu_\alpha+\bar s_\alpha\nabla T.
 \label{eq:ordinary-forces}
\end{equation}
Since $\bm J_p=M_{sp}\bm w$ and $\bm J_s=-\bm J_p$, choose the ordinary reversible partial stresses and interaction force as
\begin{align}
 \T_{s,0}&=-\omega_s p\I,
 &\T_{p,0}&=-\omega_p p\I,
 \label{eq:explicit-rev-stress}\\
 \bm m_{p,0}
 &=-M_{sp}(\bm X_p^{\circ}-\bm X_s^{\circ})
   -p\nabla\omega_s,
 &\bm m_{s,0}&=-\bm m_{p,0}.
 \label{eq:explicit-rev-interaction}
\end{align}
This specifies one stress/interaction representative; the decomposition into partial stresses and interaction force is not unique. Indeed,
\begin{equation}
 \Div\bm v
 =\omega_s\Div\bm v_s+\omega_p\Div\bm v_p
 -\bm w\cdot\nabla\omega_s,
 \label{eq:div-v-mixture}
\end{equation}
and therefore
\begin{align}
 &\sum_{\alpha=s,p}\T_{\alpha,0}:\D_\alpha
 +p\Div\bm v
 -\bm m_{p,0}\cdot\bm w
 -\sum_{\alpha=s,p}\bm J_\alpha\cdot\bm X_\alpha^{\circ}=0.
 \label{eq:ordinary-reversible-cancellation}
\end{align}
Thus the ordinary pressure, diffusion and reversible interaction powers cancel identically in the total entropy identity. Explicitly, \eqref{eq:div-v-mixture} gives
\[
 \sum_\alpha\T_{\alpha,0}:\D_\alpha+p\Div\bm v
 =-p\,\bm w\cdot\nabla\omega_s,
\]
whereas \eqref{eq:explicit-rev-interaction} yields
\[
 -\bm m_{p,0}\cdot\bm w
 =M_{sp}(\bm X_p^{\circ}-\bm X_s^{\circ})\cdot\bm w
 +p\,\bm w\cdot\nabla\omega_s.
\]
Since $\bm J_p=M_{sp}\bm w=-\bm J_s$, the remaining thermo-chemical term is exactly the negative of the first term on the right. This verifies \eqref{eq:ordinary-reversible-cancellation} without invoking a constitutive closure.

Now decompose
\begin{equation}
 \T_\alpha=\T_{\alpha,0}+\T_\alpha^{\rm el}+\T_\alpha^d,
 \qquad
 \bm m_p=\bm m_{p,0}+\bm m_p^{\rm el}+\bm m_p^d,
 \label{eq:full-reversible-split}
\end{equation}
where the superscript ``el'' denotes the reversible configurational contribution treated in \Cref{sec:cancellation}. In the explicit gauge used throughout the cancellation theorem we set
\begin{equation}
 \bm m_p^{\rm el}=\bm0.
 \label{eq:elastic-interaction-gauge}
\end{equation}
The gauge $\bm m_p^{\rm el}=\bm0$ assigns the reversible elastic power to the partial elastic stresses; it is a representation choice rather than a statement of physical decoupling. Proposition~\ref{prop:gauge-transform} gives the corresponding transformation to other balance-equivalent stress/interaction representatives. The gauge is fixed in the subsequent entropy exploitation. After the ordinary cancellation \eqref{eq:ordinary-reversible-cancellation}, define the residual parent-mixture contribution
\begin{equation}
 \zeta_{\mathrm{mix}}
 =-\frac{\bm q\cdot\nabla T}{T^2}
 +\frac{1}{T}\sum_{\alpha=s,p}\T^{d}_\alpha:\D_\alpha
 -\frac{1}{T}\bm m^{d}_p\cdot(\bm v_p-\bm v_s).
 \label{eq:classII-entropy}
\end{equation}
\begin{proposition}[Coordinated stress--interaction representation change]
\label{prop:gauge-transform}
Let $\mathbf S=\mathbf S^{\sf T}$ be a sufficiently smooth tensor field and define
\begin{align}
 \T_p'&=\T_p+\mathbf S,
 &\T_s'&=\T_s-\mathbf S,\notag\\
 \bm m_p'&=\bm m_p-\Div\mathbf S,
 &\bm m_s'&=-\bm m_p',
 \label{eq:gauge-transform-mech}
\end{align}
together with
\begin{equation}
 \bm J_u'=\bm J_u+\mathbf S\bm w,
 \qquad
 \bm q'=\bm q+\mathbf S\bm w.
 \label{eq:gauge-transform-flux}
\end{equation}
Then both partial momentum balances and the internal-energy balance are unchanged as balance equations. The local mechanical-power density transforms according to
\begin{equation}
 \sum_\alpha\T_\alpha':\D_\alpha-\bm m_p'\cdot\bm w
 =\sum_\alpha\T_\alpha:\D_\alpha-\bm m_p\cdot\bm w
 +\Div(\mathbf S\bm w).
 \label{eq:gauge-power-divergence}
\end{equation}
Consequently the entropy-flux/production pair changes by
\begin{equation}
 \bm\Phi'=\bm\Phi+\frac{\mathbf S\bm w}{T},
 \qquad
 \zeta'=\zeta+\Div\!\left(\frac{\mathbf S\bm w}{T}\right).
 \label{eq:gauge-entropy-pair}
\end{equation}
Thus momentum--energy balance equivalence alone does not make the \emph{pointwise} entropy production or its mechanism-wise decomposition gauge invariant.
\end{proposition}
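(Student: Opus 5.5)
The proof is a direct verification in three steps: momentum balances, then the power identity and internal-energy balance, then the entropy pair via Proposition~\ref{prop:exact-entropy}.

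\emph{Momentum balances.} In the polymer balance \eqref{eq:partial-momentum} the right-hand side becomes $\Div\T_p'+\bm m_p'=\Div\T_p+\Div\mathbf S+\bm m_p-\Div\mathbf S$, which is the original right-hand side. The solvent balance works the same way, since $\bm m_s'=-\bm m_p'=\bm m_s+\Div\mathbf S$ compensates $\Div(-\mathbf S)$. The constraint \eqref{eq:momentum-exchange} holds by construction. Because the left-hand sides are untouched, both constituent momentum balances are literally the same equations, and the total energy balance \eqref{eq:total-energy} is unaffected provided $\bm J_E$ is shifted compatibly (next step).

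\emph{Power identity and energy.} Symmetry of $\mathbf S$ gives $\mathbf S:\D_\alpha=\mathbf S:\Lgrad_\alpha$. Hence
\[
 \sum_\alpha\T_\alpha':\D_\alpha-\bm m_p'\cdot\bm w
 =\sum_\alpha\T_\alpha:\D_\alpha-\bm m_p\cdot\bm w
 +\mathbf S:\nabla\bm w+(\Div\mathbf S)\cdot\bm w .
\]
The last two terms combine to $\Div(\mathbf S\bm w)$, using $\mathbf S=\mathbf S^{\sf T}$ for the index bookkeeping $\partial_j(S_{ij}w_i)=S_{ij}\partial_jw_i+w_i\partial_jS_{ij}$. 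This is \eqref{eq:gauge-power-divergence}. In \eqref{eq:internal-energy-exact}, moving $\Div(\mathbf S\bm w)$ to the left turns $\bm J_u$ into $\bm J_u+\mathbf S\bm w$, so the balance is unchanged as an equation.

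One consistency point needs a check here. By \eqref{eq:internal-energy-flux}, the flux $\bm J_u$ contains $+\sum_\alpha\T_\alpha\bm v_\alpha$, which shifts by $\mathbf S\bm v_p-\mathbf S\bm v_s=\mathbf S\bm w$. So $\bm J_E$ itself stays unchanged, and the stated $\bm J_u'$ agrees with the definition. Since $u$ and $\bm v$ are unchanged, \eqref{eq:ju-bary} gives $\bm j_u'=\bm j_u+\mathbf S\bm w$. Because $h_\alpha$ and $\bm J_\alpha$ are state and kinematic quantities, \eqref{eq:reduced-heat-flux} then gives $\bm q'=\bm q+\mathbf S\bm w$, which is \eqref{eq:gauge-transform-flux}.

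\emph{Entropy pair.} Substituting $\bm q'$ into \eqref{eq:entropy-flux-exact} gives $\bm\Phi'=\bm\Phi+\mathbf S\bm w/T$ immediately. Since $s$ and $\bm v$ are unchanged, Principle~II gives
\[
 \zeta'-\zeta=\Div(\bm\Phi'-\bm\Phi)=\Div(\mathbf S\bm w/T).
\]
As a cross-check, the terms of \eqref{eq:exact-entropy-identity} that change are the power term, $+\Div(\mathbf S\bm w)$, and the heat-flux term, $-\tfrac1T\mathbf S\bm w\cdot\nabla T$. Their sum, $\Div(\mathbf S\bm w)-\tfrac1T\mathbf S\bm w\cdot\nabla T$, equals $T\,\Div(\mathbf S\bm w/T)$, consistent with the direct computation.

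The main obstacle is bookkeeping rather than analysis. One must see that the stress-power term $\T_\alpha\bm v_\alpha$ inside $\bm J_u$ shifts exactly as the proposition prescribes, so that the total energy flux is invariant. One must also see that the whole shift lands in $\bm q$ rather than in the diffusive $h_\alpha\bm J_\alpha$ part. The concluding remark then follows because $\Div(\mathbf S\bm w/T)$ is generically nonzero and sign-indefinite pointwise.
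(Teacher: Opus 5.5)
Your proof is correct and follows the paper's own argument. It uses direct substitution in the momentum balances and the symmetric-$\mathbf S$ identity $\mathbf S:\nabla\bm w+(\Div\mathbf S)\cdot\bm w=\Div(\mathbf S\bm w)$; the unchanged relative-energy part $h_\alpha\bm J_\alpha$ then forces $\bm q'=\bm q+\mathbf S\bm w$, and substitution into the entropy flux finishes it. Two points you make explicit that the paper leaves implicit are useful refinements: the check via \eqref{eq:internal-energy-flux} that the total energy flux $\bm J_E$ is itself invariant, and reading $\zeta'$ directly from Principle~II with the exact identity kept as a cross-check.
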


\begin{proof}
The two momentum right-hand sides are unchanged by direct substitution. Since $\mathbf S$ is symmetric,
\[
 \mathbf S:(\D_p-\D_s)+(\Div\mathbf S)\cdot\bm w
 =\mathbf S:\nabla\bm w+(\Div\mathbf S)\cdot\bm w
 =\Div(\mathbf S\bm w),
\]
which proves \eqref{eq:gauge-power-divergence}; \eqref{eq:gauge-transform-flux} then gives the same internal-energy balance. Because the relative-energy flux is unchanged, the reduced heat flux changes by the same $\mathbf S\bm w$. Substitution in the entropy flux and exact entropy identity gives \eqref{eq:gauge-entropy-pair}.
\end{proof}

For periodic boundaries, or when $(\mathbf S\bm w)\cdot\bm n=0$, the integrated production is unchanged by \eqref{eq:gauge-entropy-pair}, whereas the pointwise entropy production and its mechanism decomposition change. Local mechanism-wise nonnegativity is therefore imposed only after a representative and its associated entropy flux have been fixed. In this sense, ``gauge'' below means balance-representation freedom: a transformed representative is not automatically a second-law-admissible realization of the same constitutive closure unless its transformed entropy pair satisfies the required local inequality.

One diagonal closure is
\begin{align}
 \bm q&=-\mathbf K_T\nabla T,
 &&\mathbf K_T=\mathbf K_T^{\sf T}\ge0,
 \label{eq:fourier}\\
 \T^d_\alpha&=2\eta_\alpha\dev\D_\alpha
 +\kappa_\alpha^{b}(\Div\bm v_\alpha)\I,
 &&\eta_\alpha,\kappa_\alpha^{b}\ge0,
 \label{eq:partial-visc}\\
 \bm m^d_p&=-\gamma\bm w,
 &&\gamma\ge0.
 \label{eq:drag-closure}
\end{align}
The conductivity tensor $\mathbf K_T$ may depend on conformation through an objective tensor function, allowing deformation-induced anisotropic heat conduction. More general heat/diffusion and momentum cross-couplings can be represented by entropy-invariant Onsager blocks but are not included here.

\section{Configurational population balance}
\label{sec:config-balance}

This section temporarily resolves polymer configuration by the connector distribution $\varphi$ rather than by the finite-dimensional state $\C$. It supplies a microstructural population balance and a constitutive connector-space flux law from which the conformation model is obtained by moments; it is not an additional configurational state superposed on $\C$ in the same free energy. The two levels are compared explicitly in \Cref{prop:gaussian-dissipation-match}.

Let $\bm r\in\R^d$ denote the dumbbell connector vector. Let $n(t,\bm x)$ be the number density of polymer molecules and let $\varphi(t,\bm x,\bm r)$ be the normalized local connector distribution,
\begin{equation}
 \varphi\ge0,
 \qquad
 \int_{\R^d}\varphi\,\dd\bm r=1.
 \label{eq:phi-normalized}
\end{equation}
For a prototype polymer molecule of mass $m_p^{\mathrm{mol}}$,
\begin{equation}
 \rho_p=m_p^{\mathrm{mol}}n.
 \label{eq:rhop-n}
\end{equation}
Hence \eqref{eq:partial-mass} implies
\begin{equation}
 \Dp n+n\Div\bm v_p=0,
 \qquad
 \Dp=\partial_t+\bm v_p\cdot\nabla.
 \label{eq:n-balance}
\end{equation}

Polymer centers are transported with $\bm v_p$, whereas the connector may be deformed by a different local velocity gradient $\Lgrad_d$. The connector-space balance is
\begin{equation}
 \Dp\varphi
 +\nabla_{\bm r}\cdot\left(\varphi\Lgrad_d\bm r+\Jr\right)=0.
 \label{eq:phi-balance}
\end{equation}
The connector flux $\Jr$ is nonconvective in configuration space and is a constitutive quantity explicitly anchored in \eqref{eq:phi-balance}; it will be one of the Principle-V entropy factors. Equivalently, for the chain population density $\Psi=n\varphi$ one has the conservative balance in the extended $(\bm x,\bm r)$ space
\begin{equation}
 \partial_t\Psi+\Div_{\bm x}(\Psi\bm v_p)
 +\nabla_{\bm r}\cdot\left(\Psi\Lgrad_d\bm r+n\Jr\right)=0.
 \label{eq:Psi-balance}
\end{equation}
The configurational evolution is therefore a population balance rather than an additional entropy law. It is a local affine-dumbbell modeling assumption: every connector subpopulation has the same center velocity $\bm v_p$, and no independent conformation-dependent spatial flux is retained. Macroscopic relative polymer transport remains present through $\bm v_p-\bm v_s$. Finite-size nonlocal bead sampling and center--connector correlations are outside this local ansatz; the more detailed kinetic construction in~\cite{Degond2010} need not reduce to this form without additional approximations.

Consider the two-fluid family
\begin{equation}
 \Lgrad_d=\alpha\Lgrad_p+(1-\alpha)\Lgrad_s,
 \qquad
 \Lgrad_\alpha=\nabla\bm v_\alpha,
 \qquad 0\le\alpha\le1,
 \label{eq:Ld-alpha}
\end{equation}
with constant $\alpha$. The cases $\alpha=1$ and $\alpha=0$ mean that conformation is deformed by the polymer and solvent gradients, respectively. Degond--Lozinski--Owens motivate the latter choice for a dilute solution by identifying the fluid surrounding the beads with the solvent~\cite{Degond2010}. Zhou--Doi show explicitly that polymer, solvent and weighted gradients lead to different two-fluid dynamics in inhomogeneous solutions~\cite{ZhouDoi2022}.

Equation \eqref{eq:phi-balance} is specified before any tensor objective rate is selected. More specifically, the material transport is already fixed by the population kinematics: polymer centers are carried by $\bm v_p$, so moments of the balance inherit $\Dp$. The deformation part is inherited from $\Lgrad_d$. The conformation rate is therefore derived from the microstructural transport law; frame indifference is checked afterward as a covariance requirement.

\subsection{Observer covariance of the two-velocity transport}
\label{sec:objectivity-twofluid}
The use of two velocities does not compromise objectivity. Consider a superposed rigid observer change
\begin{equation}
 \bm x^*=\mathbf Q(t)\bm x+\bm c(t),
 \qquad
 \mathbf Q\mathbf Q^{\sf T}=\I,
 \label{eq:observer-change}
\end{equation}
with spin $\mathbf\Omega=\dot{\mathbf Q}\mathbf Q^{\sf T}=-\mathbf\Omega^{\sf T}$. The constituent velocity gradients transform as
\begin{equation}
 \Lgrad_\alpha^*=\mathbf Q\Lgrad_\alpha\mathbf Q^{\sf T}+\mathbf\Omega,
 \qquad \alpha\in\{s,p\}.
 \label{eq:L-transform}
\end{equation}
Since the coefficients in \eqref{eq:Ld-alpha} sum to one,
\begin{equation}
 \Lgrad_d^*=\mathbf Q\Lgrad_d\mathbf Q^{\sf T}+\mathbf\Omega.
 \label{eq:Ld-transform}
\end{equation}
For an objective spatial conformation tensor $\C^*=\mathbf Q\C\mathbf Q^{\sf T}$, its polymer material derivative obeys
\begin{equation}
 \Dp^*\C^*=\mathbf Q(\Dp\C)\mathbf Q^{\sf T}
 +\mathbf\Omega\C^*-\C^*\mathbf\Omega.
 \label{eq:DpC-transform}
\end{equation}
Consequently,
\begin{equation}
 \Dp^*\C^*-\Lgrad_d^*\C^*-\C^*(\Lgrad_d^*)^{\sf T}
 =\mathbf Q\left(\Dp\C-\Lgrad_d\C-\C\Lgrad_d^{\sf T}\right)\mathbf Q^{\sf T}.
 \label{eq:twofluid-objectivity}
\end{equation}
Thus the two-velocity upper-convected operator derived below is objective. The same calculation applies to the Gordon--Schowalter family in \Cref{sec:objective-rates}: its spin transforms inhomogeneously and its rate tensor homogeneously. Objectivity therefore verifies admissible observer covariance but does not select either the material transport velocity or the deformation velocity; those choices come from the population kinematics and the underlying microstructural model.

\section{Configurational free energy, Kramers stress and connector-space dissipation}
\label{sec:free-energy}

The basic ingredients of this section---the configurational free-energy functional, Kramers stress and Fokker--Planck/connector-space dissipation---are standard in dumbbell kinetic theory; see, e.g., Bird et al.~\cite{Bird1987} and \"Ottinger~\cite{Ottinger1996}. They are repeated here because the later Class-II entropy exploitation requires their mechanical-power and sign conventions in the present two-velocity notation. The relative-entropy rewriting in \Cref{sec:config-relative} is likewise a standard free-energy/relative-entropy identity; its role here is to expose the signed mechanical exchange term that will be paired with the constituent stress power, not to introduce a second entropy law.

Let $U(\bm r,T)$ be the spring free-energy potential of one dumbbell. The local configurational Helmholtz free energy per polymer molecule is taken as
\begin{equation}
 a_c[\varphi,T]
 =\int_{\R^d}\varphi U\,\dd\bm r
 +k_B T\int_{\R^d}\varphi\ln\frac{\varphi}{\varphi_*}\,\dd\bm r,
 \label{eq:config-free-energy}
\end{equation}
where $\varphi_*>0$ is a fixed constant with units of connector-space density, used only to render the logarithm dimensionless; it is not a normalized distribution on $\R^d$. All connector integrations below assume finite displayed moments, entropy and dissipation integrals, and vanishing boundary terms. Classical integrations involving $\ln\varphi$ are first understood for smooth positive densities with sufficient decay; extension to less regular densities requires an appropriate approximation argument. The free-energy density per physical volume is
\begin{equation}
 f_c=na_c.
 \label{eq:fc}
\end{equation}
Define the connector chemical potential up to an $\bm r$-independent additive term by
\begin{equation}
 \mu_c
 =U+k_BT\ln\frac{\varphi}{\varphi_*}.
 \label{eq:muconfig}
\end{equation}
Its gradient is unique,
\begin{equation}
 \nabla_{\bm r}\mu_c
 =\nabla_{\bm r}U+k_BT\nabla_{\bm r}\ln\varphi.
 \label{eq:muconfig-grad}
\end{equation}

Assume central spring forces, $U=U(|\bm r|,T)$, so that the Kramers tensor is symmetric. Define
\begin{equation}
 \taup
 =n\int_{\R^d}\varphi\,\nabla_{\bm r}U\otimes\bm r\,\dd\bm r
 -nk_BT\I.
 \label{eq:kramers-stress}
\end{equation}
Noncentral spring models are not treated here: simply symmetrizing the Kramers tensor would generally discard rotational power and requires a separately specified angular-momentum and torque balance.

\begin{proposition}[Configurational free-energy identity]
\label{prop:free-energy-identity}
Let \eqref{eq:n-balance} and \eqref{eq:phi-balance} hold, and assume sufficient decay in connector space. Define the configurational entropy density by
\begin{equation}
 s_c=-n\left.\frac{\partial a_c}{\partial T}\right|_{\varphi}.
 \label{eq:sc-def}
\end{equation}
Then
\begin{equation}
 \partial_t f_c+\Div(f_c\bm v_p)+s_c\Dp T
 =\taup:\Lgrad_d
 +n\int_{\R^d}\Jr\cdot\nabla_{\bm r}\mu_c\,\dd\bm r.
 \label{eq:fc-balance-general}
\end{equation}
\end{proposition}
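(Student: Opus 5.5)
The plan is to work entirely at the level of the connector distribution and to reduce the left-hand side of \eqref{eq:fc-balance-general} to a single connector-space integral of $\mu_c\,\Dp\varphi$. That integral can then be evaluated with the population balance \eqref{eq:phi-balance} and one integration by parts in $\bm r$.

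\textbf{Step 1: convective structure.} I would first use the number-density balance \eqref{eq:n-balance} to strip the convective structure. Since $f_c=na_c$,
\[
 \partial_t f_c+\Div(f_c\bm v_p)=a_c\big(\Dp n+n\Div\bm v_p\big)+n\Dp a_c=n\Dp a_c .
\]
It therefore suffices to show
\[
 n\Dp a_c+s_c\Dp T=\taup:\Lgrad_d+n\int\Jr\cdot\nabla_{\bm r}\mu_c\,\dd\bm r .
\]

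\textbf{Step 2: chain rule for $a_c$.} Next I would differentiate $a_c[\varphi,T]$ along the polymer trajectories. Here $a_c$ depends on $(t,\bm x)$ through $\varphi(\cdot,\bm r)$ and through $T$, both explicitly in $U(\bm r,T)$ and in the prefactor $k_BT$. Differentiating under the integral gives
\[
 \Dp a_c=\int\Big(U+k_BT\ln\tfrac{\varphi}{\varphi_*}+k_BT\Big)\Dp\varphi\,\dd\bm r+\left.\frac{\partial a_c}{\partial T}\right|_{\varphi}\Dp T .
\]
The constant $k_BT$ inside the bracket drops out, because normalization \eqref{eq:phi-normalized} gives $\int\Dp\varphi\,\dd\bm r=0$. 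The last term is exactly $-s_c\Dp T/n$ by \eqref{eq:sc-def}. This step also explains why $\mu_c$ need only be defined up to an $\bm r$-independent constant. Hence
\[
 n\Dp a_c+s_c\Dp T=n\int\mu_c\,\Dp\varphi\,\dd\bm r .
\]

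\textbf{Step 3: insert the population balance.} I would substitute \eqref{eq:phi-balance} and integrate by parts in $\bm r$, with boundary terms vanishing by the decay assumption. This yields
\[
 n\int(\varphi\Lgrad_d\bm r+\Jr)\cdot\nabla_{\bm r}\mu_c\,\dd\bm r .
\]
The $\Jr$-part is already the dissipative term on the right of \eqref{eq:fc-balance-general}. For the deformation part I would use \eqref{eq:muconfig-grad} and treat the two pieces separately.
\begin{itemize}
\item The spring piece is $(\Lgrad_d\bm r)\cdot\nabla_{\bm r}U=(\nabla_{\bm r}U\otimes\bm r):\Lgrad_d$. Under the convention $\mathbf A:\mathbf B=\tr(\mathbf A^{\sf T}\mathbf B)$, it produces the first term of the Kramers tensor \eqref{eq:kramers-stress} contracted with $\Lgrad_d$.
\item The entropic piece is $k_BT\int(\Lgrad_d\bm r)\cdot\nabla_{\bm r}\varphi\,\dd\bm r$. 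A second integration by parts, together with $\nabla_{\bm r}\cdot(\Lgrad_d\bm r)=\tr\Lgrad_d$ (since $\Lgrad_d$ is independent of $\bm r$) and normalization, turns it into $-k_BT\,\I:\Lgrad_d$.
\end{itemize}
Multiplying by $n$, the two pieces combine to $\taup:\Lgrad_d$, which completes the identity.

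\textbf{Main obstacle.} The delicate step is Step 2, and specifically its justification rather than the algebra. Differentiating $\int\varphi\ln\varphi$ under the integral sign and discarding boundary terms in two integrations by parts requires positivity, smoothness and decay of $\varphi$. I would state these as the standing assumptions already announced before \eqref{eq:config-free-energy}, and not prove them. One further point needs care: $s_c$ must include the $\partial_T U$ contribution. This holds because \eqref{eq:sc-def} is taken at fixed $\varphi$, so the $T$-dependence of the spring potential is absorbed into $s_c\Dp T$ and does not appear in the power terms.
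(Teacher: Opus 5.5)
Your proposal is correct and follows the paper's proof step for step. You reduce the left side to $n\Dp a_c$ via the number-density balance, apply the chain rule at fixed $\varphi$ with the normalization constant dropping out, insert the population balance with one integration by parts in $\bm r$, and identify the affine term with $\taup:\Lgrad_d$ through a second integration by parts in the entropic piece. Your extra bookkeeping, namely the explicit $k_BT$ constant and the $\partial_T U$ contribution absorbed into $s_c$, only spells out what the paper's condensed proof leaves implicit.
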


\begin{proof}
Because $f_c=na_c$ and \eqref{eq:n-balance} holds,
\[
 \partial_t f_c+\Div(f_c\bm v_p)=n\Dp a_c.
\]
The chain rule gives
\[
 n\Dp a_c=-s_c\Dp T+n\int\mu_c\Dp\varphi\,\dd\bm r,
\]
where the $\bm r$-independent normalization contribution drops out because $\int\Dp\varphi\,\dd\bm r=0$. Insert \eqref{eq:phi-balance}, integrate by parts in $\bm r$, and obtain
\[
 n\int\mu_c\Dp\varphi\,\dd\bm r
 =n\int\varphi(\Lgrad_d\bm r)\cdot\nabla_{\bm r}\mu_c\,\dd\bm r
 +n\int\Jr\cdot\nabla_{\bm r}\mu_c\,\dd\bm r.
\]
The first term equals $\taup:\Lgrad_d$. Indeed, the energetic part gives the first term of \eqref{eq:kramers-stress}, while integration by parts in the entropic part yields $-nk_BT\tr\Lgrad_d$.
\end{proof}

A connector-space dissipative closure is
\begin{equation}
 \Jr=-\mathbf M_r\,\varphi\,\nabla_{\bm r}\mu_c,
 \label{eq:Jr-closure}
\end{equation}
where the symmetric part of $\mathbf M_r$ is positive semidefinite (and, in the diagonal closures used below, $\mathbf M_r=\mathbf M_r^{\sf T}\ge0$). Define
\begin{equation}
 \zeta_r
 \eqdef-\frac{n}{T}\int\Jr\cdot\nabla_{\bm r}\mu_c\,\dd\bm r
 =\frac{n}{T}\int\varphi\,
 (\nabla_{\bm r}\mu_c)\cdot\mathbf M_r(\nabla_{\bm r}\mu_c)\,\dd\bm r
 \ge0.
 \label{eq:zetar}
\end{equation}
Then \eqref{eq:fc-balance-general} becomes
\begin{equation}
 \partial_t f_c+\Div(f_c\bm v_p)+s_c\Dp T
 =\taup:\Lgrad_d-T\zeta_r.
 \label{eq:fc-balance}
\end{equation}

The term $\taup:\Lgrad_d$ is the reversible mechanical power transferred to the polymer configuration by the deformation field. The connector relaxation \eqref{eq:zetar} is dissipative and is balance-anchored by the flux $\Jr$ in \eqref{eq:phi-balance}.

\subsection{Reduced configurational relative-free-energy identity}
\label{sec:config-relative}

Define the instantaneous Gibbs equilibrium distribution
\begin{equation}
 \varphi_{\mathrm{eq}}(\bm r;T)
 =Z(T)^{-1}\exp\left(-\frac{U(\bm r,T)}{k_BT}\right)
 \label{eq:phi-eq}
\end{equation}
with $0<Z(T)<\infty$ on the temperature range under consideration. This normalizability condition is part of the admissible spring-potential assumptions. Define the nonnegative relative-entropy functional
\begin{equation}
 \mathcal H_c
 =nk_B\int\varphi\ln\frac{\varphi}{\varphi_{\mathrm{eq}}}\,\dd\bm r\ge0.
 \label{eq:relative-H}
\end{equation}
Let $a_c^{\rm eq}(T)$ denote \eqref{eq:config-free-energy} evaluated at $\varphi_{\rm eq}$ and $f_c^{\rm eq}=na_c^{\rm eq}$. Direct substitution of \eqref{eq:phi-eq} gives the exact algebraic identity
\begin{equation}
 T\mathcal H_c=f_c-f_c^{\rm eq}.
 \label{eq:H-freeenergy}
\end{equation}
It is useful to introduce the signed relative-free-energy deficit
\begin{equation}
 \eta_c\eqdef-\mathcal H_c
 =-\frac{f_c-f_c^{\rm eq}}{T}\le0.
 \label{eq:relative-entropy-deficit}
\end{equation}
This is the negative relative Helmholtz free-energy excess divided by temperature. Under isothermal, undeformed relaxation, the per-molecule quantity $\mathcal H_c/n$ is nonincreasing along polymer trajectories. The density $\mathcal H_c$ need not decrease pointwise when $n$ changes; its integral decreases on a polymer-material volume, or on a domain with vanishing net relative-free-energy transport. Thus $-\eta_c$, rather than $\eta_c$, is the nonnegative relaxation functional. Outside that setting, $\eta_c$ is distinct from the physical configurational entropy $s_c=-n\partial_Ta_c|_\varphi$.

For local test processes with $\Dp T=0$, \eqref{eq:fc-balance} implies
\begin{equation}
 \partial_t\eta_c+\Div(\eta_c\bm v_p)
 =-\frac{1}{T}\taup:\Lgrad_d+\zeta_r.
 \label{eq:partial-config-entropy}
\end{equation}
Equation~\eqref{eq:partial-config-entropy} identifies the signed reversible mechanical contribution $-\taup:\Lgrad_d/T$ to an isothermal relative-free-energy balance. It is not a balance for the physical polymer entropy and does not define a separate constituent entropy inequality.

\section{Cancellation of reversible configurational power in the total entropy balance}
\label{sec:cancellation}

The cancellation does not require \eqref{eq:partial-config-entropy}; it occurs directly in the total entropy exploitation. Proposition~\ref{prop:convective-cancellation} has already removed the conformation-gradient part of polymer diffusion. At conformation level write
\begin{equation}
 \Dp\C=\Lgrad_d\C+\C\Lgrad_d^{\sf T}+\mathbf R_C,
 \label{eq:C-general-kinematics-early}
\end{equation}
where $\mathbf R_C$ contains relaxation or other non-affine source terms. Here ``isotropic'' means invariance under $\C\mapsto\mathbf Q\C\mathbf Q^{\sf T}$ for every orthogonal $\mathbf Q$. On the smooth positive-definite state domain this implies coaxiality of the thermodynamic response with $\C$: for any skew tensor $\mathbf A$, differentiate the invariance relation along $\mathbf Q_\varepsilon=\exp(\varepsilon\mathbf A)$ at $\varepsilon=0$ to obtain
\[
 0=\Zc:(\mathbf A\C-\C\mathbf A)
   =\tr[(\C\Zc-\Zc\C)\mathbf A].
\]
Since $\C\Zc-\Zc\C$ is skew, choosing $\mathbf A=\C\Zc-\Zc\C$ gives $0=-\|\C\Zc-\Zc\C\|^2$, and hence $\C\Zc=\Zc\C$. For such an isotropic free energy $f(\rho_s,\rho_p,T,\C)$, with $\D_d=\sym\Lgrad_d$, define
\begin{equation}
 \taup=\C\Zc+\Zc\C.
 \label{eq:general-stress-Z-early}
\end{equation}
The commutation just established is essential in the next step: $\Zc\C$ is symmetric and
\[
 \Zc:(\Lgrad_d\C+\C\Lgrad_d^{\sf T})
 =2(\Zc\C):\Lgrad_d
 =2(\Zc\C):\D_d
 =(\C\Zc+\Zc\C):\D_d
 =\taup:\D_d.
\]
Consequently,
\begin{equation}
 -\Zc:\Dp\C
 =-\taup:\D_d-\Zc:\mathbf R_C.
 \label{eq:direct-config-power}
\end{equation}
The first term is sign-indefinite under the local test-process class and cannot by itself be assigned to a dissipative mechanism. Its complementary momentum-balance power will identify it as reversible exchange in the cancellation theorem below; the second term is the candidate configurational dissipation.

For constant $\alpha$ in \eqref{eq:Ld-alpha}, use the reversible elastic stress shares
\begin{equation}
 \T_p^{\rm el}=\alpha\taup,
 \qquad
 \T_s^{\rm el}=(1-\alpha)\taup,
 \label{eq:stress-division}
\end{equation}
so that
\begin{equation}
 \T_p^{\rm el}:\D_p+\T_s^{\rm el}:\D_s
 =\taup:\D_d.
 \label{eq:stress-power-identity}
\end{equation}
In the displayed gauge the elastic interaction force is zero by \eqref{eq:elastic-interaction-gauge}. A different stress/interaction representation may be used only together with the associated flux transformation; the local entropy identity is not to be refactorized while the balance fluxes are held fixed.

\begin{theorem}[Class-II reversible configurational-power cancellation]
\label{thm:cancellation}
Assume the binary balance architecture, the dilute isotropic conformation free energy \eqref{eq:dilute-f-split}, the conformation transport \eqref{eq:C-general-kinematics-early}, the ordinary reversible gauge \eqref{eq:explicit-rev-stress}--\eqref{eq:explicit-rev-interaction}, the elastic-interaction gauge \eqref{eq:elastic-interaction-gauge}, and the matching elastic-stress decomposition \eqref{eq:stress-division}. Then all sign-indefinite configurational deformation power cancels in the total entropy identity. If the conformation-level source obeys
\begin{equation}
 T\zeta_C\eqdef-\Zc:\mathbf R_C\ge0,
 \label{eq:zetaC-assumption}
\end{equation}
then the total entropy production is
\begin{equation}
 \zeta=\zeta_{\mathrm{mix}}+\zeta_C,
 \label{eq:total-zeta-final}
\end{equation}
with $\zeta_{\rm mix}$ given by \eqref{eq:classII-entropy}. Equation~\eqref{eq:total-zeta-final} is a conformation-level statement. A distribution-level entropy exploitation uses the functional state $\varphi$ rather than $\C$; for the Hookean Gaussian invariant class the resulting connector production $\zeta_r$ coincides exactly with $\zeta_C$, as shown in \Cref{prop:gaussian-dissipation-match}.
\end{theorem}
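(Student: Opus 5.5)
The plan is to start from the reduced entropy production identity of Proposition~\ref{prop:exact-entropy} and to substitute the full decompositions \eqref{eq:full-reversible-split} together with the gradient split \eqref{eq:mu-circle}. The right-hand side of \eqref{eq:exact-entropy-identity} is linear in $\T_\alpha$, $\bm m_p$ and the flux terms. It therefore separates additively into three groups: an ordinary reversible group, an elastic/configurational group, and a dissipative group. The whole argument is algebraic bookkeeping; no constitutive sign condition enters until the final step.

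First, apply Proposition~\ref{prop:convective-cancellation} to the two terms $-\Zc:\Db\C$ and $-\bm J_p\cdot\rho_p^{-1}\Zc:\nabla\C$. These arise from splitting $\nabla\mu_p=\nabla^\circ\mu_p+\rho_p^{-1}\Zc:\nabla\C$, and together they combine into $-\Zc:\Dp\C$. The chemical-potential terms that remain are $-\sum_\alpha\bm J_\alpha\cdot\bm X^\circ_\alpha$. Combined with $\sum_\alpha\T_{\alpha,0}:\D_\alpha+p\Div\bm v-\bm m_{p,0}\cdot\bm w$, they vanish identically by \eqref{eq:ordinary-reversible-cancellation}. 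For the dilute free energy \eqref{eq:dilute-f-split}, $\Zc=(\rho_p/m_p^{\rm mol})\partial_{\C}a_C$ is the full $\C$-derivative, and the pressure $p$ defined by \eqref{eq:u-p-def} involves only $f_0$. I would verify the second point explicitly, because it confirms that the ordinary gauge absorbs no configurational part.

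Second, treat the configurational part. By isotropy, the commutation $\C\Zc=\Zc\C$ established before \eqref{eq:general-stress-Z-early} holds, and with the transport law \eqref{eq:C-general-kinematics-early} this gives \eqref{eq:direct-config-power}: $-\Zc:\Dp\C=-\taup:\D_d-\Zc:\mathbf R_C$. The elastic stress power, in the gauge $\bm m_p^{\rm el}=\bm 0$ of \eqref{eq:elastic-interaction-gauge}, equals $\T_p^{\rm el}:\D_p+\T_s^{\rm el}:\D_s$. Under \eqref{eq:stress-division} with constant $\alpha$ in \eqref{eq:Ld-alpha}, linearity gives $\D_d=\alpha\D_p+(1-\alpha)\D_s$, so \eqref{eq:stress-power-identity} shows that this power equals $\taup:\D_d$. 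It therefore cancels $-\taup:\D_d$ exactly, which is the claimed cancellation of all sign-indefinite deformation power.

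What survives is $T\zeta=\sum_\alpha\T^d_\alpha:\D_\alpha-\bm m^d_p\cdot\bm w-T^{-1}\bm q\cdot\nabla T-\Zc:\mathbf R_C$. Dividing by $T$ and comparing with \eqref{eq:classII-entropy} and \eqref{eq:zetaC-assumption} gives \eqref{eq:total-zeta-final}, with $\zeta_C\ge0$ under the stated assumption. The final remark in the statement, about coincidence with $\zeta_r$, is deferred to \Cref{prop:gaussian-dissipation-match} and is not proved here.

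The main obstacle is the consistency of the splitting, not any single computation. I need to ensure that the term $\Zc:\nabla\C$ removed from $\nabla\mu_p$ is exactly the one in \eqref{eq:mu-circle}, so that the forces $\bm X^\circ_\alpha$ in \eqref{eq:explicit-rev-interaction} use only $\nabla^\circ$ and nothing is counted twice. I also need to check that the chosen $\taup=\C\Zc+\Zc\C$ is symmetric, so that $\taup:\Lgrad_d=\taup:\D_d$ and the partial stresses stay symmetric as required by the balance architecture.
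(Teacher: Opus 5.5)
Your proposal is correct and follows the paper's proof step by step. Proposition~\ref{prop:convective-cancellation} reduces the conformation terms to $-\Zc:\Dp\C$; the isotropy split \eqref{eq:direct-config-power} and the matched elastic stress power \eqref{eq:stress-power-identity} then cancel $\mp\taup:\D_d$; and \eqref{eq:ordinary-reversible-cancellation} removes the ordinary thermo-chemical terms, leaving exactly $\zeta_{\rm mix}+\zeta_C$. Your side check that $p$ involves only $f_0$ is correct but not needed, since \eqref{eq:ordinary-reversible-cancellation} is an algebraic identity for any scalar $p$.
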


\begin{proof}
The conformation-dependent polymer chemical-potential transport first cancels the barycentric/polymer convection difference by Proposition~\ref{prop:convective-cancellation}. The remaining conformation term in $T\zeta$ is therefore $-\Zc:\Dp\C$. Equation \eqref{eq:direct-config-power} splits it into $-\taup:\D_d-\Zc:\mathbf R_C$. The constituent momentum powers contribute $+\taup:\D_d$ by \eqref{eq:stress-power-identity}; hence the sign-indefinite reversible terms cancel before any mechanism-wise positivity requirement is imposed. The ordinary thermo-chemical terms cancel by \eqref{eq:ordinary-reversible-cancellation}. What remains is exactly \eqref{eq:classII-entropy} plus \eqref{eq:zetaC-assumption}.
\end{proof}

\begin{remark}[The dilute ``surrounding fluid = solvent'' case]
For $\alpha=0$, $\Lgrad_d=\Lgrad_s$ and, in the gauge \eqref{eq:stress-division}, the elastic stress enters the solvent partial momentum equation. Polymer centers are transported with $\bm v_p$ while the connector is deformed by the solvent gradient. This allocation is gauge dependent: Proposition~\ref{prop:gauge-transform} permits a coordinated redistribution of reversible stress and interaction force together with the associated energy and entropy fluxes. Degond--Lozinski--Owens derive bead drag from the solvent velocity in dilute non-homogeneous dumbbell models~\cite{Degond2010}.
\end{remark}

\section{Balance-anchored constitutive restrictions after reversible extraction}

After the reversible cancellation, the selected mechanisms in \eqref{eq:total-zeta-final} are all balance-anchored at the level at which their constitutive restrictions are imposed:
\begin{center}
\small
\begin{tabularx}{\textwidth}{@{}l>{\raggedright\arraybackslash}X>{\raggedright\arraybackslash}X@{}}
\toprule
Mechanism & anchored constitutive factor & balance origin\\
\midrule
heat conduction & $\bm q$ & total energy flux\\
solvent/polymer viscous deformation & $\T_s^d,\T_p^d$ & partial momentum fluxes\\
relative translation & $\bm m_p^d$ & partial momentum production\\
conformation relaxation & $\mathbf R_C$ & conformation-balance production\\
kinetic connector relaxation & $\Jr$ & connector-space population flux\\
\bottomrule
\end{tabularx}
\end{center}
At conformation level, the diagonal closures \eqref{eq:fourier}--\eqref{eq:drag-closure} together with \eqref{eq:zetaC-assumption} yield
\begin{align}
 \zeta={}&\frac{1}{T^2}(\nabla T)\cdot\mathbf K_T(\nabla T)
 +\sum_{\alpha=s,p}\frac{1}{T}
 \left(2\eta_\alpha|\dev\D_\alpha|^2
 +\kappa_\alpha^b(\Div\bm v_\alpha)^2\right)
 +\frac{\gamma}{T}|\bm w|^2
 +\zeta_C\ge0.
 \label{eq:fully-positive-zeta}
\end{align}
For a kinetic model formulated directly in $\varphi$, the last term is replaced by the connector-space production \eqref{eq:zetar} after carrying out the entropy exploitation at that functional level. No constituent-wise entropy inequality has been used. At conformation level, however, the sign restriction $-\Zc:\mathbf R_C\ge0$ is an entropy-admissibility condition, not by itself a constitutive closure for $\mathbf R_C$. A closed conformation model additionally requires an explicit objective source law for $\mathbf R_C$; the Hookean specialization below supplies such a law.

\section{Moment reduction: a two-velocity conformation tensor}
\label{sec:moment}

Define the absolute conformation tensor
\begin{equation}
 \C=\int_{\R^d}\varphi\,\bm r\otimes\bm r\,\dd\bm r.
 \label{eq:C-def}
\end{equation}
Its statistical meaning is transparent from the associated quadratic form. For every unit vector $\bm n$,
\begin{equation}
 \bm n\cdot\C\bm n
 =\int_{\R^d}\varphi\,(\bm n\cdot\bm r)^2\,\dd\bm r,
 \qquad
 \tr\C=\int_{\R^d}\varphi\,|\bm r|^2\,\dd\bm r.
 \label{eq:C-directional-meaning}
\end{equation}
Hence $\sqrt{\bm n\cdot\C\bm n}$ is the root-mean-square connector component in the direction $\bm n$, while $\tr\C$ is the mean-square connector length. If $\C\bm e_i=c_i\bm e_i$, then $\sqrt{c_i}$ is the root-mean-square connector component along the principal direction $\bm e_i$; by the Rayleigh principle, an eigenvector belonging to the largest eigenvalue is a direction of maximal mean-square projection of the connector. If the connector distribution is apolar, so that $\int\varphi\bm r\,\dd\bm r=0$, then $\C$ also coincides with the covariance tensor of $\bm r$. A fuller geometric discussion of this interpretation, including the corresponding orientation surface, is given in~\cite{BotheNiethammerPilzBrenn2022}; see also the standard kinetic-theory interpretation in~\cite{Bird1987}.

Equation~\eqref{eq:C-directional-meaning} also shows directly that this second moment is positive semidefinite. Whenever the conformation-level state space $\C>0$, $\log\det\C$ or $\C^{-1}$ is used below, the kinetic description is correspondingly restricted to nondegenerate connector distributions whose second moment is positive definite (equivalently, the distribution is not supported on a proper linear subspace).
Taking the second moment of \eqref{eq:phi-balance} yields the exact identity
\begin{equation}
 \Dp\C-\Lgrad_d\C-\C\Lgrad_d^{\sf T}
 =\int_{\R^d}\left(\Jr\otimes\bm r+\bm r\otimes\Jr\right)\dd\bm r.
 \label{eq:C-general}
\end{equation}
Define the \emph{two-velocity upper-convected operator}
\begin{equation}
 \overset{\nabla(p|d)}{\C}
 \eqdef\Dp\C-\Lgrad_d\C-\C\Lgrad_d^{\sf T}.
 \label{eq:two-velocity-upper}
\end{equation}
It follows from the population kinematics: the polymer material derivative comes from center transport with $\bm v_p$, and the deformation terms come from $\Lgrad_d$. Frame indifference verifies covariance of this operator but does not by itself select it.

\subsection{Hookean dumbbells}

The Hookean dumbbell specialization used here is classical in kinetic polymer theory~\cite{Bird1987,Ottinger1996}. Take
\begin{equation}
 U(\bm r,T)=\frac{H(T)}{2}|\bm r|^2,
 \qquad H(T)>0,
 \label{eq:hookean-U}
\end{equation}
and isotropic connector mobility
\begin{equation}
 \mathbf M_r=\frac{2}{\zeta_b(T)}\I,
 \label{eq:hookean-mobility}
\end{equation}
where $\zeta_b(T)>0$ is the bead friction coefficient. Then
\begin{equation}
 \Jr=-\frac{2H}{\zeta_b}\varphi\bm r
 -\frac{2k_BT}{\zeta_b}\nabla_{\bm r}\varphi.
 \label{eq:hookean-Jr}
\end{equation}
Substitution into \eqref{eq:C-general} gives
\begin{equation}
 \overset{\nabla(p|d)}{\C}
 =-\frac{4H}{\zeta_b}\C+\frac{4k_BT}{\zeta_b}\I.
 \label{eq:C-hookean-raw}
\end{equation}
Define
\begin{equation}
 \lambda(T)=\frac{\zeta_b(T)}{4H(T)},
 \qquad
 \C_{\rm eq}(T)=\frac{k_BT}{H(T)}\I.
 \label{eq:lambda-Ceq}
\end{equation}
Thus, in the sense of \eqref{eq:C-directional-meaning}, $\sqrt{k_BT/H}$ is the equilibrium root-mean-square connector component in every direction.
Then
\begin{equation}
 \lambda\overset{\nabla(p|d)}{\C}+\C=\C_{\rm eq}.
 \label{eq:C-hookean}
\end{equation}
The Kramers stress is
\begin{equation}
 \taup=nH\C-nk_BT\I=nH(\C-\C_{\rm eq}).
 \label{eq:hookean-tau}
\end{equation}

\begin{proposition}[Preservation of positive definiteness]
\label{prop:SPD-preservation}
Along a smooth polymer trajectory, let $\lambda>0$ and $c=k_BT/H>0$ be continuous and let $\Lgrad_d$ be locally integrable. Then the Hookean equation \eqref{eq:C-hookean} preserves $\C>0$ from positive-definite initial data.
\end{proposition}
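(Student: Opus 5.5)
Along a fixed polymer trajectory the Hookean equation \eqref{eq:C-hookean} is a linear matrix ODE in time,
\[
 \dot\C=\Lgrad_d\C+\C\Lgrad_d^{\sf T}-\lambda^{-1}\C+\lambda^{-1}c\,\I,
 \qquad c=k_BT/H>0,
\]
with $\dot{}=\Dp$ and coefficients $\Lgrad_d(t)$, $\lambda^{-1}(t)$ evaluated along the trajectory. The coefficients are locally integrable, since $\lambda>0$ is continuous, so $\lambda^{-1}$ is continuous. Hence the equation has a unique absolutely continuous (Carath\'eodory) solution. The plan is to solve it explicitly by variation of constants and read off positive definiteness from the formula, instead of running a first-exit-time argument.

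First, let $\mathbf F(t)$ solve $\dot{\mathbf F}=\Lgrad_d\mathbf F$, $\mathbf F(t_0)=\I$. It is absolutely continuous. It is invertible for all $t$ because $\det\mathbf F=\exp\int_{t_0}^t\tr\Lgrad_d$ (Liouville's formula, valid for integrable coefficients). Set $g(t)=\exp\bigl(-\int_{t_0}^t\lambda^{-1}\bigr)>0$. Then define
\[
 \C(t)=g(t)\,\mathbf F(t)\C_0\mathbf F(t)^{\sf T}
 +\int_{t_0}^t \frac{g(t)}{g(\tau)}\,\frac{c(\tau)}{\lambda(\tau)}\,
 \mathbf F(t)\mathbf F(\tau)^{-1}\mathbf F(\tau)^{-\sf T}\mathbf F(t)^{\sf T}\,\dd\tau .
\]
Second, verify that this solves the equation. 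Differentiate using $\dot{\mathbf F}=\Lgrad_d\mathbf F$ and $\dot g=-\lambda^{-1}g$. The $\mathbf F$-derivatives produce $\Lgrad_d\C+\C\Lgrad_d^{\sf T}$, the $g$-derivatives produce $-\lambda^{-1}\C$, and the upper limit of the integral produces $\lambda^{-1}c\I$. By uniqueness, this is the solution.

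Third, positivity. The first term is congruent to $\C_0>0$ through the invertible $\mathbf F(t)$ and multiplied by $g>0$, so it is positive definite. The integrand is a nonnegative scalar times $\mathbf G\mathbf G^{\sf T}$ with $\mathbf G=\mathbf F(t)\mathbf F(\tau)^{-1}$ invertible, so the integral is positive semidefinite, and in fact positive definite for $t>t_0$. Hence $\bm n\cdot\C(t)\bm n>0$ for every $\bm n\neq\bm0$ and all $t$ in the interval. Symmetry is preserved because the equation maps symmetric data to symmetric data, and the formula is manifestly symmetric.

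The only step needing care is the low-regularity one. With $\Lgrad_d$ merely $L^1_{\rm loc}$, we need existence, uniqueness and invertibility of $\mathbf F$, and we need differentiation under the integral to hold almost everywhere. I would handle this by invoking Carath\'eodory theory for linear ODEs together with Liouville's formula, both of which require only integrable coefficients. If a reader prefers smooth data, the same formula can be derived for continuous $\Lgrad_d$ and extended by $L^1$ approximation, since solutions depend continuously on the coefficients.
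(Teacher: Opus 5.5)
Your proposal is correct and is essentially the paper's argument: the paper uses the fundamental matrix $\mathbf U(t,t_0)$ of $\mathbf B_d=\Lgrad_d-\I/(2\lambda)$, which is exactly your $g(t)^{1/2}\mathbf F(t)$. Its variation-of-constants formula therefore coincides with yours and gives the same conclusion: a positive-definite congruence term plus a positive-semidefinite integral. Your justification of the invertibility of $\mathbf F$ via Liouville's formula and of Carath\'eodory solvability for $L^1_{\rm loc}$ coefficients supplies details the paper leaves implicit.
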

\begin{proof}
Set $\mathbf B_d=\Lgrad_d-\I/(2\lambda)$ and let $\mathbf U(t,t_0)$ be its fundamental matrix. Along the trajectory,
\begin{equation}
 \C(t)=\mathbf U(t,t_0)\C(t_0)\mathbf U(t,t_0)^{\sf T}
 +\int_{t_0}^{t}\frac{c(\sigma)}{\lambda(\sigma)}
 \mathbf U(t,\sigma)\mathbf U(t,\sigma)^{\sf T}\,\dd\sigma.
 \label{eq:SPD-variation-constants}
\end{equation}
The first term is positive definite and the integral is positive semidefinite. This proves the assertion for every time for which the stated trajectory and coefficients exist.
\end{proof}

\subsection{Conformation free energy and stress identity}

For a Gaussian closure the relative conformation free energy can be written directly as a function of $\C$,
\begin{equation}
 f_C(n,T,\C)
 =\frac{nk_BT}{2}
 \left[
 \tr\left(\frac{H\C}{k_BT}\right)
 -\ln\det\left(\frac{H\C}{k_BT}\right)-d
 \right].
 \label{eq:C-free-energy}
\end{equation}
Its derivative is
\begin{equation}
 \frac{\partial f_C}{\partial\C}
 =\frac{nH}{2}\I-\frac{nk_BT}{2}\C^{-1}.
 \label{eq:C-free-energy-derivative}
\end{equation}
Hence
\begin{equation}
 \C\frac{\partial f_C}{\partial\C}
 +\frac{\partial f_C}{\partial\C}\C
 =nH\C-nk_BT\I=\taup.
 \label{eq:stress-freeenergy-identity}
\end{equation}
This is the conformation-level Kramers stress relation. Combined with the upper-convected transport, it yields the reversible stress power used in the entropy balance. The equilibrium configurational free energy, which is independent of $\C$, is understood to be absorbed into the background part $f_0$; \eqref{eq:C-free-energy} is the nonequilibrium excess.

\begin{proposition}[Match of kinetic and conformation dissipation for Hookean Gaussians]
\label{prop:gaussian-dissipation-match}
Let $\varphi$ be the centered Gaussian with covariance $\C$ and let the Hookean connector-flux law \eqref{eq:hookean-Jr} hold. Then the Gaussian class is invariant under the linear connector Fokker--Planck dynamics, and the conformation-level dissipation equals the connector-space dissipation:
\begin{equation}
 T\zeta_C=-\Zc:\mathbf R_C
 =T\zeta_r
 =\frac{nk_BT}{2\lambda}
 \tr\left(\mathbf A+\mathbf A^{-1}-2\I\right)\ge0,
 \qquad \mathbf A=\frac{H\C}{k_BT}.
 \label{eq:gaussian-dissipation-match}
\end{equation}
\end{proposition}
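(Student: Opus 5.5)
Three things have to be shown: the Gaussian class is invariant, $-\Zc:\mathbf R_C$ has the stated closed form, and $T\zeta_r$ equals the same expression. Throughout, $\Zc=\partial f_C/\partial\C$ is given by \eqref{eq:C-free-energy-derivative}, and \eqref{eq:C-hookean} shows that $\mathbf R_C=-(\C-\C_{\rm eq})/\lambda$ with $\C_{\rm eq}=(k_BT/H)\I$.

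\emph{Invariance.} I would insert the Gaussian ansatz
\[
\varphi=(2\pi)^{-d/2}(\det\C)^{-1/2}\exp\bigl(-\tfrac12\bm r\cdot\C^{-1}\bm r\bigr)
\]
into the connector balance \eqref{eq:phi-balance} with the Hookean flux \eqref{eq:hookean-Jr}. The resulting drift $(\Lgrad_d-\tfrac{2H}{\zeta_b}\I)\bm r$ is linear and the diffusion matrix $\tfrac{2k_BT}{\zeta_b}\I$ is constant, so this is an Ornstein--Uhlenbeck-type equation along polymer trajectories. For such equations, a Gaussian with zero mean stays Gaussian with zero mean. The cleanest check is to note that $\ln\varphi$ is quadratic in $\bm r$ and that every term of the equation, divided by $\varphi$, is a quadratic polynomial in $\bm r$. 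Matching the $\bm r\otimes\bm r$ and constant coefficients then reduces the PDE to the matrix ODE \eqref{eq:C-hookean-raw}, which is already implied by the second-moment identity \eqref{eq:C-general}. The constant term is consistent automatically, since it is the trace of the quadratic one (both encode $\Dp\ln\det\C$). By Proposition~\ref{prop:SPD-preservation} the covariance stays positive definite, so the ansatz remains well defined.

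\emph{Conformation side.} This is pure algebra:
\[
-\Zc:\mathbf R_C=\frac{1}{\lambda}\Bigl(\frac{nH}{2}\I-\frac{nk_BT}{2}\C^{-1}\Bigr):\Bigl(\C-\frac{k_BT}{H}\I\Bigr).
\]
Expanding gives $\tfrac{n}{2\lambda}\bigl(H\tr\C-d\,k_BT-d\,k_BT+\tfrac{(k_BT)^2}{H}\tr\C^{-1}\bigr)$. Writing this in terms of $\mathbf A=H\C/(k_BT)$ yields $\tfrac{nk_BT}{2\lambda}\tr(\mathbf A+\mathbf A^{-1}-2\I)$. Nonnegativity follows from the eigenvalues: for each eigenvalue $a>0$ of $\mathbf A$ one has $a+a^{-1}-2=(\sqrt a-1/\sqrt a)^2\ge0$.

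\emph{Kinetic side.} For the Gaussian, $\nabla_{\bm r}\ln\varphi=-\C^{-1}\bm r$, so
\[
\nabla_{\bm r}\mu_c=(H\I-k_BT\,\C^{-1})\bm r=:\mathbf G\bm r,
\]
with $\mathbf G$ symmetric. By \eqref{eq:zetar} with $\mathbf M_r=(2/\zeta_b)\I$,
\[
T\zeta_r=\frac{2n}{\zeta_b}\int\varphi\,|\mathbf G\bm r|^2\,\dd\bm r=\frac{2n}{\zeta_b}\tr(\mathbf G\C\mathbf G).
\]
Expanding gives $\tr(\mathbf G\C\mathbf G)=H^2\tr\C-2Hk_BT\,d+(k_BT)^2\tr\C^{-1}=Hk_BT\,\tr(\mathbf A+\mathbf A^{-1}-2\I)$. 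With $\lambda=\zeta_b/(4H)$ we have $2nH/\zeta_b=n/(2\lambda)$, which reproduces the same expression as the conformation side.

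\emph{Main obstacle.} The calculations are routine; the delicate point is reconciling normalizations. The kinetic dissipation uses $\mu_c$ with the full $U$ and the entropy term, whereas $f_C$ in \eqref{eq:C-free-energy} is the relative (excess) free energy. The subtracted equilibrium part is independent of $\C$, so it does not change $\Zc$, and the additive constant in $\mu_c$ does not change $\nabla_{\bm r}\mu_c$. I would state both facts explicitly so that the two sides are seen to refer to the same quantity.
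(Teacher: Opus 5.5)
Your proposal is correct and follows the same route as the paper's proof. On the conformation side you contract $\Zc=(nH/2)\I-(nk_BT/2)\C^{-1}$ directly with $\mathbf R_C=-\lambda^{-1}(\C-\C_{\rm eq})$, and on the kinetic side you insert $\nabla_{\bm r}\ln\varphi=-\C^{-1}\bm r$ into \eqref{eq:zetar} with $\mathbf M_r=2\I/\zeta_b$ and $\lambda=\zeta_b/(4H)$. Your coefficient-matching argument, which reduces the Fokker--Planck equation for the Gaussian ansatz to \eqref{eq:C-hookean-raw}, usefully supplies the invariance step that the paper only asserts; strictly, concluding invariance also needs uniqueness of the Fokker--Planck solution in the relevant class, which you leave implicit.
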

\begin{proof}
For \eqref{eq:C-free-energy}, $\Zc=(nH/2)\I-(nk_BT/2)\C^{-1}$ and \eqref{eq:C-hookean} gives $\mathbf R_C=-\lambda^{-1}(\C-\C_{\rm eq})$. Direct contraction yields the last expression in \eqref{eq:gaussian-dissipation-match}. For a centered Gaussian, $\nabla_{\bm r}\ln\varphi=-\C^{-1}\bm r$; inserting this into \eqref{eq:zetar} with $\mathbf M_r=2\I/\zeta_b$ gives the same trace expression after using $\lambda=\zeta_b/(4H)$.
\end{proof}

\begin{proposition}[Free-energy and dissipation remainders beyond Gaussian closure]
\label{prop:nongaussian-remainder}
Let $\varphi$ be a smooth positive normalized density with positive-definite second moment $\C$, and assume the Hookean law and the integrability and decay needed below. Let
\[
 g_{\C}(\bm r)=\frac{\exp[-\bm r\cdot\C^{-1}\bm r/2]}{(2\pi)^{d/2}(\det\C)^{1/2}}
\]
be the centered Gaussian with the same second moment. Then
\begin{align}
 f_c-f_c^{\rm eq}
 &= f_C+nk_BT\int\varphi\ln\frac{\varphi}{g_{\C}}\,\dd\bm r,
 \label{eq:freeenergy-nongaussian-remainder}\\
 T\zeta_r
 &=T\zeta_C+\frac{2n(k_BT)^2}{\zeta_b}
 \int\varphi\left|\nabla_{\bm r}\ln\frac{\varphi}{g_{\C}}\right|^2\,\dd\bm r.
 \label{eq:dissipation-nongaussian-remainder}
\end{align}
In particular, $f_c-f_c^{\rm eq}\ge f_C$ and $\zeta_r\ge\zeta_C$, with equality precisely for $\varphi=g_{\C}$ under these assumptions.
\end{proposition}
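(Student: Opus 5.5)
The plan is to rewrite both connector-level quantities relative to the instantaneous Gibbs distribution $\varphi_{\rm eq}$ of \eqref{eq:phi-eq}. Then each log-ratio $\ln(\varphi/\varphi_{\rm eq})$ is split as $\ln(\varphi/g_{\C})+\ln(g_{\C}/\varphi_{\rm eq})$. Two facts drive everything. For Hookean springs, $\varphi_{\rm eq}$ is itself the centered Gaussian $g_{\C_{\rm eq}}$ with $\C_{\rm eq}=(k_BT/H)\I$. Hence $\ln(g_{\C}/\varphi_{\rm eq})$ is a quadratic polynomial in $\bm r$, and its gradient is linear in $\bm r$. Consequently, every $\varphi$-average of these terms depends on $\varphi$ only through its second moment $\C$. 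It may therefore be evaluated as if $\varphi$ were $g_{\C}$, which gives the Gaussian quantities already computed in \Cref{prop:gaussian-dissipation-match} and \eqref{eq:C-free-energy}.

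\emph{Free energy.} By \eqref{eq:H-freeenergy} and \eqref{eq:relative-H},
\[
 f_c-f_c^{\rm eq}=nk_BT\int\varphi\ln\frac{\varphi}{\varphi_{\rm eq}}\,\dd\bm r .
\]
After the split, the first piece is the stated remainder. The second piece equals
\[
 \int\varphi\Bigl[-\tfrac12\bm r\cdot(\C^{-1}-\C_{\rm eq}^{-1})\bm r+\tfrac12\ln\det(\C_{\rm eq}\C^{-1})\Bigr]\dd\bm r .
\]
Using $\int\varphi\,\bm r\otimes\bm r\,\dd\bm r=\C$, this evaluates to
\[
 \tfrac12\bigl[\tr(\C_{\rm eq}^{-1}\C)-d-\ln\det(\C_{\rm eq}^{-1}\C)\bigr],
\]
which is exactly $f_C/(nk_BT)$ by \eqref{eq:C-free-energy} with $\C_{\rm eq}^{-1}=H\I/(k_BT)$. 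This proves \eqref{eq:freeenergy-nongaussian-remainder}.

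\emph{Dissipation.} Since $U=-k_BT\ln\varphi_{\rm eq}+\text{const}$, we have $\nabla_{\bm r}\mu_c=k_BT\nabla_{\bm r}\ln(\varphi/\varphi_{\rm eq})$. With \eqref{eq:zetar} and \eqref{eq:hookean-mobility} this gives
\[
 T\zeta_r=\frac{2n(k_BT)^2}{\zeta_b}\int\varphi\,|\bm a+\bm b|^2\,\dd\bm r,
 \qquad
 \bm a=\nabla_{\bm r}\ln\frac{\varphi}{g_{\C}},\quad
 \bm b=-\mathbf B\bm r,\quad
 \mathbf B=\C^{-1}-\C_{\rm eq}^{-1}=\mathbf B^{\sf T}.
\]
Expanding the square produces three terms. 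The $|\bm a|^2$ term is the claimed remainder. The $|\bm b|^2$ term is $\tr(\mathbf B\C\mathbf B)$, which depends only on $\C$. It therefore coincides with its value at $g_{\C}$, and by \Cref{prop:gaussian-dissipation-match} that value is $T\zeta_C$. The cross term $\int\varphi\,\bm a\cdot\mathbf B\bm r\,\dd\bm r$ should vanish; this is the key computation. Write $\varphi\bm a=\nabla_{\bm r}\varphi+\varphi\C^{-1}\bm r$. Integration by parts gives $\int\nabla_{\bm r}\varphi\cdot\mathbf B\bm r\,\dd\bm r=-\tr\mathbf B$, while $\int\varphi\,\C^{-1}\bm r\cdot\mathbf B\bm r\,\dd\bm r=\tr(\C^{-1}\mathbf B\C)=\tr\mathbf B$. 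The two contributions cancel, which yields \eqref{eq:dissipation-nongaussian-remainder}.

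The inequalities follow because both remainders are nonnegative: the first is a relative entropy (Gibbs' inequality), the second a relative Fisher information. For equality, the Kullback--Leibler term vanishes iff $\varphi=g_{\C}$ a.e. Likewise, the Fisher term vanishes iff $\varphi/g_{\C}$ is constant, and normalization forces that constant to be $1$. The main obstacle I expect is not algebraic but the justification of the integrations by parts. I would handle this by imposing the decay hypotheses explicitly: $|\bm r|^2\varphi\in L^1$, finite entropy and Fisher information, and vanishing boundary terms for $\varphi\,\mathbf B\bm r$ on large spheres. I would first argue for smooth positive rapidly decaying $\varphi$, consistent with the regularity convention stated after \eqref{eq:config-free-energy}.
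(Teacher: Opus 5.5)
Your proposal is correct and follows essentially the same route as the paper. You split $\ln(\varphi/\varphi_{\rm eq})$ through $g_{\C}$, use that $\ln(g_{\C}/\varphi_{\rm eq})$ is quadratic so its $\varphi$-average depends only on $\C$, and kill the cross term with the integration-by-parts identity $\int\varphi\,\bm r\otimes\nabla_{\bm r}\ln(\varphi/g_{\C})\,\dd\bm r=\mathbf 0$, in your case contracted with $\mathbf B$; the Gaussian evaluations and the relative-entropy/relative-Fisher-information equality arguments also coincide with the paper's.
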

\begin{proof}
The logarithm $\ln(g_{\C}/\varphi_{\rm eq})$ is a constant plus a quadratic form in $\bm r$. Its averages under $\varphi$ and $g_{\C}$ agree because their second moments agree. Splitting $\ln(\varphi/\varphi_{\rm eq})$ through $g_{\C}$ proves \eqref{eq:freeenergy-nongaussian-remainder} and its relative-entropy inequality. For dissipation, write
\[
 \nabla_{\bm r}\mu_c
 =(H\I-k_BT\C^{-1})\bm r
 +k_BT\nabla_{\bm r}\ln(\varphi/g_{\C}).
\]
The cross term in the squared norm vanishes because integration by parts gives
\[
 \int\varphi\,\bm r\otimes\nabla_{\bm r}\ln(\varphi/g_{\C})\,\dd\bm r
 =-\I+\C\C^{-1}=\mathbf0.
\]
The square of the linear term gives \eqref{eq:gaussian-dissipation-match}; the remaining square gives \eqref{eq:dissipation-nongaussian-remainder}. Vanishing relative Fisher information forces the ratio $\varphi/g_{\C}$ to be constant, and normalization fixes that constant to one.
\end{proof}

The distribution-level and conformation-level descriptions are alternative levels of description, not simultaneous additions to the state. In particular, $f_c[\varphi]$ and $f_C(\C)$ must not both be included in the same Helmholtz free energy. For Hookean springs, the second-moment evolution \eqref{eq:C-hookean} is already an exact closed moment equation for every sufficiently regular distribution with finite second moment. Proposition~\ref{prop:gaussian-dissipation-match} makes the stronger thermodynamic identification of $f_C$ and the relaxation production with the kinetic description on the centered Gaussian Hookean invariant class. Outside that class, Proposition~\ref{prop:nongaussian-remainder} quantifies the unresolved free energy and dissipation rather than merely withholding an equality claim.

\section{Temperature form of the total energy equation}
\label{sec:temperature-equation}

Writing the internal-energy balance \eqref{eq:internal-energy-exact} as an evolution equation for temperature retains configurational energy storage, reversible thermoelasticity, compressibility, relative constituent motion and configurational relaxation. Replacing these terms by the total stress power would in general count reversible elastic storage as heat.

\subsection{Exact temperature identity}

In this subsection, $p_T$ denotes $\partial p/\partial T$ at fixed $(\rho_s,\rho_p,\C)$ unless another set of held variables is stated explicitly; after the fixed-composition one-fluid reduction below, $p_T$ instead denotes the derivative of $p(\rho,T)$ at fixed $\rho$.

Define the volumetric heat capacity at fixed partial densities and conformation by
\begin{equation}
 \CV
 \eqdef -T\left.\frac{\partial^2 f}{\partial T^2}\right|_{\rho_s,\rho_p,\C}
 =T\left.\frac{\partial s}{\partial T}\right|_{\rho_s,\rho_p,\C}.
 \label{eq:CV-def}
\end{equation}
When \eqref{eq:temperature-twofluid} below is solved as an evolution equation for $T$, the admissible state range is restricted to $\CV>0$; the balance identity itself does not require division by $\CV$.
The thermodynamic identities \eqref{eq:thermo-conjugates} imply
\begin{equation}
 \dd u
 =\CV\,\dd T
 +\sum_{\alpha=s,p}h_\alpha\,\dd\rho_\alpha
 +(\Zc-T\Zc_T):\dd\C,
 \label{eq:u-temperature-differential}
\end{equation}
where a subscript $T$ on $\Zc$ denotes the derivative at fixed $(\rho_s,\rho_p,\C)$. Thus $\Zc-T\Zc_T$ is the conformation derivative of the \emph{internal} energy, whereas $\Zc$ is the conformation derivative of the Helmholtz free energy.

Insertion of \eqref{eq:u-temperature-differential} into the exact internal-energy balance and use of the partial mass balances yields the first temperature form
\begin{align}
 \CV\Db T={}&-\Div\bm q
 +\sum_{\alpha=s,p}\T_\alpha:\D_\alpha
 -\bm m_p\cdot\bm w
 +(p-Tp_T)\Div\bm v \notag\\
 &-\sum_{\alpha=s,p}\bm J_\alpha\cdot\nabla h_\alpha
 -(\Zc-T\Zc_T):\Db\C.
 \label{eq:temperature-raw}
\end{align}
Equation \eqref{eq:temperature-raw} is simply the total energy balance expressed in the variable $T$; no entropy-production closure has yet been inserted.

Define the conformation relaxation/source tensor
\begin{equation}
 \mathbf R_C
 \eqdef \overset{\nabla(p|d)}{\C}
 =\Dp\C-\Lgrad_d\C-\C\Lgrad_d^{\sf T}.
 \label{eq:RC-def}
\end{equation}
Because the conformation is spatially transported with the polymer velocity,
\begin{equation}
 \Db\C
 =\Lgrad_d\C+\C\Lgrad_d^{\sf T}+\mathbf R_C
 -(\bm v_p-\bm v)\cdot\nabla\C.
 \label{eq:C-bary-kinematics}
\end{equation}
For the dilute split \eqref{eq:dilute-f-split}, the conformation dependence of the polymer partial entropy satisfies
\begin{equation}
 \nabla\bar s_p
 =\nabla^{\circ}\bar s_p
 -\frac{1}{\rho_p}\,\Zc_T:\nabla\C,
 \qquad
 \nabla\bar s_s=\nabla^{\circ}\bar s_s,
 \label{eq:sbar-circle}
\end{equation}
where $\nabla^{\circ}$ means that the explicit conformation dependence is held fixed. Since
\begin{equation}
 \bm v_p-\bm v=\frac{\bm J_p}{\rho_p},
 \label{eq:up-Jp}
\end{equation}
the conformation-gradient term in \eqref{eq:C-bary-kinematics} cancels exactly with the corresponding contribution carried by the polymer relative-energy flux. This is the thermal analogue of the cancellation in \Cref{thm:cancellation}.

Define the residual mechanical dissipation density
\begin{equation}
 \Phi_d
 \eqdef
 \sum_{\alpha=s,p}\T_\alpha^d:\D_\alpha
 -\bm m_p^d\cdot\bm w.
 \label{eq:Phi-d}
\end{equation}
For the diagonal closures \eqref{eq:partial-visc}--\eqref{eq:drag-closure}, $\Phi_d\ge0$. The corresponding two-fluid conformation model then satisfies the following temperature identity once its conformation source law is specified.

\begin{theorem}[Temperature equation for the binary conformation mixture]
\label{thm:temperature}
Assume the total mixture balances, the dilute free-energy structure \eqref{eq:dilute-f-split}, the conformation transport \eqref{eq:RC-def}, the stress/transport compatibility of \Cref{thm:cancellation}, and an explicitly specified conformation-source law $\mathbf R_C$. Then
\begin{equation}
 \begin{aligned}
 \CV\Db T={}&-\Div\bm q
 +\Phi_d
 -T p_T\Div\bm v
 -T\sum_{\alpha=s,p}\bm J_\alpha\cdot\nabla^{\circ}\bar s_\alpha\\
 &+T\left(\frac{\partial\taup}{\partial T}\right)_{\rho_s,\rho_p,\C}:\D_d
 -(\Zc-T\Zc_T):\mathbf R_C.
 \end{aligned}
 \label{eq:temperature-twofluid}
\end{equation}
All thermodynamic derivatives are taken before any incompressible or isothermal reduction.
\end{theorem}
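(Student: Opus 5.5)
Starting from the raw temperature form \eqref{eq:temperature-raw}, which already encodes the total energy balance, I will substitute the constitutive decompositions and let the cancellations established earlier remove every reversible term that does not survive in \eqref{eq:temperature-twofluid}.

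\emph{Step 1: mechanical and pressure terms.} Insert the split \eqref{eq:full-reversible-split} with the gauges \eqref{eq:explicit-rev-stress}--\eqref{eq:explicit-rev-interaction}, \eqref{eq:elastic-interaction-gauge} and \eqref{eq:stress-division}. By \eqref{eq:stress-power-identity} the elastic power is $\taup:\D_d$. The dissipative part is $\Phi_d$ by \eqref{eq:Phi-d}. By the computation preceding \eqref{eq:ordinary-reversible-cancellation}, the ordinary part together with $p\Div\bm v$ equals $\sum_\alpha\bm J_\alpha\cdot\bm X_\alpha^\circ$. Hence the mechanical terms plus $(p-Tp_T)\Div\bm v$ become
\[
\Phi_d+\taup:\D_d-Tp_T\Div\bm v+\sum_\alpha\bm J_\alpha\cdot(\nabla^\circ\mu_\alpha+\bar s_\alpha\nabla T).
\]

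\emph{Step 2: diffusion terms.} Since $h_\alpha=\mu_\alpha+T\bar s_\alpha$, we have
\[
\nabla h_\alpha=\nabla\mu_\alpha+\bar s_\alpha\nabla T+T\nabla\bar s_\alpha .
\]
Using \eqref{eq:mu-circle} and \eqref{eq:sbar-circle}, the fixed-$\C$ parts combine with the output of Step 1 to leave $-T\sum_\alpha\bm J_\alpha\cdot\nabla^\circ\bar s_\alpha$. The conformation-gradient remainder is
\[
-\bm J_p\cdot\frac{1}{\rho_p}(\Zc-T\Zc_T):\nabla\C .
\]
Here I must check that differentiating the dilute split \eqref{eq:dilute-f-split} indeed gives $\nabla\mu_p=\nabla^\circ\mu_p+\rho_p^{-1}\Zc:\nabla\C$ and $\partial_T\mu_p$ consistent with \eqref{eq:sbar-circle}. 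This holds because $\Zc=(\rho_p/m_p^{\mathrm{mol}})\partial_\C a_C$, so $\partial_\C\mu_p=\Zc/\rho_p$.

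\emph{Step 3: conformation term.} Substitute \eqref{eq:C-bary-kinematics} into $-(\Zc-T\Zc_T):\Db\C$. The convective correction, via \eqref{eq:up-Jp}, produces
\[
+\frac{\bm J_p}{\rho_p}\cdot(\Zc-T\Zc_T):\nabla\C ,
\]
which cancels the remainder from Step 2, exactly as in Proposition~\ref{prop:convective-cancellation}. The affine part gives $-(\Zc-T\Zc_T):(\Lgrad_d\C+\C\Lgrad_d^{\sf T})$. By isotropy, $\Zc$ commutes with $\C$; differentiating the commutation in $T$ at fixed $\C$ shows that $\Zc_T$ commutes with $\C$ as well. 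The argument before \eqref{eq:direct-config-power} therefore yields
\[
-\taup:\D_d+T(\partial_T\taup):\D_d,
\]
with $\partial_T\taup=\C\Zc_T+\Zc_T\C$ at fixed $\C$ by \eqref{eq:general-stress-Z-early}. The first term cancels the elastic power from Step 1, and what remains is $-(\Zc-T\Zc_T):\mathbf R_C$, which completes \eqref{eq:temperature-twofluid}.

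\emph{Main obstacle.} The hardest part is the bookkeeping in Step 2: making sure the heat-flux convention \eqref{eq:reduced-heat-flux} that underlies \eqref{eq:temperature-raw} produces exactly $\nabla h_\alpha$, with no stray $\Div\bm J_\alpha$ terms, and that the conformation parts of $\nabla\mu_p$ and $\nabla\bar s_p$ combine with the right signs into $\Zc-T\Zc_T$. I would first verify \eqref{eq:temperature-raw} directly from \eqref{eq:u-temperature-differential}, the identity $\Db\rho_\alpha=-\rho_\alpha\Div\bm v-\Div\bm J_\alpha$, and $u-\sum_\alpha h_\alpha\rho_\alpha=-(p-Tp_T)$, and only then run the cancellations above.
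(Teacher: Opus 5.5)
Your proposal is correct and follows essentially the paper's own route: start from \eqref{eq:temperature-raw}, split $\nabla h_\alpha$ with \eqref{eq:mu-circle} and \eqref{eq:sbar-circle}, remove the conformation-gradient transport via \eqref{eq:C-bary-kinematics}, and use the coaxiality $\C\Zc_T=\Zc_T\C$ to obtain $T(\partial_T\taup):\D_d$. The difference is only bookkeeping. The paper groups $-\Zc:\Db\C$ with the mechanical and chemical terms, recognizes them as the non-thermal part of $T\zeta$, quotes \Cref{thm:cancellation} for that part, and treats $T\Zc_T:\Db\C$ separately. You instead carry out the ordinary, convective and elastic cancellations inline for the combined coefficient $\Zc-T\Zc_T$.
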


\begin{proof}
Use $h_\alpha=\mu_\alpha+T\bar s_\alpha$ in \eqref{eq:temperature-raw}. The combination
\[
 \sum_\alpha\T_\alpha:\D_\alpha-\bm m_p\cdot\bm w
 +p\Div\bm v
 -\sum_\alpha\bm J_\alpha\cdot(\nabla\mu_\alpha+\bar s_\alpha\nabla T)
 -\Zc:\Db\C
\]
is precisely the mechanical/chemical/conformation part of $T\zeta$ in \eqref{eq:exact-entropy-identity}. Insert the reversible extraction and cancellation theorem, so that the surviving mechanical part is $\Phi_d-\Zc:\mathbf R_C$. The remaining $T\Zc_T:\Db\C$ is evaluated with \eqref{eq:C-bary-kinematics}. Equation \eqref{eq:sbar-circle} cancels the explicit spatial-conformation transport. Finally, the stress identity
\begin{equation}
 \taup=\C\Zc+\Zc\C
 \label{eq:general-stress-Z}
\end{equation}
for the isotropic conformation free energy gives $\C\Zc=\Zc\C$. Differentiating this commutation relation with respect to $T$ at fixed $(\rho_s,\rho_p,\C)$ yields $\C\Zc_T=\Zc_T\C$. Therefore the same coaxiality argument used in \eqref{eq:direct-config-power} gives
\[
 T\Zc_T:(\Lgrad_d\C+\C\Lgrad_d^{\sf T})
 =T(\C\Zc_T+\Zc_T\C):\D_d
 =T\left(\frac{\partial\taup}{\partial T}\right)_{\rho_s,\rho_p,\C}:\D_d.
\]
Combining $-\Zc:\mathbf R_C$ from the entropy production with $T\Zc_T:\mathbf R_C$ from the internal-energy derivative gives the last term of \eqref{eq:temperature-twofluid}.
\end{proof}

\begin{remark}[Level of description]
At the full kinetic distribution level, the free-energy identity \eqref{eq:fc-balance-general} is exact under its stated regularity and boundary assumptions. After the connector-flux law \eqref{eq:Jr-closure} is selected, the nonnegative production \eqref{eq:zetar} is an exact consequence within that kinetic constitutive model. After reduction to a finite set of moments, the condition $T\zeta_C=-\Zc:\mathbf R_C\ge0$ is instead a thermodynamic admissibility requirement on the chosen conformation source law; it does not determine $\mathbf R_C$ by itself. For Hookean springs the second-moment evolution \eqref{eq:C-hookean} closes exactly without a Gaussian assumption, whereas equality of the finite-dimensional conformation dissipation with the connector-space production is proved in Proposition~\ref{prop:gaussian-dissipation-match} on the centered Gaussian invariant class. The temperature identity itself requires only an explicitly specified conformation source law within the stated state space and regularity assumptions. Thermodynamic admissibility is the additional restriction $-\Zc:\mathbf R_C\ge0$; when imposed, it pairs the chosen relaxation law with the chosen conformation free energy.
\end{remark}

\subsection{Term-by-term structure of the temperature equation}

Equation \eqref{eq:temperature-twofluid} contains the following contributions:
\begin{enumerate}[label=(\roman*)]
\item $-\Div\bm q$: conductive transport; $\bm q=-\mathbf K_T\nabla T$ may be anisotropic because polymer orientation can make $\mathbf K_T$ conformation dependent;
\item $\sum\T_\alpha^d:\D_\alpha$: viscous conversion of mechanical energy into heat;
\item $-\bm m_p^d\cdot\bm w$: heating by irreversible polymer--solvent relative motion (drag);
\item $-Tp_T\Div\bm v$: reversible thermo-compressive heating/cooling;
\item $-T\sum\bm J_\alpha\cdot\nabla^{\circ}\bar s_\alpha$: energy carried by relative constituent transport; in reduced one-fluid models this term is absent by construction;
\item $T(\taup)_T:\D_d$: reversible thermoelastic heating/cooling due to the temperature dependence of configurational free-energy storage;
\item $-(\Zc-T\Zc_T):\mathbf R_C$: conversion of configurational \emph{internal} energy during structural relaxation.
\end{enumerate}
Under the diagonal closure, items (ii) and (iii) are nonnegative volumetric conversion terms. Fourier conduction in item (i) is also irreversible, although it enters the temperature equation through the transport term $-\Div\bm q$ rather than as a sign-definite volumetric source. Items (iv) and (vi) are reversible. For configurational relaxation, define $Q_C^{\rm rel}=-(\Zc-T\Zc_T):\mathbf R_C$. Then
\[
 Q_C^{\rm rel}-T\zeta_C=T\Zc_T:\mathbf R_C,
 \qquad T\zeta_C=-\Zc:\mathbf R_C\ge0.
\]
Thus the relaxation heat and $T\zeta_C$ coincide for a given process when $\Zc_T:\mathbf R_C=0$; $\Zc_T=0$ is sufficient but not necessary. The local relaxation heat may have either sign even when $\zeta_C\ge0$. Entropy production and local heat release are therefore distinct quantities.

\subsection{Pressure--temperature form after one-fluid reduction}

The full binary equation is written in the density form \eqref{eq:temperature-twofluid}; a single $c_p$ is not generally appropriate while composition and relative motion remain independent. For comparison with engineering energy equations, consider the later one-velocity, fixed-composition reduction, suppose the conformation does not contribute to the thermodynamic pressure, and write $p=p(\rho,T)$. Assume the local mechanical stability condition $p_\rho>0$, which is also needed for the pressure--temperature change of variables below. Define
\begin{equation}
 \alpha_T=\frac{p_T}{\rho p_\rho},
 \qquad
 c_v=\frac{\CV}{\rho},
 \qquad
 c_p-c_v=\frac{T p_T^2}{\rho^2 p_\rho}.
 \label{eq:cp-def}
\end{equation}
Using $\dot p=-\rho p_\rho\Div\bm v+p_T\dot T$, equation \eqref{eq:temperature-twofluid} becomes
\begin{equation}
 \rho c_p\dot T
 =-\Div\bm q+\Phi_d+\alpha_TT\dot p
 +T(\taup)_T:\D
 -(\Zc-T\Zc_T):\mathbf R_C.
 \label{eq:temperature-cp}
\end{equation}
Here $c_v$ and $c_p$ are frozen-conformation, fixed-composition heat capacities, not necessarily their equilibrium values. Equation~\eqref{eq:temperature-cp} is the pressure--temperature form under these stated restrictions; it is not applicable without extra terms when $p$ depends explicitly on conformation.

\subsection{Strict Hookean thermal terms and temperature-dependent coordinates}
\label{sec:hookean-thermal-coordinates}
For the strict Hookean free energy \eqref{eq:C-free-energy}, derivatives must first be taken at fixed absolute conformation $\C$. Writing $H'=\dd H/\dd T$ gives
\begin{align}
 \Zc-T\Zc_T&=\frac n2(H-TH')\I,
 & (\taup)_T\big|_{\C}&=nH'\C-nk_B\I.
 \label{eq:strict-hookean-thermal-forces}
\end{align}
Consequently the configurational part of \eqref{eq:temperature-twofluid} is
\begin{equation}
 Q_{\rm ve}^{\rm H}
 =nTH'\C:\D_d-nk_BT\tr\D_d
 +\frac{n(H-TH')}{2\lambda}\tr(\C-\C_{\rm eq}).
 \label{eq:strict-hookean-thermal-source}
\end{equation}
Even for temperature-independent $H$, the relaxation heat is proportional to $\tr(\C-\C_{\rm eq})$ and can be negative for a contracted configuration; the relaxation entropy production remains nonnegative. For $H\propto T$, the internal-energy relaxation term vanishes and the deformation term becomes $\taup:\D_d$.

The distinction is not removed by normalizing the tensor. Set $\mathbf A=H\C/(k_BT)$ and let $F(\rho_s,\rho_p,T,\mathbf A)$ denote the same Helmholtz function rewritten in these coordinates. With $\ell(T)=\partial_T\ln(H/T)$, the chain rule gives the physical entropy and internal energy as
\begin{align}
 s&=-F_T\big|_{\mathbf A}-\ell F_{\mathbf A}:\mathbf A,
 \label{eq:entropy-coordinate-chain}\\
 u&=F-TF_T\big|_{\mathbf A}-T\ell F_{\mathbf A}:\mathbf A.
 \label{eq:energy-coordinate-chain}
\end{align}
Thus replacing $s$ by $-F_T|_{\mathbf A}$ generally changes the thermodynamic model; it is not merely a change of notation. The kinematic correction in \eqref{eq:A-thermal} and the thermodynamic corrections \eqref{eq:entropy-coordinate-chain}--\eqref{eq:energy-coordinate-chain} disappear identically for arbitrary thermal histories when $H/T$ is constant. Vanishing of the kinematic correction along an isothermal history does not, in general, remove the thermodynamic corrections or identify the two non-isothermal models.

For reference, the excess free energy \eqref{eq:C-free-energy} contributes
\begin{equation}
 \CV^C=-\frac{nk_B}{2}\left[
 \frac{T^2H''}{H}\tr(\mathbf A-\I)
 +d\left(\frac{TH'}{H}-1\right)^2\right]
 \label{eq:strict-hookean-CV}
\end{equation}
to the heat capacity at fixed $\C$. The equilibrium configurational contribution absorbed into $f_0$ must also be retained: only the total heat capacity has the required physical meaning and is restricted to be positive. In particular, a negative excess contribution does not by itself establish thermal instability.

\subsection{One-mode Oldroyd-B specialization}
\label{sec:temperature-oldroydb}

The following specialization assumes the one-velocity, fixed-composition reduction. The dimensionless tensor $\mathbf A$ is taken as a primary internal variable of a phenomenological Oldroyd-B model and is not, for a general $H(T)$, thermodynamically interchangeable with the temperature-dependent normalization $H\C/(k_BT)$; see \Cref{sec:hookean-thermal-coordinates}. The temperature-normalized Hookean tensor carries the standard Oldroyd relaxation law without an additional thermal-rate term when $\Dp\ln(H/T)=0$ along the process; this holds identically for arbitrary thermal histories if $H/T$ is constant.

For a compressible fixed-composition model, write the modal free-energy density as
\begin{equation}
 f=f_0(\rho,T)+\frac{G(\rho,T)}{2}\,g(\mathbf A),
 \qquad
 g(\mathbf A)=\tr\mathbf A-\ln\det\mathbf A-d,
 \label{eq:oldroyd-freeenergy-T}
\end{equation}
with $\mathbf A=\mathbf A^{\sf T}>0$, $G(\rho,T)>0$ and $\lambda(\rho,T)>0$ on the admissible state range. These conditions make $g(\mathbf A)\ge0$ and the relaxation dissipation below nonnegative.
The binary dilute structure inherited from \eqref{eq:dilute-f-split} corresponds, at fixed polymer mass fraction, to $G(\rho,T)=\rho\widehat G(T)$. This branch has the density homogeneity of the dilute binary theory. Strict Hookean thermodynamic equivalence further requires $G=nk_BT$ and a temperature-independent normalization, $H/T=\mathrm{const}$, as specified in \Cref{sec:hookean-thermal-coordinates}. A generic $G(\rho,T)$ defines a broader phenomenological one-fluid extension for energetic elasticity and does not inherit the binary ``no configurational pressure'' property automatically. In that broader extension the thermodynamic pressure must be the \emph{total} pressure, including the conformation contribution
\begin{equation}
 p_C=\frac{\rho G_\rho-G}{2}\,g(\mathbf A).
 \label{eq:oldroyd-pressure}
\end{equation}
Thus $p_C$ vanishes identically as a function of $\mathbf A$ when $\rho G_\rho=G$, i.e. when the modal free-energy density is homogeneous of degree one in $\rho$. At the equilibrium state $\mathbf A=\I$, $p_C$ also vanishes because $g(\I)=0$. Consequently, every compressible equation below with the dilute binary no-configurational-pressure property is to be read on the branch $G=\rho\widehat G(T)$. This homogeneity condition alone does not establish kinetic equivalence. If a generic $G(\rho,T)$ is used instead, $p_C$ must remain in the pressure and thermo-compressive terms until an independent incompressible/fixed-density reduction is imposed. In the incompressible specialization $\rho$ is fixed and $G(\rho_0,T)$ is abbreviated by $G(T)$. For this phenomenological state choice,
\begin{align}
 \mathbf Z_A&=\frac{G}{2}(\I-\mathbf A^{-1}),
 &\taup&=G(\mathbf A-\I),
 \label{eq:oldroyd-Z-tau}\\
 \mathbf R_A&=-\frac1\lambda(\mathbf A-\I).
 \label{eq:oldroyd-RA}
\end{align}
The conformation contribution to the volumetric heat capacity is
\begin{equation}
 \CV^C=-\frac{T}{2}G_{TT}(\rho,T)g(\mathbf A).
 \label{eq:CV-conformation}
\end{equation}
Thermodynamic stability requires the \emph{total} heat capacity in the admissible state range to remain positive,
\begin{equation}
 \CV=\CV^0+\CV^C>0.
 \label{eq:CV-stability}
\end{equation}
This condition may restrict strongly temperature-dependent moduli. For example, if at fixed density $G\propto T^\gamma$, then $\CV^C=-\gamma(\gamma-1)Gg(\mathbf A)/(2T)$; for $\gamma>1$ the configurational contribution is negative and must be compensated by the equilibrium heat capacity.
Furthermore,
\begin{align}
 T(\taup)_T:\D
 &=T G_T(\rho,T)(\mathbf A-\I):\D,
 \label{eq:oldroyd-thermoelastic}\\
 -(\mathbf Z_A-T(\mathbf Z_A)_T):\mathbf R_A
 &=\frac{G-TG_T}{2\lambda}
 \tr(\mathbf A+\mathbf A^{-1}-2\I).
 \label{eq:oldroyd-relaxheat}
\end{align}
Hence the reduced one-fluid temperature equation contains the explicit viscoelastic part
\begin{equation}
 Q_{\rm ve}
 =T G_T(\rho,T)(\mathbf A-\I):\D
 +\frac{G-TG_T}{2\lambda}
 \tr(\mathbf A+\mathbf A^{-1}-2\I).
 \label{eq:Qve}
\end{equation}
Introduce the logarithmic thermoelasticity parameter
\begin{equation}
 \chi(\rho,T)=\frac{T G_T(\rho,T)}{G(\rho,T)}
 \label{eq:chi}
\end{equation}
and the nonnegative relaxation dissipation
\begin{equation}
 D_{\rm rel}=\frac{G}{2\lambda}
 \tr(\mathbf A+\mathbf A^{-1}-2\I)\ge0.
 \label{eq:Drel}
\end{equation}
Then
\begin{equation}
 Q_{\rm ve}=\chi\,\taup:\D+(1-\chi)D_{\rm rel}.
 \label{eq:Qve-partition}
\end{equation}
Equation~\eqref{eq:Qve-partition} separates the entropy-elastic and energy-elastic limits. If $G\propto T$, then $\chi=1$: configurational relaxation does not directly release internal energy, while deformation produces reversible thermoelastic heating/cooling. If $G$ is temperature independent, then $\chi=0$: deformation stores internal energy and relaxation converts the corresponding free-energy loss into heat. If $\chi>1$, pure relaxation at $\D=0$ gives $Q_{\rm ve}=(1-\chi)D_{\rm rel}<0$: relaxation is still entropy producing, but the configurational subsystem absorbs thermal energy locally. Thus dissipation and local heating cannot be identified term by term.

For the incompressible one-fluid Oldroyd-B specialization, fix $\rho=\rho_0$ and write $G(T)=G(\rho_0,T)$. With $\lambda=\nu_1/G$, Fourier heat flux and solvent viscosity $\nu$, \eqref{eq:temperature-twofluid} reduces to
\begin{align}
 \left[\CV^0-\frac{T}{2}G''g(\mathbf A)\right]\dot T
 ={}&2\nu\D:\D+\Div(\kappa\nabla T)
 +TG'(\mathbf A-\I):\D \notag\\
 &+\frac{G}{2\nu_1}(G-TG')
 \tr(\mathbf A+\mathbf A^{-1}-2\I).
 \label{eq:Hron-limit}
\end{align}
After the identifications $G=\mu$, $\lambda=\nu_1/\mu$ and $\mathbf A=\mathbf B_{\kappa_{p(t)}}$, this is the explicit temperature equation derived by Hron et al.~\cite{Hron2017} (Eqs.~(4.37) and (4.42d) in their arXiv version~3), up to their incompressible deviatoric notation. In particular, the conformation-dependent heat-capacity term, reversible $TG'$ term and $(G-TG')$ relaxation term agree coefficient by coefficient. Thus the one-fluid incompressible specialization coincides with the cited equation after identification of notation.

Wapperom--Hulsen~\cite{WapperomHulsen1998} formulate the viscoelastic heat source using an energy-partition coefficient between elastic stress work and mechanical relaxation and analyze power-law temperature/density scaling of the modulus. In this notation, if $G\propto T^\gamma\rho^\delta$, then $\chi=\gamma$ at fixed density, and for the density-homogeneous case $\delta=1$ equation~\eqref{eq:Qve-partition} has precisely that stress-work/relaxation partition structure. For $\delta\ne1$, \eqref{eq:oldroyd-pressure} shows explicitly that conformation contributes to pressure/thermal expansion; the corresponding correction is carried in the general $-Tp_T\Div\bm v$ term before any pressure reduction that assumes $p_C=0$. The corresponding partition coefficient is therefore determined by the density and temperature dependence of the stored free energy rather than introduced as an additional heat-source parameter.

\begin{remark}[Relation to the strict Hookean dumbbell]
For the kinetic Hookean model of \Cref{sec:moment}, the dimensionless tensor $\mathbf A=H\C/(k_BT)$ has stress modulus $G=nk_BT$. At fixed composition $n\propto\rho$, so the stress scale has $G\propto\rho T$ and the configurational pressure vanishes. However, inserting the formal value $\chi=1$ into \eqref{eq:Qve-partition} recovers the strict Hookean thermal source only when the temperature-dependent coordinate corrections vanish identically, in particular for $H\propto T$. For a general $H(T)$, use \eqref{eq:strict-hookean-thermal-source}, not the phenomenological partition based on $-F_T|_{\mathbf A}$. A general $G(\rho,T)$ with a primary $\mathbf A$ and the standard relaxation law is a broader internal-variable model.
\end{remark}

\subsection{Comparison with established non-isothermal viscoelastic energy equations}
\label{sec:temperature-literature}

Non-isothermal viscoelastic energy equations have been derived in several thermodynamic and kinetic formulations. Equation~\eqref{eq:temperature-twofluid} retains two binary-mixture terms absent from one-fluid theories: drag heating and entropy/relative-energy transport by relative constituent motion.

{\small
\begin{longtable}{@{}>{\raggedright\arraybackslash}p{0.23\textwidth}>{\raggedright\arraybackslash}p{\dimexpr0.77\textwidth-2\tabcolsep\relax}@{}}
\toprule
Reference & Relation to \eqref{eq:temperature-twofluid}\\
\midrule
\endfirsthead
\toprule
Reference & Relation to \eqref{eq:temperature-twofluid} (continued)\\
\midrule
\endhead
\midrule
\multicolumn{2}{r}{\emph{Continued on the next page}}\\
\endfoot
\bottomrule
\endlastfoot
Marrucci (1972)~\cite{Marrucci1972} & Derives non-isothermal free energy and stress from a dumbbell model and separates polymer free-energy storage from heat production.\\
Peters--Baaijens (1997)~\cite{PetersBaaijens1997} & Shows that the same slip/deformation kinematics entering the constitutive equation also enter the energy equation, and separates dissipated from elastically stored energy and entropy from energy elasticity.\\
Wapperom--Hulsen (1998)~\cite{WapperomHulsen1998} & Gives a compressible internal-variable temperature equation with nonequilibrium heat capacity, thermal expansion, anisotropic heat conduction, reversible elastic heating and dissipation. Their formulation uses an energy-partition coefficient between stress work and mechanical relaxation and analyzes power-law temperature/density scaling of elastic moduli; \eqref{eq:Qve-partition} is the corresponding partition in the density-homogeneous specialization.\\
Hron et al. (2017)~\cite{Hron2017} & For an incompressible Oldroyd-B/Maxwell model with $G=G(T)$, their explicit temperature equation contains the same three characteristic terms with the same coefficients: $-TG''g(\mathbf A)/2$ in the heat capacity, $TG'(\mathbf A-\I):\D$, and $(G-TG')\tr(\mathbf A+\mathbf A^{-1}-2\I)/(2\lambda)$. Thus the one-fluid incompressible limit agrees coefficient by coefficient after the notation is identified.\\
Guaily (2015)~\cite{Guaily2015} & Starts from $e=e(\rho,T,\C)$ and obtains a compressible conformation-dependent temperature equation directly from the energy balance. This is the one-fluid analogue of the raw identity \eqref{eq:temperature-raw}.\\
Dostal\'ik--M\'alek--Pr\r{u}\v{s}a--S\"uli (2020)~\cite{Dostalik2020} & Derives a compressible non-isothermal Fokker--Planck model and an explicit temperature equation with pressure work, Fourier conduction, viscous heating and separate entropic/energetic spring contributions. Their starting approximation neglects polymer mass and identifies mixture density/velocity with the solvent; the binary model considered here retains both constituent masses, velocities and drag.\\
Rodrigues et al. (2025)~\cite{Rodrigues2025} & Derives additional energy-equation terms for temperature- and time-dependent material properties and implements them computationally.\\
\end{longtable}
}

\begin{remark}[One-conformation state-space limitation]
The balance and temperature identities above are exact within the chosen one-conformation state space for any explicitly specified conformation source law satisfying the stated regularity assumptions. Thermodynamic admissibility additionally requires $T\zeta_C=-\Zc:\mathbf R_C\ge0$; that inequality restricts the source law but is not needed to establish the algebraic identities themselves. Energetic structural effects below the chain end-to-end scale may require additional internal variables~\cite{Hutter2009}; experiments and simulations also indicate significant energetic-elastic contributions at strong deformation~\cite{Ionescu2008}. These effects lie outside the one-conformation state space used here.
\end{remark}

\section{Compressibility, temperature and the stress-rate equation}
\label{sec:compressible-stress}

Before the one-velocity, incompressible and isothermal reductions, the Hookean stress equation contains additional concentration, volumetric and thermal terms.

Apply the two-velocity upper-convected operator \eqref{eq:two-velocity-upper}, now acting on stress as $\overset{\nabla(p|d)}{\taup}:=\Dp\taup-\Lgrad_d\taup-\taup\Lgrad_d^{\sf T}$, to \eqref{eq:hookean-tau}. This notation keeps the polymer transport and deformation velocities explicit. For $H=H(T)$, $\zeta_b=\zeta_b(T)$ and \eqref{eq:n-balance}, one obtains
\begin{equation}
 \overset{\nabla(p|d)}{\taup}
 +\left(\frac{1}{\lambda}+\Div\bm v_p-\Dp\ln H\right)\taup
 =2nk_BT\D_d
 +nk_BT\left(\Dp\ln H-\Dp\ln T\right)\I.
 \label{eq:compressible-nonisothermal-stress}
\end{equation}
where
\begin{equation}
 \D_d=\sym\Lgrad_d.
 \label{eq:Dd}
\end{equation}
Equation \eqref{eq:compressible-nonisothermal-stress} contains explicit concentration, compression and temperature contributions that disappear under the standard idealizations.

\begin{remark}[Isothermal specialization]
If a dimensionless conformation $\mathbf A=H(T)\C/(k_BT)$ is introduced before the thermal reduction, then
\begin{equation}
 \overset{\nabla(p|d)}{\mathbf A}
 =-\frac1\lambda(\mathbf A-\I)
 +\mathbf A\,\Dp\ln\frac{H(T)}{T}.
 \label{eq:A-thermal}
\end{equation}
The standard isothermal equation is obtained when the corresponding thermal term vanishes. The thermal term disappears identically for an ideal entropic spring with $H\propto T$, and it disappears for any $H(T)$ after the isothermal reduction.
\end{remark}

\subsection{Density-weighted conformation and a Truesdell-type rate}

Define the conformation-density tensor
\begin{equation}
 \mathbf M=n\C.
 \label{eq:Mdef}
\end{equation}
Using \eqref{eq:n-balance},
\begin{equation}
 \Dp\mathbf M-\Lgrad_d\mathbf M-\mathbf M\Lgrad_d^{\sf T}
 +(\Div\bm v_p)\mathbf M
 =n\overset{\nabla(p|d)}{\C}.
 \label{eq:truesdell-M}
\end{equation}
Thus the density-weighted state carries a Truesdell-type volumetric correction. Accordingly, the upper-convected and Truesdell-type forms may represent the same transport expressed in specific and density-weighted conformation variables.

\section{Objective-rate selection as a downstream constitutive test}
\label{sec:objective-rates}

The preceding kinetic derivation selects the affine two-velocity upper-convected operator for the chosen dumbbell transport. The entropy balance supplies an independent downstream compatibility test: changing the objective rate while retaining the same free energy, Kramers stress and reversible balance channels must not leave a sign-indefinite power. To make this test explicit, consider the phenomenological modification
\begin{equation}
 \mathbf K_a=\W_d+a\D_d,
 \qquad
 \W_d=\tfrac12(\Lgrad_d-\Lgrad_d^{\sf T}),
 \label{eq:Ka}
\end{equation}
and replace $\Lgrad_d\bm r$ in \eqref{eq:phi-balance} by $\mathbf K_a\bm r$. The second-moment rate becomes
\begin{equation}
 \overset{\diamond a}{\C}
 =\Dp\C-\W_d\C+\C\W_d-a(\D_d\C+\C\D_d).
 \label{eq:GS-C}
\end{equation}
In this convention $a=1$ is upper convected, $a=0$ is Jaumann/corotational, and $a=-1$ is lower convected.

Since $\taup$ is symmetric, the configurational free-energy power becomes
\begin{equation}
 \taup:\mathbf K_a=a\taup:\D_d.
 \label{eq:tau-Ka}
\end{equation}
Hence the reduced configurational free-energy identity carries the signed term
\begin{equation}
 -\frac{a}{T}\taup:\D_d.
 \label{eq:supply-a}
\end{equation}
If the mechanical Kramers stress transmitted through the momentum balances is kept equal to $\taup$, its mechanical power is $+\taup:\D_d/T$. The total entropy balance therefore contains the remainder
\begin{equation}
 \frac{1-a}{T}\taup:\D_d.
 \label{eq:indefinite-remainder}
\end{equation}

\begin{proposition}[Rate--stress compatibility test]
\label{prop:rate-test}
For the dumbbell/conformation state space with Kramers stress \eqref{eq:kramers-stress}, suppose the connector deformation is replaced by \eqref{eq:Ka} while the same Kramers stress is retained as the complete reversible polymer elastic stress. Assume the local test-process class is rich enough that, at a fixed non-equilibrium state with nonzero relevant component of $\taup$, the conjugate component of $\D_d$ can take both signs. If this remainder is required to vanish as an unmatched reversible power in the selected mechanism decomposition, and no additional reversible balance contribution is introduced, then admissibility within that constitutive framework requires
\begin{equation}
 a=1.
 \label{eq:a1}
\end{equation}
\end{proposition}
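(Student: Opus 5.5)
The plan is to redo the total entropy exploitation of \Cref{thm:cancellation} with the modified connector deformation $\mathbf K_a$ in place of $\Lgrad_d$. I will then show that the only new contribution is the sign-indefinite remainder \eqref{eq:indefinite-remainder}, which must vanish under the stated richness hypothesis.

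\textbf{Step 1: the configurational power term.} With $\Lgrad_d\bm r$ replaced by $\mathbf K_a\bm r$ in \eqref{eq:phi-balance}, the proof of \Cref{prop:free-energy-identity} goes through verbatim. The only change is that the deformation power becomes $\taup:\mathbf K_a$. Since $\taup$ is symmetric and $\W_d$ is skew, $\taup:\W_d=0$, which gives \eqref{eq:tau-Ka}. At conformation level the same statement follows from \eqref{eq:GS-C} together with the coaxiality $\C\Zc=\Zc\C$ established before \eqref{eq:general-stress-Z-early}:
\[
-\Zc:\overset{\diamond a}{\C}+\Zc:\Dp\C=\Zc:\big(\W_d\C-\C\W_d\big)+a\,\Zc:(\D_d\C+\C\D_d).
\]
In this expression the spin part vanishes because $\Zc$ and $\C$ commute. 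The deformation part equals $a\,\taup:\D_d$. Hence $-\Zc:\Dp\C=-a\,\taup:\D_d-\Zc:\mathbf R_C$.

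\textbf{Step 2: the momentum side.} Keep $\T_p^{\rm el}+\T_s^{\rm el}$ as the shares \eqref{eq:stress-division} of the unchanged $\taup$, in the gauge \eqref{eq:elastic-interaction-gauge}. By \eqref{eq:stress-power-identity}, the momentum balances then contribute $+\taup:\D_d$. Inserting both contributions into \eqref{eq:exact-entropy-identity}, and using \Cref{prop:convective-cancellation} and \eqref{eq:ordinary-reversible-cancellation} as before, gives
\[
T\zeta=T\zeta_{\rm mix}+T\zeta_C+(1-a)\,\taup:\D_d .
\]

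\textbf{Step 3: the sign argument.} The remainder $(1-a)\taup:\D_d$ is not anchored to any balance flux or production in the sense of Principle~\ref{pr:V}. Every other factor is already assigned to its own mechanism in the anchoring table, and by hypothesis no additional reversible channel is supplied. Principle~IV therefore requires this remainder either to vanish or to be nonnegative as part of a selected mechanism; under the proposition's hypothesis it is required to vanish. To see that positivity cannot rescue it either, fix a non-equilibrium state with a nonzero component of $\taup$. At fixed state, $\zeta_{\rm mix}$ and $\zeta_C$ do not depend on the sign of the conjugate component of $\D_d$. I will vary that component with both signs while holding the remaining gradients fixed (for instance $\D_s=\D_p$, so that the viscous terms are even). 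The linear term $(1-a)\taup:\D_d$ then takes both signs, and scaling the variation shows it dominates the quadratic viscous terms for small amplitude. Hence $1-a=0$.

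\textbf{Main obstacle.} The main difficulty is Step 3: showing that the local test class genuinely separates the linear remainder from the quadratic viscous productions. I will handle this by scaling $\D_d\mapsto\varepsilon\D_d$ with $\varepsilon\to0^\pm$, relying on the richness hypothesis stated in the proposition.
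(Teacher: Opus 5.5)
Your Steps 1--2 reproduce the paper's derivation of the remainder $(1-a)\taup:\D_d/T$. The paper does this at kinetic level through $\taup:\mathbf K_a=a\taup:\D_d$; your extra conformation-level check via $\C\Zc=\Zc\C$ is also correct. The first sentence of Step 3 is essentially the paper's whole proof. A term linear in $\D_d$ with nonzero coefficient at a fixed state takes both signs under the richness hypothesis. By Principle IV(ii) it therefore cannot stand as a dissipative mechanism, and with no reversible partner its coefficient $1-a$ must vanish.

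The second half of Step 3, which you single out as the main obstacle, does not work. At the fixed non-equilibrium state you choose, $\zeta_C$ is not small. For the Hookean law, $T\zeta_C=\frac{nk_BT}{2\lambda}\tr(\mathbf A+\mathbf A^{-1}-2\I)>0$ is a constant independent of $\varepsilon$. Take $\nabla T=\bm 0$, $\bm w=\bm 0$ and $\D_s=\D_p=\varepsilon\D$. Then $T\zeta(\varepsilon)=T\zeta_C+\varepsilon(1-a)\taup:\D+\varepsilon^2Q(\D)$, with $Q\ge0$ the viscous quadratic form. This tends to $T\zeta_C>0$ as $\varepsilon\to0^\pm$, so dominating the quadratic viscous terms at small amplitude produces no sign violation. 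For $Q(\D)>0$, a negative value is reachable only if $(1-a)^2(\taup:\D)^2>4\,T\zeta_C\,Q(\D)$. That condition depends on the viscosities, the relaxation time and the state, and need not hold. Your claim that ``positivity cannot rescue it'' is therefore not established. The paper in fact explicitly disclaims that the bare inequality $\zeta\ge0$ forces $a=1$; see the remark on the scope of the Gordon--Schowalter test.

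Drop that paragraph and the scaling discussion. Draw $a=1$ solely from mechanism-wise nonnegativity of the isolated remainder, or from the stated hypothesis that the unmatched reversible power vanish. The richness hypothesis alone suffices for that. No comparison with the other productions is needed, and no restriction such as $\D_s=\D_p$.
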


\begin{proof}
The uncancelled term \eqref{eq:indefinite-remainder} is linear in $\D_d$. By the stated richness hypothesis its contraction with the fixed non-equilibrium stress can be made positive or negative while the state is held fixed. It cannot therefore be retained as a dissipative mechanism. If no additional reversible balance term cancels it, its coefficient must vanish.
\end{proof}

\begin{remark}[Scope of the Gordon--Schowalter test]
The proposition excludes the combination of the same dumbbell free energy, the same Kramers stress, no additional reversible term and a non-affine rate $a\ne1$. A Jaumann rate may be associated with a different internal state, as in stress-based Maxwell thermodynamics, and a lower-convected rate arises for dual or inverse variables. The entropy principle tests a constitutive package, not the type of objective derivative in isolation. In particular, this is not a theorem that the bare inequality $\zeta\ge0$ forces $a=1$: allowing different dissipative couplings or regrouping the remainder with other terms changes the constitutive test.
\end{remark}

\begin{remark}[Modified Gordon--Schowalter constructions]
Equation \eqref{eq:indefinite-remainder} requires either a modified reversible stress or an additional reversible balance contribution. Specifically, the mechanically transmitted reversible stress may be reduced to $a\taup$, or an additional modeled reversible channel must supply exactly the complementary power $(a-1)\taup:\D_d$. An interconstituent interaction force alone cannot provide such a cancellation in states with $\bm w=\bm0$; any alternative construction must therefore identify a balance variable and power pairing capable of carrying the required term. Such extra physics can represent non-affine network slip or another structural mechanism. This qualification is consistent with the slip-tensor thermodynamic construction of Peters--Baaijens~\cite{PetersBaaijens1997}, in which changing the kinematics changes the energy equation as part of the same constitutive package. The rate parameter cannot be changed while all other constitutive structures are frozen.
\end{remark}

\subsection{Inverse conformation and lower-convected transport}

Let $\mathbf B=\C^{-1}$. Differentiating $\mathbf B\C=\I$ gives
\begin{equation}
 \Dp\mathbf B=-\mathbf B(\Dp\C)\mathbf B.
 \label{eq:inverse-derivative}
\end{equation}
If $\C$ satisfies the two-velocity upper-convected transport, then
\begin{equation}
 \Dp\mathbf B+\Lgrad_d^{\sf T}\mathbf B+\mathbf B\Lgrad_d
 =-\mathbf B\,
 \overset{\nabla(p|d)}{\C}\,\mathbf B.
 \label{eq:inverse-lower}
\end{equation}
Thus the same microstructure is upper-convected in $\C$ and lower-convected in the inverse variable $\mathbf B$; admissibility of a tensor rate cannot be assessed independently of the chosen state variable.

\section{Deformation velocity and elastic-stress decomposition as a Class-II compatibility choice}
\label{sec:stress-division}

The parameter $\alpha$ in \eqref{eq:Ld-alpha} has a mechanical counterpart: cancellation of reversible configurational power requires the same weights in the elastic-stress decomposition \eqref{eq:stress-division}. The same matching between deformation weights and elastic-stress shares is also found in two-fluid formulations based on rheological-power equality and nonlinear hydrodynamics~\cite{Tanaka1997,PleinerHarden2004}.

Under the entropy-invariant Class-II to Class-I reduction of \Cref{sec:classI-reduction}, the same parameter enters
$\beta=(\alpha-\omega_p)/M_{sp}$ and therefore the inherited configurational-stress term $-\beta\Div\taup$ in the diffusion force \eqref{eq:classI-diffusion-force}. Proposition~\ref{prop:classI-gauge-covariance} shows that the resulting diffusion force is invariant under coordinated reversible stress--interaction representation changes. Zhou--Doi likewise obtain deformation-velocity-dependent stress contributions to polymer diffusion from Onsager's principle~\cite{ZhouDoi2022}. The matching relation is therefore a compatibility condition for the specified Class-II constitutive representation, not a uniqueness statement for the elastic-stress decomposition in all two-fluid models.

\section{Entropy-invariant Class-II to Class-I reduction}
\label{sec:classI-reduction}

The Class-II parent model contains more momentum information than a Class-I mixture. The reduction follows the entropy-invariant Class-II to Class-I construction for molecular mixtures~\cite{BotheDreyer2015}. Its constitutive formulas are obtained by algebraic identification, without an asymptotic estimate. This does not make the reduction an exact change of dynamical variables: the interaction-force identification below eliminates relative acceleration, and the standard reduced momentum/energy architecture additionally omits quadratic relative inertia. Throughout this section, ``for every thermodynamic process'' means every local process satisfying the algebraic Class-I/Class-II identification conditions stated below; it does not mean that an arbitrary unconstrained Class-II solution is dynamically equivalent to a Class-I solution. The partial mass balances and the mixture internal-energy balance are identified between the two model classes, the constitutive quantities are transformed accordingly, and the entropy flux is chosen so that
\begin{equation}
 \zeta^{\rm I}\equiv\zeta^{\rm II}
 \qquad\text{for every thermodynamic process.}
 \label{eq:entropy-invariant-reduction}
\end{equation}
After the relative-momentum information has been replaced by the constitutive identification, the Class-I momentum balance differs from the exact summed momentum balance by the quadratic diffusion-inertia flux. On the identification manifold, that flux and relative kinetic-energy storage form a reversible pair. They are omitted together rather than interpreted as missing entropy production. Neither this conditional entropy identity nor the inertia bookkeeping is a convergence theorem, and both are distinct from the subsequent zero-diffusion limit.

\subsection{Algebraic change of variables}

Introduce the single independent binary diffusion flux
\begin{equation}
 \bm j\eqdef\bm J_p=-\bm J_s,
 \qquad
 M_{sp}=\frac{\rho_s\rho_p}{\rho}.
 \label{eq:classI-j-def}
\end{equation}
Then the Class-II constituent velocities are represented algebraically by the Class-I variables as
\begin{equation}
 \bm v_p=\bm v+\frac{\bm j}{\rho_p},
 \qquad
 \bm v_s=\bm v-\frac{\bm j}{\rho_s},
 \qquad
 \bm w=\frac{\bm j}{M_{sp}}.
 \label{eq:classI-velocity-identification}
\end{equation}
Equation \eqref{eq:classI-velocity-identification} is algebraic. The two partial mass balances therefore become
\begin{align}
 \partial_t\rho_p+\Div(\rho_p\bm v+\bm j)&=0,\notag\\
 \partial_t\rho_s+\Div(\rho_s\bm v-\bm j)&=0.
 \label{eq:classI-mass}
\end{align}
For later use set
\begin{align}
 \bm u_p&=\frac{\bm j}{\rho_p},
 &\bm u_s&=-\frac{\bm j}{\rho_s},
 \label{eq:classI-diffusion-velocities}\\
 \beta&\eqdef \frac{\alpha}{\rho_p}-\frac{1-\alpha}{\rho_s}
 =\frac{\alpha-\omega_p}{M_{sp}}.
 \label{eq:classI-beta}
\end{align}
The polymer transport velocity and the deformation velocity inherited from Class II are consequently
\begin{equation}
 \bm v_p=\bm v+\bm u_p,
 \qquad
 \bm v_d=\alpha\bm v_p+(1-\alpha)\bm v_s
 =\bm v+\beta\bm j.
 \label{eq:classI-vp-vd}
\end{equation}
Define the corresponding deformation tensors
\begin{align}
 \D_p^{\rm I}&=\sym\nabla\!\left(\bm v+\frac{\bm j}{\rho_p}\right),\notag\\
 \D_s^{\rm I}&=\sym\nabla\!\left(\bm v-\frac{\bm j}{\rho_s}\right),\notag\\
 \D_d^{\rm I}&=\sym\nabla(\bm v+\beta\bm j).
 \label{eq:classI-deformation-tensors}
\end{align}
The superscript ``I'' indicates that these are no longer independent constituent rates: they are algebraic differential expressions in the Class-I variables $(\rho_s,\rho_p,\bm v,\bm j)$.

The conformation equation is therefore not yet a one-velocity upper-convected law.  The exact Class-I image of \eqref{eq:C-general-kinematics-early} is
\begin{equation}
 \partial_t\C
 +\left(\bm v+\frac{\bm j}{\rho_p}\right)\!\cdot\nabla\C
 -\nabla(\bm v+\beta\bm j)\C
 -\C\nabla(\bm v+\beta\bm j)^{\sf T}
 =\mathbf R_C.
 \label{eq:classI-conformation}
\end{equation}
Polymer mass is transported with $\bm v+\bm j/\rho_p$, whereas conformation is deformed with $\bm v+\beta\bm j$. The specialization $\bm j=\bm0$ collapses both velocities to $\bm v$.

\subsection{Momentum balance and the diffusion-inertia remainder}

Summing the two Class-II momentum balances gives the exact identity
\begin{equation}
 \partial_t(\rho\bm v)
 +\Div\!\left(\rho\bm v\otimes\bm v
 +\frac{\bm j\otimes\bm j}{M_{sp}}\right)
 =\Div(\T_s+\T_p).
 \label{eq:exact-summed-momentum-classI-vars}
\end{equation}
Indeed,
\[
 \rho_s\bm v_s\otimes\bm v_s+\rho_p\bm v_p\otimes\bm v_p
 =\rho\bm v\otimes\bm v+\frac{\bm j\otimes\bm j}{M_{sp}}.
\]
Define the quadratic relative-inertia tensor and associated kinetic-energy density by
\begin{equation}
 \mathbf R_d\eqdef\frac{\bm j\otimes\bm j}{M_{sp}}
 =M_{sp}\bm w\otimes\bm w,
 \qquad
 K_d\eqdef\frac{|\bm j|^2}{2M_{sp}}
 =\frac12M_{sp}|\bm w|^2.
 \label{eq:relative-inertia-pair}
\end{equation}
The Class-II kinetic-energy density then decomposes exactly as
\begin{equation}
 \frac12\rho_s|\bm v_s|^2+\frac12\rho_p|\bm v_p|^2
 =\frac12\rho|\bm v|^2+K_d.
 \label{eq:kinetic-energy-decomposition-classI}
\end{equation}
The momentum correction $\mathbf R_d$ and the relative kinetic-energy storage $K_d$ are of the same quadratic order and form the relative-inertia pair analyzed below.

The energy-balance identification gives the Class-I stress
\begin{equation}
 \T^{\rm I}\eqdef\T_s+\T_p
 =-p\I+\taup+\T_s^d+\T_p^d,
 \label{eq:classI-total-stress}
\end{equation}
and the Class-I momentum equation is
\begin{equation}
 \partial_t(\rho\bm v)+\Div(\rho\bm v\otimes\bm v)
 =\Div\T^{\rm I}.
 \label{eq:classI-momentum}
\end{equation}
Hence \eqref{eq:classI-momentum} differs from the exact summed Class-II momentum equation only by
$\Div\mathbf R_d$, which is quadratic in the diffusion velocity.  This is the momentum-flux truncation after the independent relative-momentum dynamics has been eliminated; it is not the only loss of dynamical information in the reduction and does not set the diffusion flux to zero.

For the diagonal partial viscous closures \eqref{eq:partial-visc}, the inherited Class-I stress is explicitly
\begin{equation}
 \T^{\rm I}=-p\I+\taup
 +\T_p^d(\D_p^{\rm I})+\T_s^d(\D_s^{\rm I}).
 \label{eq:classI-stress-closure}
\end{equation}
Thus the Class-I stress generally contains gradients of the diffusion flux. This produces a higher-order PDE structure unless additional restrictions are imposed on the partial viscosities.

\subsection{Exact identification of the internal-energy balance}

Let
\begin{equation}
 \bm\Psi
 \eqdef \T_p\bm u_p+\T_s\bm u_s
 =\left(\frac{\T_p}{\rho_p}-\frac{\T_s}{\rho_s}\right)\bm j.
 \label{eq:Psi-classI}
\end{equation}
Because the ordinary reversible stresses are $-\omega_\alpha p\I$, their contribution to $\bm\Psi$ cancels identically.  Hence
\begin{equation}
 \bm\Psi
 =\left[
 \beta\taup
 +\frac{\T_p^d}{\rho_p}-\frac{\T_s^d}{\rho_s}
 \right]\bm j.
 \label{eq:Psi-nonhydrostatic}
\end{equation}
Using symmetry of the stresses,
\begin{equation}
 \sum_{\alpha=s,p}\T_\alpha:\D_\alpha
 =\T^{\rm I}:\D
 +\Div\bm\Psi
 -\sum_{\alpha=s,p}\bm u_\alpha\cdot\Div\T_\alpha.
 \label{eq:classI-stress-power-split}
\end{equation}
Substitution into the exact Class-II internal-energy balance \eqref{eq:internal-energy-exact} shows that it becomes the Class-I internal-energy balance
\begin{equation}
 \partial_t u
 +\Div\!\left(u\bm v+\bm j_u^{\rm I}\right)
 =\T^{\rm I}:\D
 \label{eq:classI-internal-energy}
\end{equation}
with
\begin{equation}
 \bm j_u^{\rm I}\eqdef\bm j_u-\bm\Psi,
 \label{eq:classI-internal-energy-flux}
\end{equation}
under the vectorial Class-II--to--Class-I interaction-force identification
\begin{equation}
 \bm m_p
 =-\Div\T_p+\omega_p\Div\T^{\rm I}.
 \label{eq:classI-energy-force-identification}
\end{equation}
Indeed, $\bm u_p=\omega_s\bm w$ and $\bm u_s=-\omega_p\bm w$ give
\[
 \sum_\alpha\bm u_\alpha\cdot\Div\T_\alpha
 =\bm w\cdot\big(\omega_s\Div\T_p-\omega_p\Div\T_s\big),
\]
so \eqref{eq:classI-energy-force-identification} makes the residual bracket
$\sum_\alpha\bm u_\alpha\cdot\Div\T_\alpha+\bm m_p\cdot\bm w$
vanish identically.  We adopt \eqref{eq:classI-energy-force-identification} as a \emph{vectorial constitutive identification} between the two model classes.

It is important not to infer more from scalar energy-power equality than it contains.  Without an additional constitutive restriction, requiring only
\[
 \bm w\cdot\Big[
 \omega_s\Div\T_p-\omega_p\Div\T_s+\bm m_p
 \Big]=0
\]
for every realized process does not by itself imply that the bracketed vector vanishes: a power-orthogonal contribution $\bm g(\bm w)$ satisfying $\bm g(\bm w)\cdot\bm w=0$ is invisible to this scalar identity.  The converse implication is valid, for example, when the admissible constitutive class excludes such power-orthogonal interaction terms (or when the interaction vector is held fixed while $\bm w$ is independently varied in the local test-process argument).  No such converse is needed below: equation~\eqref{eq:classI-energy-force-identification} is part of the stated reduction map, and all subsequent Class-I constitutive formulas are conditional on that identification.

\begin{proposition}[Dynamical meaning of the force identification]
\label{prop:relative-acceleration-identification}
On a parent Class-II balance process, define $\bm a_\alpha=(\partial_t+\bm v_\alpha\cdot\nabla)\bm v_\alpha$ and
\begin{equation}
 \bm{\mathcal A}=\omega_s\Div\T_p-\omega_p\Div\T_s+\bm m_p.
 \label{eq:relative-force-residual}
\end{equation}
Then
\begin{equation}
 \bm{\mathcal A}=M_{sp}(\bm a_p-\bm a_s).
 \label{eq:relative-acceleration-residual}
\end{equation}
Thus \eqref{eq:classI-energy-force-identification} is equivalent to $\bm a_p=\bm a_s$ on parent processes. It is a constitutive reduction condition, not an identity obeyed by arbitrary Class-II solutions.
\end{proposition}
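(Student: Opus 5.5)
The plan is to rewrite each partial momentum balance \eqref{eq:partial-momentum} in nonconservative form and then form the mass-weighted difference that defines $\bm{\mathcal A}$.

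First, expand the left-hand side of \eqref{eq:partial-momentum} with the product rule and use the partial mass balance \eqref{eq:partial-mass}. The terms proportional to $\bm v_\alpha$ combine into $\bm v_\alpha\big[\partial_t\rho_\alpha+\Div(\rho_\alpha\bm v_\alpha)\big]$, which vanishes. This leaves
\[
 \rho_\alpha\bm a_\alpha=\Div\T_\alpha+\bm m_\alpha,
 \qquad \alpha\in\{s,p\}.
\]
Together with $\bm m_s=-\bm m_p$ from \eqref{eq:momentum-exchange}, this gives $\bm a_p=(\Div\T_p+\bm m_p)/\rho_p$ and $\bm a_s=(\Div\T_s-\bm m_p)/\rho_s$.

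Second, multiply the difference $\bm a_p-\bm a_s$ by $M_{sp}=\rho_s\rho_p/\rho$. Since $M_{sp}/\rho_p=\omega_s$ and $M_{sp}/\rho_s=\omega_p$, this yields
\[
 M_{sp}(\bm a_p-\bm a_s)
 =\omega_s\Div\T_p+\omega_s\bm m_p-\omega_p\Div\T_s+\omega_p\bm m_p .
\]
Because $\omega_s+\omega_p=1$, the interaction terms collapse to $\bm m_p$. The right-hand side is therefore exactly $\bm{\mathcal A}$ in \eqref{eq:relative-force-residual}, which proves \eqref{eq:relative-acceleration-residual}.

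Third, relate $\bm{\mathcal A}$ to the identification \eqref{eq:classI-energy-force-identification}. Substitute $\T^{\rm I}=\T_s+\T_p$, which holds as the definition \eqref{eq:classI-total-stress}. The identification then reads
\[
 \bm m_p=-\Div\T_p+\omega_p\Div\T_s+\omega_p\Div\T_p
 =-\omega_s\Div\T_p+\omega_p\Div\T_s ,
\]
which is exactly $\bm{\mathcal A}=\bm0$. Since $M_{sp}>0$ on the interior state domain (both densities positive), $\bm{\mathcal A}=\bm0$ holds if and only if $\bm a_p=\bm a_s$. This gives the asserted equivalence.

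The closing remark in the statement then follows directly. A generic Class-II solution has independent partial accelerations, so $\bm{\mathcal A}=\bm0$ is a restriction on processes and not an identity.

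There is no genuine obstacle. The only points needing care are the consistent sign of $\bm m_s=-\bm m_p$ and keeping the definition $\T^{\rm I}=\T_s+\T_p$ separate from its later closure form.
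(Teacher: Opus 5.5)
Your proposal is correct and follows the paper's own argument: write the partial momentum balances as $\rho_p\bm a_p=\Div\T_p+\bm m_p$ and $\rho_s\bm a_s=\Div\T_s-\bm m_p$, weight them by $\omega_s$ and $\omega_p$ (equivalently, multiply $\bm a_p-\bm a_s$ by $M_{sp}$), and use $\omega_s+\omega_p=1$. Your explicit derivation of the nonconservative form and your check that the identification with $\T^{\rm I}=\T_s+\T_p$ is exactly $\bm{\mathcal A}=\bm 0$ (with $M_{sp}>0$) just spell out steps the paper leaves implicit.
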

\begin{proof}
The partial momenta give $\rho_p\bm a_p=\Div\T_p+\bm m_p$ and $\rho_s\bm a_s=\Div\T_s-\bm m_p$. Multiply these by $\omega_s$ and $\omega_p$, respectively, subtract, and use $\omega_s\rho_p=\omega_p\rho_s=M_{sp}$.
\end{proof}

\begin{remark}[A homogeneous relative-motion transient]
In nondimensional variables take spatially uniform fields with $\rho_p=\rho_s=\gamma=1$ and no external forces. The partial momenta give $\dot{\bm w}=-2\bm w$, hence $\bm w(t)=\bm w(0)e^{-2t}$, although $\Div\mathbf R_d=\bm0$ throughout. The reduced force is $\bm F^{\rm I}=\bm0$, so the algebraic Class-I law instead imposes $\bm w=\bm0$. This elementary mechanical example shows that omitting $\Div\mathbf R_d$ alone does not produce the Class-I dynamics. For the parent transient, $\dot K_d=-|\bm w|^2$ and the same power appears as positive drag conversion in the internal-energy balance.
\end{remark}

Before imposing that condition, the exact stress-power split gives
\begin{equation}
 \partial_tu+\Div(u\bm v+\bm j_u-\bm\Psi)
 =\T^{\rm I}:\D-\bm w\cdot\bm{\mathcal A}.
 \label{eq:internal-energy-before-identification}
\end{equation}
The discarded internal-energy exchange is therefore $-M_{sp}\bm w\cdot(\bm a_p-\bm a_s)$. Its smallness cannot be inferred from $\mathbf R_d=O(|\bm w|^2)$ without also controlling relative accelerations.

Since $\bm j_u=\bm q+(h_p-h_s)\bm j$, define the reduced Class-I heat flux by
\begin{equation}
 \bm j_u^{\rm I}
 =\bm q^{\rm I}+(h_p-h_s)\bm j.
 \label{eq:classI-reduced-heat-def}
\end{equation}
Equations \eqref{eq:classI-internal-energy-flux} and \eqref{eq:classI-reduced-heat-def} then give the exact constitutive identification
\begin{equation}
 \bm q^{\rm I}=\bm q-\bm\Psi.
 \label{eq:classI-heat-identification}
\end{equation}
The reduced heat flux therefore contains the inherited stress--diffusion correction $-\bm\Psi$.

\begin{proposition}[Relative-inertia consistency]
\label{prop:relative-inertia-consistency}
Assume the energy-identified Class-I stress \eqref{eq:classI-total-stress} and the exact reduced internal-energy balance \eqref{eq:classI-internal-energy}.  If one nevertheless retains the exact summed momentum flux $\mathbf R_d$ from \eqref{eq:exact-summed-momentum-classI-vars}, then the barycentric kinetic-energy balance contains the reversible power $\mathbf R_d:\D$.  Consequently the energy density $u+\frac12\rho|\bm v|^2$ has the residual $\mathbf R_d:\D$ in its natural total-energy balance. Recovering the parent conservative architecture requires the relative storage $K_d$ and associated transport, or another explicitly specified energy modification carrying the same residual.  Thus simply adding $\mathbf R_d$ to the standard Class-I momentum equation while leaving the Class-I energy architecture unchanged does not recover the exact parent balance structure. This statement concerns consistency of the reduced balance architecture after the energy identification; it does not reintroduce the discarded relative-momentum balance as an independent evolution equation for $\bm j$.
\end{proposition}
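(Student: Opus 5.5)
We take the barycentric momentum equation with the retained flux,
\[
 \partial_t(\rho\bm v)+\Div(\rho\bm v\otimes\bm v+\mathbf R_d)=\Div\T^{\rm I},
\]
and derive its kinetic-energy balance, then add it to the reduced internal-energy balance \eqref{eq:classI-internal-energy}. We dot the momentum equation with $\bm v$ and use the barycentric mass balance $\partial_t\rho+\Div(\rho\bm v)=0$, which follows by summing \eqref{eq:classI-mass}. This gives
\[
 \partial_t\big(\tfrac12\rho|\bm v|^2\big)+\Div\big(\tfrac12\rho|\bm v|^2\bm v\big)
 =\bm v\cdot\Div\T^{\rm I}-\bm v\cdot\Div\mathbf R_d .
\]
We then use the product rules
\[
 \bm v\cdot\Div\T^{\rm I}=\Div(\T^{\rm I}\bm v)-\T^{\rm I}:\D,
 \qquad
 \bm v\cdot\Div\mathbf R_d=\Div(\mathbf R_d\bm v)-\mathbf R_d:\D .
\]
Both rely on the symmetry of $\T^{\rm I}$, which holds because each $\T_\alpha$ is symmetric, and of $\mathbf R_d=\bm j\otimes\bm j/M_{sp}$. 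The kinetic-energy balance therefore carries the source $-\T^{\rm I}:\D+\mathbf R_d:\D$ together with the divergence $\Div(\T^{\rm I}\bm v-\mathbf R_d\bm v)$.

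Next we add \eqref{eq:classI-internal-energy}. The stress power $\T^{\rm I}:\D$ cancels exactly, which is the purpose of the energy identification \eqref{eq:classI-total-stress}. The result is
\[
 \partial_t\big(u+\tfrac12\rho|\bm v|^2\big)
 +\Div\big[(u+\tfrac12\rho|\bm v|^2)\bm v+\bm j_u^{\rm I}-\T^{\rm I}\bm v+\mathbf R_d\bm v\big]
 =\mathbf R_d:\D .
\]
This is the claimed residual. The term is reversible in the sense of Principle~III: $\mathbf R_d$ is quadratic in $\bm j$ and has even parity, and $\D$ has odd parity. It is linear in $\D$ at fixed state, so under the richness hypothesis of the local-test definition it takes both signs. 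It therefore can be neither dissipation nor a legitimate energy source, and the architecture is not conservative.

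To show what restores conservation, we compare with the parent total energy. By \eqref{eq:kinetic-energy-decomposition-classI}, $E=u+\tfrac12\rho|\bm v|^2+K_d$, and $E$ obeys the conservation law \eqref{eq:total-energy}. Subtracting the balance just derived shows that $\partial_tK_d$ plus a flux divergence must supply $-\mathbf R_d:\D$, up to the terms removed by the force identification. We make this explicit only at the level of bookkeeping. Writing $K_d=\tfrac12\tr\mathbf R_d$, the quantity $\mathbf R_d:\D$ is exactly the production that $K_d$ acquires under barycentric straining, once the identified parent process with $\bm a_p=\bm a_s$ is used (Proposition~\ref{prop:relative-acceleration-identification}). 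Hence either $K_d$ with its transport is reinstated, or some other explicitly specified energy term must carry $-\mathbf R_d:\D$.

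Last, the negative claim. If $\mathbf R_d$ is added to \eqref{eq:classI-momentum} while $u$, $\bm j_u^{\rm I}$ and the total energy density are left unchanged, the displayed residual persists. Since $\mathbf R_d:\D\not\equiv0$ on admissible local tests, the parent conservation law cannot hold. None of these steps adds an evolution equation for $\bm j$, which addresses the final disclaimer.

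The main obstacle is the bookkeeping for $K_d$ on the identification manifold. One has to check that its evolution, computed from the parent balances with $\bm{\mathcal A}=\bm0$, produces exactly $-\mathbf R_d:\D$ plus a divergence, with no leftover drag or stress-diffusion terms that \eqref{eq:classI-heat-identification} has already absorbed. If that exact check fails, we fall back on the weaker statement the proposition actually asserts: some explicitly specified energy modification must carry the residual.
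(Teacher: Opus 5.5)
Your proof is correct and follows the paper's route. You contract the exact summed momentum identity with $\bm v$, add \eqref{eq:classI-internal-energy} to isolate $\mathbf R_d:\D$, and then subtract from the parent total-energy balance on processes with $\bm{\mathcal A}=\bm0$ to get the $K_d$ storage law. That subtraction already resolves your ``main obstacle'': both balances are exact on those processes, and the $\bm\Psi$ terms cancel because $\bm j_u=\bm j_u^{\rm I}+\bm\Psi$ and $\sum_\alpha\T_\alpha\bm v_\alpha=\T^{\rm I}\bm v+\bm\Psi$, so the paper only adds the explicit identification of the flux as $K_d\bm v+\bm Q_d$.

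One slip: under the dimensional parity of Principle~III, $\mathbf R_d$ has the same positive parity as a viscous stress, so an even factor contracted with $\D$ is exactly the form Principle~IV assigns to dissipative products. The reversibility of $\mathbf R_d:\D$ rests instead on its sign indefiniteness and on its exchange with $K_d$.
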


\begin{proof}
Using total mass conservation and symmetry of $\T^{\rm I}$ and $\mathbf R_d$, contraction of the exact summed momentum identity with $\bm v$ gives
\begin{equation}
 \begin{aligned}
 \partial_t\left(\frac12\rho|\bm v|^2\right)
 +\Div\left(\frac12\rho|\bm v|^2\bm v\right)
 ={}&\Div\big((\T^{\rm I}-\mathbf R_d)\bm v\big)\\
 &-\T^{\rm I}:\D+\mathbf R_d:\D.
 \end{aligned}
 \label{eq:barycentric-kinetic-with-Rd}
\end{equation}
Adding the exact Class-I internal-energy balance \eqref{eq:classI-internal-energy} yields
\begin{equation}
 \begin{aligned}
 &\partial_t\left(u+\frac12\rho|\bm v|^2\right)\\
 &\quad+\Div\left[
 \left(u+\frac12\rho|\bm v|^2\right)\bm v
 +\bm j_u^{\rm I}
 -(\T^{\rm I}-\mathbf R_d)\bm v
 \right]
 =\mathbf R_d:\D.
 \end{aligned}
 \label{eq:barycentric-total-energy-residual}
\end{equation}
The associated transport can also be displayed explicitly.  Define the relative kinetic-energy flux
\begin{equation}
 \bm Q_d\eqdef\frac12\sum_{\alpha=s,p}\rho_\alpha|\bm u_\alpha|^2\bm u_\alpha
 =\frac{|\bm j|^2}{2}
 \left(\frac{1}{\rho_p^2}-\frac{1}{\rho_s^2}\right)\bm j.
 \label{eq:relative-kinetic-flux}
\end{equation}
Then the constituent kinetic-energy flux satisfies the exact identity
\begin{equation}
 \sum_{\alpha=s,p}\frac12\rho_\alpha|\bm v_\alpha|^2\bm v_\alpha
 =\left(\frac12\rho|\bm v|^2+K_d\right)\bm v
 +\mathbf R_d\bm v+\bm Q_d.
 \label{eq:kinetic-flux-decomposition}
\end{equation}
Using the definition of the parent total-energy flux together with \eqref{eq:classI-internal-energy-flux}, the exact Class-II total-energy balance can therefore be written entirely in Class-I variables as
\begin{equation}
 \begin{aligned}
 &\partial_t\left(u+\frac12\rho|\bm v|^2+K_d\right)\\
 &\quad+\Div\left[
 \left(u+\frac12\rho|\bm v|^2+K_d\right)\bm v
 +\bm j_u^{\rm I}
 -(\T^{\rm I}-\mathbf R_d)\bm v
 +\bm Q_d
 \right]=0.
 \end{aligned}
 \label{eq:parent-total-energy-classI-vars}
\end{equation}
For an unrestricted parent Class-II process, subtraction of barycentric kinetic energy from constituent kinetic energy gives instead
\begin{equation}
 \partial_tK_d+\Div(K_d\bm v+\bm Q_d)
 =\bm w\cdot\bm{\mathcal A}-\mathbf R_d:\D.
 \label{eq:relative-kinetic-general}
\end{equation}
This equation, together with \eqref{eq:internal-energy-before-identification}, displays the equal and opposite relative-acceleration powers before any identification. The full relative dynamics is not a purely reversible subsystem; it includes drag through $\bm{\mathcal A}$. On a parent process that additionally satisfies \eqref{eq:classI-energy-force-identification}, $\bm{\mathcal A}=\mathbf0$. Equivalently, subtracting \eqref{eq:barycentric-total-energy-residual} from \eqref{eq:parent-total-energy-classI-vars} gives the exact complementary storage law
\begin{equation}
 \partial_tK_d+\Div(K_d\bm v+\bm Q_d)=-\mathbf R_d:\D.
 \label{eq:relative-kinetic-storage-law}
\end{equation}
The right-hand side of \eqref{eq:barycentric-total-energy-residual} is therefore sign indefinite and is not a new dissipative mechanism.  The parent kinetic energy differs from the barycentric kinetic energy by $K_d$ according to \eqref{eq:kinetic-energy-decomposition-classI}; on parent Class-II processes satisfying the force identification, equations \eqref{eq:relative-kinetic-flux}--\eqref{eq:relative-kinetic-storage-law} exhibit the missing complementary channel explicitly as relative kinetic storage and transport. Equation~\eqref{eq:relative-kinetic-storage-law} is not asserted as an additional native Class-I evolution law for an arbitrary algebraically closed flux $\bm j$; it records the reversible storage channel that accompanies the exact parent relative inertia.  Restoring that channel requires an enlarged reduced energy density and flux; because $K_d$ depends on $\bm j$, it also introduces time derivatives of the algebraically closed diffusion flux into the total-energy equation.  It is therefore not achieved by merely adding $\mathbf R_d$ to the momentum balance.  After the force identification, the standard Class-I model omits this quadratic storage/transport pair coherently.
\end{proof}

\begin{remark}[Extended single-momentum descendants]
One may keep \eqref{eq:exact-summed-momentum-classI-vars} as an exact single-momentum identity, but then $K_d$ and its transport must also be retained if the total mechanical-energy structure is to remain exact.  Such a model is better regarded as an extended single-momentum descendant with relative-inertia storage, not as the standard Class-I model used here.  Retaining only $\mathbf R_d$ would selectively preserve one consequence of relative inertia while suppressing its conjugate storage channel.
\end{remark}

\subsection{The Class-I diffusion driving force inherited from interaction momentum}

The energy-balance identity \eqref{eq:classI-energy-force-identification} turns the Class-II interaction-force constitutive equation into a Class-I diffusion relation.  Define
\begin{equation}
 \bm X^{\circ}
 \eqdef \bm X_p^{\circ}-\bm X_s^{\circ}
 =\nabla^{\circ}(\mu_p-\mu_s)
 +(\bar s_p-\bar s_s)\nabla T
 \label{eq:classI-X}
\end{equation}
and
\begin{equation}
 \bm\Delta_d
 \eqdef
 \frac{\Div\T_p^d}{\rho_p}
 -\frac{\Div\T_s^d}{\rho_s}.
 \label{eq:classI-Delta-d}
\end{equation}
For constant $\alpha$, the three contributions on the right of \eqref{eq:classI-energy-force-identification} are
\begin{align}
 -\Div\T_{p,0}+\omega_p\Div(\T_{s,0}+\T_{p,0})
 &=-p\nabla\omega_s,
 \label{eq:classI-force-pressure}\\
 -\Div(\alpha\taup)+\omega_p\Div\taup
 &=-M_{sp}\beta\Div\taup,
 \label{eq:classI-force-elastic}\\
 -\Div\T_p^d+\omega_p\Div(\T_p^d+\T_s^d)
 &=-M_{sp}\bm\Delta_d.
 \label{eq:classI-force-viscous}
\end{align}
On the other hand, the fixed Class-II representative gives
\begin{equation}
 \bm m_p
 =-M_{sp}\bm X^{\circ}-p\nabla\omega_s+\bm m_p^d.
 \label{eq:classII-interaction-for-reduction}
\end{equation}
Comparison with \eqref{eq:classI-energy-force-identification} therefore yields the exact algebraic identity
\begin{equation}
 \bm m_p^d=M_{sp}\bm F^{\rm I},
 \label{eq:mpd-classI-force}
\end{equation}
where the inherited Class-I diffusion driving force is
\begin{equation}
 \bm F^{\rm I}
 \eqdef
 \bm X^{\circ}
 -\beta\Div\taup
 -\left(
 \frac{\Div\T_p^d}{\rho_p}
 -\frac{\Div\T_s^d}{\rho_s}
 \right).
 \label{eq:classI-diffusion-force}
\end{equation}
The first term is the ordinary thermo-chemical force.  The second is the configurational-stress contribution inherited from the chosen deformation/stress-decomposition representative.  The third is the partial-viscous-stress contribution familiar from entropy-invariant Class-II to Class-I reduction of ordinary mixtures.

The diagonal Class-II drag closure \eqref{eq:drag-closure} is now transformed, rather than re-postulated.  Since $\bm w=\bm j/M_{sp}$, equations \eqref{eq:drag-closure} and \eqref{eq:mpd-classI-force} give
\begin{equation}
 \bm j
 =-\frac{M_{sp}^2}{\gamma}\,\bm F^{\rm I},
 \qquad \gamma>0.
 \label{eq:classI-diffusion-law}
\end{equation}
The exact parent precursor of this relation is
\begin{equation}
 \bm j=-\frac{M_{sp}^2}{\gamma}
 \left[\bm F^{\rm I}+(\bm a_p-\bm a_s)\right],
 \label{eq:dynamic-diffusion-precursor}
\end{equation}
where $\bm F^{\rm I}$ denotes the same force expression evaluated in the parent fields. Indeed, \eqref{eq:relative-force-residual} and the stress decomposition give $\bm{\mathcal A}=-M_{sp}\bm F^{\rm I}-\gamma\bm w$. Equation~\eqref{eq:classI-diffusion-law} follows after setting the relative-acceleration term to zero. This operation is distinct from dropping the quadratic tensor $\mathbf R_d$ in summed momentum.

Equation~\eqref{eq:classI-diffusion-law} is the binary Class-I diffusion law inherited from the Class-II interaction closure.  It is algebraic in the thermodynamic force but, for general partial viscous closures, is an implicit differential relation for $\bm j$ because $\bm\Delta_d$ contains derivatives of $\T_\alpha^d(\D_\alpha^{\rm I})$.  Thus, for general partial viscous closures, the Class-I diffusion law retains the higher-order stress--diffusion dependence inherited from Class II.

A special case occurs when the dissipative partial stresses are mass-fraction shares of one tensor,
\begin{equation}
 \T_p^d=\omega_p\T^d,
 \qquad
 \T_s^d=\omega_s\T^d,
 \label{eq:classI-mass-fraction-viscous-share}
\end{equation}
then
\begin{equation}
 \bm\Delta_d
 =\left(\frac{\nabla\omega_p}{\rho_p}
 -\frac{\nabla\omega_s}{\rho_s}\right)\!\cdot\T^d,
 \label{eq:classI-Delta-special}
\end{equation}
where $(\nabla\omega)\cdot\T^d$ denotes the vector with components
$\partial_k\omega\,T^d_{ki}$.  In a locally homogeneous-composition state this contribution vanishes.  No such simplification is assumed in the general reduction.

\subsection{Entropy-invariant identification}

The Class-I entropy density is identified with the Class-II entropy density. The entropy flux, however, remains part of the entropy/entropy-flux pair of Principles~I--II and is not fixed solely by the reduced internal-energy flux. Principles~IV--V constrain the production generated by that pair: after reversible contributions have been extracted, it must admit the selected balance-anchored binary-mechanism representation. For the entropy-invariant reduction sought here there is the further requirement that the descendant Class-I production equal the parent production, $\zeta^{\rm I}\equiv\zeta^{\rm II}$. Thus the Class-I entropy flux has to be identified as part of the reduction map rather than assumed a priori. As a first candidate, take the conventional Class-I nonconvective entropy flux associated with \eqref{eq:classI-reduced-heat-def},
\begin{equation}
 \bm\Phi_0^{\rm I}
 =(\bar s_p-\bar s_s)\bm j+\frac{\bm q^{\rm I}}{T},
 \label{eq:classI-entropy-flux-naive}
\end{equation}
Direct exploitation of the Class-I mass balances \eqref{eq:classI-mass} and internal-energy balance \eqref{eq:classI-internal-energy}, using the same free energy and thermodynamic conjugates as in the parent model, then gives the exact preliminary identity
\begin{align}
 T\zeta_0^{\rm I}={}&
 \T^{\rm I}:\D+p\Div\bm v
 -\Zc:\Db\C
 -\bm j\cdot\Big[
 (\nabla\mu_p+\bar s_p\nabla T)
 -(\nabla\mu_s+\bar s_s\nabla T)
 \Big] \notag\\
 &-\frac{1}{T}\bm q^{\rm I}\cdot\nabla T.
 \label{eq:classI-preliminary-entropy}
\end{align}
The corresponding Class-II identity \eqref{eq:exact-entropy-identity}, after only the algebraic substitutions \eqref{eq:classI-velocity-identification}, reads
\begin{align}
 T\zeta^{\rm II}={}&
 \sum_{\alpha=s,p}\T_\alpha:\D_\alpha^{\rm I}
 -\bm m_p\cdot\frac{\bm j}{M_{sp}}
 +p\Div\bm v
 -\Zc:\Db\C \notag\\
 &-\bm j\cdot\Big[
 (\nabla\mu_p+\bar s_p\nabla T)
 -(\nabla\mu_s+\bar s_s\nabla T)
 \Big]
 -\frac{1}{T}(\bm q^{\rm I}+\bm\Psi)\cdot\nabla T.
 \label{eq:classII-entropy-in-classI-vars}
\end{align}
Subtracting \eqref{eq:classI-preliminary-entropy} from \eqref{eq:classII-entropy-in-classI-vars}, and then using \eqref{eq:classI-stress-power-split} together with the energy-force identification \eqref{eq:classI-energy-force-identification}, shows explicitly that the two preliminary productions differ only by the divergence term below.  The required correction follows directly from \eqref{eq:classI-stress-power-split} and \eqref{eq:classI-heat-identification}:
\begin{equation}
 \zeta_0^{\rm I}
 =\zeta^{\rm II}-\Div\!\left(\frac{\bm\Psi}{T}\right).
 \label{eq:classI-zeta-divergence-difference}
\end{equation}
Consequently select the Class-I entropy flux as
\begin{equation}
 \bm\Phi^{\rm I}
 \eqdef\bm\Phi_0^{\rm I}+\frac{\bm\Psi}{T}
 =(\bar s_p-\bar s_s)\bm j
 +\frac{\bm q^{\rm I}+\bm\Psi}{T}.
 \label{eq:classI-entropy-flux-invariant}
\end{equation}
Because $\bm q^{\rm I}+\bm\Psi=\bm q$, this is precisely the Class-II nonconvective entropy flux \eqref{eq:entropy-flux-exact} rewritten in Class-I variables.  Therefore
\begin{equation}
 \zeta^{\rm I}\equiv\zeta^{\rm II}
 \label{eq:classI-zeta-equals-classII}
\end{equation}
identically, not only after constitutive closure and not only to a given asymptotic order.

\begin{remark}[Residual entropy-flux nonuniqueness]
Equation \eqref{eq:classI-zeta-divergence-difference} fixes only the divergence of the entropy-flux correction. The parent-compatible choice \eqref{eq:classI-entropy-flux-invariant} is therefore determined up to an objective admissible vector field $\bm B_\Phi$ with $\Div\bm B_\Phi=0$. Principle~V constrains factors in the production and cannot by itself exclude an addition invisible to that production. Constitutive restrictions, boundary entropy transport and the selected parent flux are needed to fix this residual freedom.
\end{remark}

\begin{theorem}[Entropy-invariant Class-II to Class-I polymer reduction]
\label{thm:entropy-invariant-classI}
Assume the fixed Class-II balance representative used in \Cref{sec:explicit-mixture-gauge}, the conformation transport and reversible-power cancellation of \Cref{thm:cancellation}, and the algebraic identifications \eqref{eq:classI-velocity-identification}, \eqref{eq:classI-total-stress}, \eqref{eq:classI-energy-force-identification}, and \eqref{eq:classI-heat-identification}.  Choose the Class-I entropy flux \eqref{eq:classI-entropy-flux-invariant} selected above.  Then the Class-I entropy production is algebraically identical to the Class-II entropy production for every local thermodynamic process satisfying these identification conditions.  At conformation level it can be written entirely in Class-I variables as
\begin{equation}
 \begin{aligned}
 T\zeta^{\rm I}={}&
 -\frac{\bm q^{\rm I}+\bm\Psi}{T}\cdot\nabla T
 +\T_p^d:\D_p^{\rm I}
 +\T_s^d:\D_s^{\rm I}\\
 &-\bm j\cdot\bm F^{\rm I}
 -\Zc:\mathbf R_C,
 \end{aligned}
 \label{eq:classI-entropy-production}
\end{equation}
with $\bm\Psi$ from \eqref{eq:Psi-nonhydrostatic} and $\bm F^{\rm I}$ from \eqref{eq:classI-diffusion-force}.
\end{theorem}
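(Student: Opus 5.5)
The proof has two parts. First, $\zeta^{\rm I}\equiv\zeta^{\rm II}$ follows from the preceding construction. Second, \eqref{eq:classI-entropy-production} is obtained by rewriting the parent production \eqref{eq:total-zeta-final} in Class-I variables. For the first part, start from the preliminary identity \eqref{eq:classI-preliminary-entropy} with flux $\bm\Phi_0^{\rm I}$ and from the parent identity \eqref{eq:classII-entropy-in-classI-vars}, and subtract them. The chemical, pressure and conformation terms are literally identical, so the difference reduces to
\[
 \sum_\alpha\T_\alpha:\D_\alpha^{\rm I}-\T^{\rm I}:\D-\bm m_p\cdot\bm w-\frac{\bm\Psi\cdot\nabla T}{T}.
\]
Insert the stress-power split \eqref{eq:classI-stress-power-split}. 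Under the force identification \eqref{eq:classI-energy-force-identification} the bracket $\sum_\alpha\bm u_\alpha\cdot\Div\T_\alpha+\bm m_p\cdot\bm w$ vanishes. What remains is $\Div\bm\Psi-\bm\Psi\cdot\nabla T/T=T\Div(\bm\Psi/T)$, which is \eqref{eq:classI-zeta-divergence-difference}. Adding $\bm\Psi/T$ to the flux then gives \eqref{eq:classI-entropy-flux-invariant}, and hence $\zeta^{\rm I}\equiv\zeta^{\rm II}$ on every local process satisfying the identifications. No constitutive sign condition enters this step.

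For the explicit form, I would use Theorem~\ref{thm:cancellation}, which gives $T\zeta^{\rm II}=T\zeta_{\rm mix}-\Zc:\mathbf R_C$ with $\zeta_{\rm mix}$ from \eqref{eq:classII-entropy}. It remains only to translate each term of $T\zeta_{\rm mix}$:
\begin{itemize}
\item The heat term uses $\bm q=\bm q^{\rm I}+\bm\Psi$ from \eqref{eq:classI-heat-identification}.
\item The viscous powers become $\T_\alpha^d:\D_\alpha^{\rm I}$, since $\D_\alpha=\D_\alpha^{\rm I}$ under \eqref{eq:classI-velocity-identification}.
\item The drag term $-\bm m_p^d\cdot\bm w$ is rewritten with the algebraic identity \eqref{eq:mpd-classI-force}, $\bm m_p^d=M_{sp}\bm F^{\rm I}$, together with $\bm w=\bm j/M_{sp}$. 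This gives $-\bm j\cdot\bm F^{\rm I}$.
\end{itemize}
The identity \eqref{eq:mpd-classI-force} itself requires checking \eqref{eq:classI-force-pressure}--\eqref{eq:classI-force-viscous} term by term. This is where constant $\alpha$ is used, so that $-\Div(\alpha\taup)+\omega_p\Div\taup=-(\alpha-\omega_p)\Div\taup=-M_{sp}\beta\Div\taup$. The check also requires comparing with the fixed representative \eqref{eq:classII-interaction-for-reduction}, using the elastic gauge $\bm m_p^{\rm el}=\bm0$.

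The main obstacle is bookkeeping consistency rather than any hard estimate. Two points need care. First, the ordinary reversible part $-p\nabla\omega_s$ must cancel exactly between \eqref{eq:classI-force-pressure} and \eqref{eq:classII-interaction-for-reduction}. Second, the thermo-chemical part $-M_{sp}\bm X^\circ$ must be attributed to $\bm F^{\rm I}$ and not double-counted, since the ordinary cancellation \eqref{eq:ordinary-reversible-cancellation} has already removed its power from $\zeta_{\rm mix}$. I would verify this second point by expanding $-\bm j\cdot\bm F^{\rm I}$ directly against \eqref{eq:classII-entropy-in-classI-vars}:
\begin{itemize}
\item $-\bm j\cdot\bm X^\circ$ reproduces the diffusion term after Proposition~\ref{prop:convective-cancellation} has removed the conformation gradient.
\item $\beta\bm j\cdot\Div\taup$, together with $\Div(\beta\taup\bm j)$ from $\bm\Psi$, reconstructs the elastic powers $\alpha\taup:\D_p^{\rm I}+(1-\alpha)\taup:\D_s^{\rm I}-\taup:\D$. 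This is exactly the term that cancels against $-\Zc:\Dp\C$.
\end{itemize}
This cross-check confirms that no sign-indefinite elastic power survives in \eqref{eq:classI-entropy-production}.
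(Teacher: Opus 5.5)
Your proposal is correct and follows essentially the same route as the paper. You obtain $\zeta_0^{\rm I}=\zeta^{\rm II}-\Div(\bm\Psi/T)$ by subtracting the two preliminary identities and applying \eqref{eq:classI-stress-power-split} with \eqref{eq:classI-energy-force-identification}, you correct the entropy flux by $\bm\Psi/T$, and you then rewrite $\zeta_{\rm mix}+\zeta_C$ term by term using $\D_\alpha=\D_\alpha^{\rm I}$, $\bm m_p^d=M_{sp}\bm F^{\rm I}$, $\bm w=\bm j/M_{sp}$ and $\bm q=\bm q^{\rm I}+\bm\Psi$. Your extra cross-check is not in the paper but is consistent with it: it recovers the parts of $-\bm j\cdot\bm F^{\rm I}$ directly from the Class-I identity plus $\Div(\bm\Psi/T)$, via $\Div(\beta\taup\bm j)-\beta\bm j\cdot\Div\taup=\taup:(\D_d^{\rm I}-\D)$.
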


\begin{proof}
Before reversible extraction, the Class-I entropy identity obtained from \eqref{eq:classI-mass} and \eqref{eq:classI-internal-energy} has the same thermodynamic terms as \eqref{eq:exact-entropy-identity}, but with $\T^{\rm I}:\D$ and $\bm q^{\rm I}$ in place of the constituent stress powers and $\bm q$.  Subtracting it from the Class-II identity, using \eqref{eq:classI-stress-power-split} and \eqref{eq:classI-energy-force-identification}, gives
\[
 T(\zeta^{\rm II}-\zeta_0^{\rm I})
 =\Div\bm\Psi-\frac{\bm\Psi}{T}\cdot\nabla T
 =T\Div\!\left(\frac{\bm\Psi}{T}\right),
\]
which proves \eqref{eq:classI-zeta-divergence-difference}.  The entropy-flux correction in \eqref{eq:classI-entropy-flux-invariant} therefore yields \eqref{eq:classI-zeta-equals-classII}.

It remains to rewrite the already exploited Class-II production.  Equations \eqref{eq:classI-deformation-tensors} give $\D_\alpha=\D_\alpha^{\rm I}$ exactly, while \eqref{eq:mpd-classI-force} and $\bm w=\bm j/M_{sp}$ imply
\[
 -\bm m_p^d\cdot\bm w=-\bm j\cdot\bm F^{\rm I}.
\]
Finally, \eqref{eq:classI-heat-identification} gives $\bm q=\bm q^{\rm I}+\bm\Psi$, and the conformation relaxation term is unchanged.  Substitution into \eqref{eq:classII-entropy} plus \eqref{eq:zetaC-assumption} gives \eqref{eq:classI-entropy-production} term by term.
\end{proof}

\begin{proposition}[Covariance of the entropy-invariant reduction under coordinated representation changes]
\label{prop:classI-gauge-covariance}
Let a Class-II representative be changed according to Proposition~\ref{prop:gauge-transform}, with symmetric tensor field $\mathbf S$, and perform the same algebraic Class-II to Class-I reduction before imposing any new reduced closure.  Then
\begin{align}
 \T^{{\rm I}\prime}&=\T^{\rm I},
 &\bm\Psi'&=\bm\Psi+\mathbf S\bm w,\notag\\
 \bm j_u^{{\rm I}\prime}&=\bm j_u^{\rm I},
 &\bm q^{{\rm I}\prime}&=\bm q^{\rm I}.
 \label{eq:classI-gauge-native-invariants}
\end{align}
The reduced energy--force identification is covariant,
\begin{equation}
 \bm m_p'
 =-\Div\T_p'+\omega_p\Div\T^{{\rm I}\prime},
 \label{eq:classI-gauge-force-covariance}
\end{equation}
and the entropy pair transforms as
\begin{equation}
 \bm\Phi^{{\rm I}\prime}
 =\bm\Phi^{\rm I}+\frac{\mathbf S\bm w}{T},
 \qquad
 \zeta^{{\rm I}\prime}
 =\zeta^{\rm I}+\Div\!\left(\frac{\mathbf S\bm w}{T}\right).
 \label{eq:classI-gauge-entropy-covariance}
\end{equation}
Consequently the entropy-invariant reduction commutes with the coordinated representation change:
\begin{equation}
 \zeta^{\rm I}\equiv\zeta^{\rm II}
 \quad\Longrightarrow\quad
 \zeta^{{\rm I}\prime}\equiv\zeta^{{\rm II}\prime}.
 \label{eq:classI-gauge-commutation}
\end{equation}
If, in addition, $\mathbf S$ is transported entirely within the reversible stress--interaction sector so that $\T_\alpha^d$ and $\bm m_p^d$ are unchanged, then the inherited diffusion force is invariant:
\begin{equation}
 \bm F^{{\rm I}\prime}=\bm F^{\rm I}.
 \label{eq:classI-gauge-force-invariant}
\end{equation}
\end{proposition}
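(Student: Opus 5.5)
The plan is to verify each claim by direct substitution of the transformation \eqref{eq:gauge-transform-mech}--\eqref{eq:gauge-transform-flux} into the reduction formulas, in the order in which they appear.

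\emph{Native invariants.} First, $\T^{{\rm I}\prime}=\T_s'+\T_p'=\T_s+\T_p=\T^{\rm I}$, because the $\pm\mathbf S$ contributions cancel. Next, from \eqref{eq:Psi-classI},
\[
 \bm\Psi'=\bm\Psi+\mathbf S(\bm u_p-\bm u_s).
\]
Since $\bm u_p-\bm u_s=\bm v_p-\bm v_s=\bm w$, this gives $\bm\Psi'=\bm\Psi+\mathbf S\bm w$. Then
\[
 \bm j_u^{{\rm I}\prime}=\bm j_u'-\bm\Psi'=(\bm j_u+\mathbf S\bm w)-(\bm\Psi+\mathbf S\bm w)=\bm j_u^{\rm I}.
\]
The relative-energy part $(h_p-h_s)\bm j$ is unchanged, so \eqref{eq:classI-reduced-heat-def} yields $\bm q^{{\rm I}\prime}=\bm q^{\rm I}$. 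Equivalently, $\bm q'-\bm\Psi'=\bm q-\bm\Psi$ by \eqref{eq:classI-heat-identification}.

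\emph{Force covariance.} Compute
\[
 -\Div\T_p'+\omega_p\Div\T^{{\rm I}\prime}=-\Div\T_p-\Div\mathbf S+\omega_p\Div\T^{\rm I}=\bm m_p-\Div\mathbf S=\bm m_p',
\]
using \eqref{eq:classI-energy-force-identification} in the middle step. As a consistency check via Proposition~\ref{prop:relative-acceleration-identification}, $\bm{\mathcal A}'=\bm{\mathcal A}$: the term $\omega_s\Div\mathbf S+\omega_p\Div\mathbf S-\Div\mathbf S$ vanishes. So the identification manifold (relative acceleration zero) is unchanged.

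\emph{Entropy pair.} By \eqref{eq:classI-entropy-flux-invariant},
\[
 \bm\Phi^{{\rm I}\prime}=(\bar s_p-\bar s_s)\bm j+\frac{\bm q^{{\rm I}\prime}+\bm\Psi'}{T}=\bm\Phi^{\rm I}+\frac{\mathbf S\bm w}{T}.
\]
Because the entropy density and convective flux are unchanged, the balance \eqref{eq:total-entropy-balance} forces $\zeta^{{\rm I}\prime}=\zeta^{\rm I}+\Div(\mathbf S\bm w/T)$. Combining this with \eqref{eq:gauge-entropy-pair} for the parent gives \eqref{eq:classI-gauge-commutation} immediately: both productions shift by the same divergence. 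One can alternatively note $\bm\Phi^{{\rm I}\prime}=\bm\Phi'$, the parent flux in Class-I variables.

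\emph{Force invariance.} This is the main obstacle, since $\bm F^{\rm I}$ is defined through the particular decomposition \eqref{eq:classI-diffusion-force}, which involves $\taup$, $\beta$ and the fixed gauge $\bm m_p^{\rm el}=\bm 0$. A shift $\mathbf S$ moves reversible stress between constituents and creates a nonzero elastic or ordinary interaction part. To avoid re-deriving the split, I would use the characterization \eqref{eq:mpd-classI-force}, namely $\bm F^{\rm I}=\bm m_p^d/M_{sp}$, which follows from comparing the force identification with the representative's interaction. Under the hypothesis, $\bm m_p^d$ is unchanged while the reversible parts absorb $\mathbf S$. Concretely, the reversible interaction changes by $-\Div\mathbf S$, and the identification's right side changes by exactly the same $-\Div\mathbf S$, so their difference $\bm m_p^d$ and hence $\bm F^{\rm I}$ are invariant. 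Explicitly, the $-\beta\Div\taup$ term and the ordinary terms are reassigned but sum to the same vector. The care needed is to phrase $\bm F^{\rm I}$ as the representation-independent residual $\bigl(\bm m_p-\bm m_p^{\rm rev}\bigr)/M_{sp}$, with $\bm m_p^{\rm rev}$ defined by the identification minus the reversible force, before substituting.
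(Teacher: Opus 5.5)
Your proposal is correct and follows the paper's proof essentially step by step. The steps are: the $\pm\mathbf S$ cancellation in $\T^{\rm I}$; $\bm\Psi'=\bm\Psi+\mathbf S(\bm u_p-\bm u_s)=\bm\Psi+\mathbf S\bm w$ subtracted against the parent shifts of $\bm j_u$ and $\bm q$; direct verification of the primed force identification; the $\mathbf S\bm w/T$ shift entering only through the entropy-invariance correction; and invariance of $\bm F^{\rm I}$ via $\bm m_p^d=M_{sp}\bm F^{\rm I}$ with $\bm m_p^d$ held fixed. Your additional checks that $\bm{\mathcal A}'=\bm{\mathcal A}$, so the identification manifold itself is representation invariant, and that $\bm\Phi^{{\rm I}\prime}=\bm\Phi'$ are correct and mildly strengthen the paper's argument.
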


\begin{proof}
The total stress is unchanged because the two stress shifts cancel.  From \eqref{eq:Psi-classI} and $\bm u_p-\bm u_s=\bm w$,
\[
 \bm\Psi'
 =(\T_p+\mathbf S)\bm u_p+(\T_s-\mathbf S)\bm u_s
 =\bm\Psi+\mathbf S\bm w.
\]
Proposition~\ref{prop:gauge-transform} gives $\bm j_u'=\bm j_u+\mathbf S\bm w$ and $\bm q'=\bm q+\mathbf S\bm w$.  Hence
\[
 \bm j_u^{{\rm I}\prime}
 =\bm j_u'-\bm\Psi'
 =\bm j_u-\bm\Psi
 =\bm j_u^{\rm I},
 \qquad
 \bm q^{{\rm I}\prime}
 =\bm q'-\bm\Psi'
 =\bm q^{\rm I}.
\]
Furthermore,
\[
 -\Div\T_p'+\omega_p\Div\T^{{\rm I}\prime}
 =-\Div\T_p-\Div\mathbf S+\omega_p\Div\T^{\rm I}
 =\bm m_p-\Div\mathbf S
 =\bm m_p',
\]
which proves \eqref{eq:classI-gauge-force-covariance}.  Since the preliminary Class-I entropy flux \eqref{eq:classI-entropy-flux-naive} depends on the invariant $\bm q^{\rm I}$, it is unchanged, whereas the entropy-invariance correction $\bm\Psi/T$ acquires $\mathbf S\bm w/T$.  Equation \eqref{eq:classI-gauge-entropy-covariance} therefore follows from the reduced entropy balance and is exactly the image of \eqref{eq:gauge-entropy-pair}.  This proves \eqref{eq:classI-gauge-commutation}.  Finally, if the representation change is confined to the reversible sector, then $\bm m_p^d$ is unchanged.  Because $M_{sp}$ is a state function and \eqref{eq:mpd-classI-force} defines the inherited force invariantly by $\bm m_p^d=M_{sp}\bm F^{\rm I}$, equation \eqref{eq:classI-gauge-force-invariant} follows.
\end{proof}

\begin{remark}[Invariant and representative-dependent reduced quantities]
Under Proposition~\ref{prop:classI-gauge-covariance}, the Class-I total stress, internal-energy flux, heat flux and, for reversible representation changes, the diffusion force are invariant.  The entropy flux and pointwise entropy production inherit the parent divergence shift.  Re-imposing untransformed Fourier, viscous or drag closures after a representation change therefore defines a different constitutive model rather than another representation of the same entropy-invariant reduction.
\end{remark}

For the inherited diagonal closures one obtains explicitly
\begin{equation}
 \begin{aligned}
 \zeta^{\rm I}={}&
 \frac{1}{T^2}(\nabla T)\cdot\mathbf K_T(\nabla T)
 +\sum_{\alpha=s,p}\frac{1}{T}
 \left(2\eta_\alpha|\dev\D_\alpha^{\rm I}|^2
 +\kappa_\alpha^b(\Div\bm v_\alpha^{\rm I})^2\right)\\
 &+\frac{\gamma}{T M_{sp}^2}|\bm j|^2
 +\zeta_C\ge0,
 \end{aligned}
 \label{eq:classI-positive-zeta}
\end{equation}
where
$\bm v_p^{\rm I}=\bm v+\bm j/\rho_p$ and
$\bm v_s^{\rm I}=\bm v-\bm j/\rho_s$.
Since $\bm j=M_{sp}\bm w$, \eqref{eq:classI-positive-zeta} is algebraically identical to \eqref{eq:fully-positive-zeta}.  The inherited local production is unchanged on the stated identification manifold. This statement does not equate entropy histories of solutions to the two different dynamical systems.

The Class-I decomposition in \eqref{eq:classI-entropy-production} is obtained algebraically from the already entropy-exploited Class-II production. No new Principle~V selection is made at this stage; the identity \eqref{eq:classI-zeta-equals-classII} is established before the inherited constitutive relations are rewritten in Class-I variables. In particular, this algebraic reduction does not convert the conformation admissibility condition $-\Zc:\mathbf R_C\ge0$ into a constitutive law for $\mathbf R_C$.

\subsection{Class-I temperature equation: native identity and inherited decomposition}

The Class-I temperature equation admits two distinct forms.  The first follows directly from the reduced mass and internal-energy balances and therefore involves only native Class-I quantities.  The second resolves the same identity into mechanisms inherited from the fixed Class-II representative.

\begin{proposition}[Native Class-I temperature identity]
\label{prop:classI-temperature-native}
Assume the thermodynamic state structure used above and the exact Class-I mass balances \eqref{eq:classI-mass} and internal-energy balance \eqref{eq:classI-internal-energy}.  Then
\begin{equation}
 \begin{aligned}
 \CV\Db T={}&
 -\Div\bm q^{\rm I}
 +\T^{\rm I}:\D
 +(p-Tp_T)\Div\bm v\\
 &-\bm j\cdot\nabla(h_p-h_s)
 -(\Zc-T\Zc_T):\Db\C.
 \end{aligned}
 \label{eq:temperature-classI-native}
\end{equation}
This identity is invariant under the coordinated representation change of Proposition~\ref{prop:classI-gauge-covariance}.
\end{proposition}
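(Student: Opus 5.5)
The plan mirrors the derivation of \eqref{eq:temperature-raw}, now starting from the Class-I internal-energy balance \eqref{eq:classI-internal-energy} in place of \eqref{eq:internal-energy-exact}.

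First I rewrite \eqref{eq:classI-internal-energy} in barycentric form,
\[
 \Db u+u\Div\bm v+\Div\bm j_u^{\rm I}=\T^{\rm I}:\D .
\]
Next I insert the temperature differential \eqref{eq:u-temperature-differential} along the barycentric motion:
\[
 \Db u=\CV\Db T+\sum_\alpha h_\alpha\Db\rho_\alpha+(\Zc-T\Zc_T):\Db\C .
\]
The Class-I mass balances \eqref{eq:classI-mass} give $\Db\rho_p=-\rho_p\Div\bm v-\Div\bm j$ and $\Db\rho_s=-\rho_s\Div\bm v+\Div\bm j$. Using the decomposition \eqref{eq:classI-reduced-heat-def}, $\bm j_u^{\rm I}=\bm q^{\rm I}+(h_p-h_s)\bm j$, the term $\Div\bm j_u^{\rm I}$ splits as
\[
 \Div\bm q^{\rm I}+(h_p-h_s)\Div\bm j+\bm j\cdot\nabla(h_p-h_s).
\]
The $(h_p-h_s)\Div\bm j$ contributions then cancel against those coming from $\sum_\alpha h_\alpha\Db\rho_\alpha$. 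What remains multiplying $\Div\bm v$ is $-u+\sum_\alpha\rho_\alpha h_\alpha$.

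The one genuinely thermodynamic step is identifying this coefficient as $p-Tp_T$. From \eqref{eq:u-p-def} and \eqref{eq:partial-entropy-enthalpy},
\[
 \sum_\alpha\rho_\alpha h_\alpha-u=p+f+T\sum_\alpha\rho_\alpha\bar s_\alpha-f-Ts .
\]
Differentiating $p=\sum\rho_\alpha\mu_\alpha-f$ in $T$ at fixed $(\rho_s,\rho_p,\C)$ gives
\[
 p_T=-\sum_\alpha\rho_\alpha\bar s_\alpha+s .
\]
Hence $\sum_\alpha\rho_\alpha\bar s_\alpha-s=-p_T$, and the coefficient equals $p-Tp_T$. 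This is the same computation implicit in \eqref{eq:temperature-raw}, so I would cite it rather than redo it. Collecting terms yields \eqref{eq:temperature-classI-native}. No constitutive closure or entropy exploitation is used, only the balance structure and the Gibbs relations.

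For the invariance claim, I check each term against Proposition~\ref{prop:classI-gauge-covariance}. Under the coordinated change, $\T^{\rm I}$, $\bm j_u^{\rm I}$ and $\bm q^{\rm I}$ are invariant by \eqref{eq:classI-gauge-native-invariants}. The fields $\rho_\alpha$, $\bm v$, $\bm j$, $T$ and $\C$ and all state functions ($\CV$, $p$, $h_\alpha$, $\Zc$) are untouched, since the change only shifts $\T_\alpha$, $\bm m_p$, $\bm q$ and $\bm J_u$. Therefore every term in \eqref{eq:temperature-classI-native} is unchanged.

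The only point needing care is a bookkeeping one. The identity must be built from native Class-I quantities: $\bm q^{\rm I}$ rather than $\bm q$, and $\T^{\rm I}:\D$ rather than the constituent powers. Otherwise the representative-dependent $\mathbf S\bm w$ terms would appear, as they do in \eqref{eq:gauge-transform-flux}. I would stress that, unlike the inherited decomposition, no Class-II dissipative splitting enters here.
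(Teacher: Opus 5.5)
Your proposal is correct and follows essentially the same route as the paper: insert the temperature differential of $u$ into the barycentric form of the Class-I internal-energy balance, cancel the $(h_p-h_s)\Div\bm j$ terms against the relative-energy part of $\bm j_u^{\rm I}$, identify $\sum_\alpha\rho_\alpha h_\alpha-u=p-Tp_T$, and obtain invariance because $\T^{\rm I}$, $\bm q^{\rm I}$, the state functions and the reduced kinematic variables are all unchanged. Your explicit check of the coefficient via $p_T=s-\sum_\alpha\rho_\alpha\bar s_\alpha$ is the same computation the paper uses in its appendix derivation of the two-fluid temperature equation.
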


\begin{proof}
The thermodynamic differential of the internal energy is
\[
 \dd u
 =\CV\,\dd T
 +\sum_{\alpha=s,p}h_\alpha\,\dd\rho_\alpha
 +(\Zc-T\Zc_T):\dd\C.
\]
Using the Class-I mass balances in barycentric form,
$\Db\rho_\alpha+\rho_\alpha\Div\bm v+\Div\bm J_\alpha=0$ with $\bm J_p=\bm j=-\bm J_s$, together with
$\sum_\alpha\rho_\alpha h_\alpha=u+p-Tp_T$, and inserting the result into \eqref{eq:classI-internal-energy}, all terms containing $\Div\bm J_\alpha$ combine with the relative-energy part of $\bm j_u^{\rm I}$.  The remaining diffusion contribution is $-\bm j\cdot\nabla(h_p-h_s)$, which gives \eqref{eq:temperature-classI-native}.  The invariance statement follows because $\T^{\rm I}$ and $\bm q^{\rm I}$ are invariant by \eqref{eq:classI-gauge-native-invariants}, while the state functions and reduced kinematic variables are unchanged.
\end{proof}

To expose the inherited dissipative and thermoelastic mechanisms, return now to the fixed Class-II representative used in the parent entropy exploitation.  Define
\begin{equation}
 \Phi_d^{\rm I}
 \eqdef
 \T_p^d:\D_p^{\rm I}
 +\T_s^d:\D_s^{\rm I}
 -\bm j\cdot\bm F^{\rm I}.
 \label{eq:classI-Phi-d}
\end{equation}
By \eqref{eq:mpd-classI-force} and \eqref{eq:classI-velocity-identification},
$\Phi_d^{\rm I}\equiv\Phi_d$ algebraically.  Using
$\bm q=\bm q^{\rm I}+\bm\Psi$,
$\bm J_p=\bm j=-\bm J_s$ and
$\D_d=\D_d^{\rm I}$ in the already derived Class-II temperature equation yields the equivalent inherited decomposition
\begin{equation}
 \begin{aligned}
 \CV\Db T={}&
 -\Div(\bm q^{\rm I}+\bm\Psi)
 +\Phi_d^{\rm I}
 -T p_T\Div\bm v\\
 &-T\bm j\cdot\nabla^{\circ}(\bar s_p-\bar s_s)
 +T\left(\frac{\partial\taup}{\partial T}\right)_{\rho_s,\rho_p,\C}:\D_d^{\rm I}\\
 &-(\Zc-T\Zc_T):\mathbf R_C.
 \end{aligned}
 \label{eq:temperature-classI}
\end{equation}
Equation \eqref{eq:temperature-classI} is algebraically equivalent to the native identity \eqref{eq:temperature-classI-native} in the fixed representative and displays diffusion heating, inherited stress--diffusion transport, relative constituent entropy transport, reversible thermoelastic heating/cooling and configurational relaxation heat separately.

\begin{remark}[Representation covariance of the thermal decomposition]
Equation \eqref{eq:temperature-classI-native} is representation invariant.  In \eqref{eq:temperature-classI}, $\bm\Psi$ and the corresponding reversible-power terms transform together, leaving the complete temperature identity unchanged.  Thus $-\Div(\bm q^{\rm I}+\bm\Psi)$ is a representative-dependent inherited decomposition; the native Class-I heat transport is $-\Div\bm q^{\rm I}$.
\end{remark}

\subsection{Closure status of the Class-I descendant}

The entropy-invariant reduction transports the parent constitutive information but does not create missing constitutive laws. With the parent free energy, heat conductivity, viscous responses, drag coefficient, and an explicit objective conformation-source law $\mathbf R_C=\mathcal R_C$ all specified, with $-\Zc:\mathcal R_C\ge0$, the descendant is constitutively closed at the level of equation counting, provided the chosen source also preserves the admissible conformation state space. The Hookean law has this property by Proposition~\ref{prop:SPD-preservation}. If only the inequality $-\Zc:\mathbf R_C\ge0$ is imposed, the descendant is thermodynamically admissible but not constitutively closed because $\mathbf R_C$ remains undetermined. No claim of PDE well-posedness or sufficiency of boundary conditions is made here. The unknown fields are
\[
 \rho_s,\qquad \rho_p,\qquad \bm v,\qquad \bm j,\qquad T,\qquad \C,
\]
with the algebraic constraint $\rho=\rho_s+\rho_p$.  The governing equations are the two mass balances \eqref{eq:classI-mass}, the single momentum balance \eqref{eq:classI-momentum}, the conformation equation \eqref{eq:classI-conformation}, and the native temperature identity \eqref{eq:temperature-classI-native}; in the fixed inherited representative the latter is algebraically equivalent to the mechanism-resolved form \eqref{eq:temperature-classI}.  Their inherited constitutive relations are
\begin{align}
 \T^{\rm I}&=-p\I+\taup
 +\T_p^d(\D_p^{\rm I})+\T_s^d(\D_s^{\rm I}),
 \label{eq:classI-closed-stress}\\
 \bm j&=-\frac{M_{sp}^2}{\gamma}\bm F^{\rm I},
 \label{eq:classI-closed-diffusion}\\
 \bm q^{\rm I}&=-\mathbf K_T\nabla T-\bm\Psi,
 \label{eq:classI-closed-heat}\\
 \mathbf R_C&=\mathcal R_C,
 &-\Zc:\mathcal R_C&=T\zeta_C\ge0.
 \label{eq:classI-closed-conformation-dissipation}
\end{align}
with $\bm F^{\rm I}$, $\bm\Psi$, $\D_p^{\rm I}$, $\D_s^{\rm I}$ and $\D_d^{\rm I}$ defined in \eqref{eq:classI-diffusion-force}, \eqref{eq:Psi-nonhydrostatic} and \eqref{eq:classI-deformation-tensors}. Here $\mathcal R_C$ denotes the explicitly chosen objective conformation-source constitutive map, not an additional unknown. For the Hookean model, $\mathcal R_C=-\lambda^{-1}(\C-\C_{\rm eq})$ by \eqref{eq:C-hookean}; the Gaussian restriction is required for the dissipation match of Proposition~\ref{prop:gaussian-dissipation-match}, not for closure of the Hookean second moment itself.

Equation \eqref{eq:classI-closed-heat} is the transported heat-flux closure in the fixed parent representative.  Under a coordinated reversible rewrite, the parent heat flux and $\bm\Psi$ acquire the same $\mathbf S\bm w$ increment, while $\bm q^{\rm I}$ remains invariant by Proposition~\ref{prop:classI-gauge-covariance}.

The resulting PDE system is not necessarily second order in space.  In the generic case, $\bm F^{\rm I}$ contains $\Div\T_\alpha^d$, whereas $\T_\alpha^d$ itself depends on gradients of $\bm j/\rho_\alpha$ through \eqref{eq:classI-deformation-tensors}.  Hence the reduced diffusion relation retains the higher-order stress--diffusion terms inherited from the Class-II constitutive laws.

\subsection{Exact identities and the relative-inertia truncation}

The reduction contains five logically distinct statements.
\begin{enumerate}[label=(\roman*)]
\item The changes of variables \eqref{eq:classI-velocity-identification}, the partial mass balances \eqref{eq:classI-mass}, the conformation equation \eqref{eq:classI-conformation}, and the stress/flux rearrangements are exact changes of variables.
\item The vectorial condition \eqref{eq:classI-energy-force-identification} is an additional constitutive identification, equivalent to elimination of relative acceleration on parent processes. The internal-energy and entropy-production identities are algebraic consequences conditional on that choice, not unconditional identities between solution spaces.
\item The standard Class-I momentum equation \eqref{eq:classI-momentum} neglects the relative-inertia flux $\mathbf R_d=\bm j\otimes\bm j/M_{sp}$ relative to the exact summed Class-II momentum balance \eqref{eq:exact-summed-momentum-classI-vars}.  At the same quadratic order the standard Class-I mechanical-energy architecture omits the relative kinetic-energy storage $K_d=|\bm j|^2/(2M_{sp})$.  Proposition~\ref{prop:relative-inertia-consistency} shows that these two omissions must be treated together: keeping the momentum flux alone would leave the uncompensated reversible power $\mathbf R_d:\D$.
\item The relative-inertia truncation does not alter the exact algebraic identity $\zeta^{\rm I}\equiv\zeta^{\rm II}$.  On the identification manifold, the quadratic storage/transport channel is reversible and of order $O(|\bm w|^2)$; the inherited drag production is still retained. This must not be read as declaring the eliminated relative-momentum dynamics nondissipative.  This is the precise sense in which entropy invariance and an approximate momentum balance are compatible.
\item The subsequent specialization $\bm j\to\bm0$ is a further physical reduction.  It eliminates diffusion itself and its entropy production; it is not part of the entropy-invariant Class-II to Class-I identification.
\end{enumerate}

\section{From Class I to the one-velocity model and classical upper-convected rheology}
\label{sec:reductions}

The entropy-invariant Class-I model retains a nonzero diffusion flux. The one-velocity model is obtained by a further reduction.  Consider a nondimensional fast-drag scaling in which the reciprocal drag strength is a small parameter, the constituent densities remain bounded away from zero, and the remaining Class-I driving forces and temperature stay uniformly bounded on the scale of interest. Assume also the spatial derivative bounds needed to make the velocity-gradient corrections small and, when comparing with Class II, exclude unresolved fast initial layers in relative acceleration. These are hypotheses of a formal limit, not estimates proved here.  From the reduced relation \eqref{eq:classI-diffusion-law},
\begin{equation}
 \bm w=\bm v_p-\bm v_s=O(\gamma^{-1}),
 \qquad
 \bm j=M_{sp}\bm w=O(\gamma^{-1}),
 \label{eq:fast-drag}
\end{equation}
so that $\mathbf R_d=O(\gamma^{-2})$ and $K_d=O(\gamma^{-2})$, whereas the drag entropy production $\gamma|\bm w|^2/T$ is $O(\gamma^{-1})$.  Thus, in this nondimensional scaling, the coherent relative-inertia pair omitted from the standard Class-I momentum/energy architecture is $O(\gamma^{-2})$, whereas the retained drag production is $O(\gamma^{-1})$; the former is asymptotically smaller in the fast-drag limit.  To leading order,
\begin{equation}
 \bm v_p=\bm v_s=\bm v,
 \qquad
 \Lgrad_d=\nabla\bm v
 \label{eq:one-velocity}
\end{equation}
for every constant $\alpha$. Equation \eqref{eq:two-velocity-upper} reduces to the standard upper-convected derivative
\begin{equation}
 \overset{\nabla}{\C}
 =\partial_t\C+\bm v\cdot\nabla\C
 -(\nabla\bm v)\C-\C(\nabla\bm v)^{\sf T}.
 \label{eq:standard-upper}
\end{equation}
The two-fluid ambiguity is therefore invisible in a homogeneous one-velocity model.

For Hookean dumbbells, \eqref{eq:C-hookean} becomes
\begin{equation}
 \lambda\overset{\nabla}{\C}+\C=\frac{k_BT}{H}\I.
 \label{eq:onefluid-C}
\end{equation}
The stress equation \eqref{eq:compressible-nonisothermal-stress} becomes
\begin{equation}
 \overset{\nabla}{\taup}
 +\left(\frac1\lambda+\Div\bm v-\dot{\ln H}\right)\taup
 =2nk_BT\D
 +nk_BT\left(\dot{\ln H}-\dot{\ln T}\right)\I.
 \label{eq:onefluid-compressible-stress}
\end{equation}
Equation \eqref{eq:onefluid-compressible-stress} is the non-isothermal one-fluid reduction under the stated assumptions. The shorter form
\[
 \overset{\nabla}{\taup}+\left(\lambda^{-1}+\Div\bm v\right)\taup
 =2nk_BT\D-nk_B\dot T\,\I
\]
requires the additional assumption $\dot H=0$. It must not be used when $H(T)$ is retained as a thermoelastic material function.
Only after the additional reductions
\begin{equation}
 \Div\bm v=0,
 \qquad
 T=T_0,
 \qquad
 n=n_0>0,
 \label{eq:isoincomp}
\end{equation}
can one write
\begin{equation}
 \lambda\overset{\nabla}{\taup}+\taup=2\eta_p\D,
 \qquad
 \eta_p=nk_BT\lambda.
 \label{eq:UCM}
\end{equation}
which is the upper-convected Maxwell polymer stress. Adding a Newtonian solvent stress gives the Oldroyd-B structure. Uniform $n_0$ is an additional preparation assumption: incompressibility only implies material conservation of $n$, not spatial uniformity.

An imposed isothermal rheological model is not obtained by retaining an arbitrary dissipative flow together with the unchanged adiabatic energy balance and setting $T=T_0$. Its temperature equation is replaced by a prescribed-temperature constraint and a compatible thermal reservoir or heat supply. In particular, maintaining globally uniform temperature during general dissipative motion requires a compensating thermal exchange; it is not the generic evolution of the isolated parent model.

\subsection{Information lost under early incompressible/isothermal reduction}

Three kinds of information are lost if incompressibility/isothermality are imposed before entropy exploitation.
\begin{enumerate}[label=(\roman*)]
\item Volumetric terms distinguish specific and density-weighted conformation rates, cf. \eqref{eq:truesdell-M}.
\item Isotropic conformation stresses can be hidden inside the incompressibility pressure although they are thermodynamically determined in the compressible parent model.
\item Temperature enters both the equilibrium conformation and the polymer stress scale; normalization by $T$ creates the explicit thermal rate term \eqref{eq:A-thermal}.
\end{enumerate}
Thus imposing incompressibility and isothermality before entropy exploitation can hide thermodynamically determined isotropic stresses and removes the volumetric and thermal information used in the full parent derivation. The objective-rate compatibility test itself does not require those reductions to be absent; the point is that carrying them out only afterwards preserves the larger set of constraints supplied by the compressible, non-isothermal Class-II parent.

\section{Relation to existing polymer models}
\label{sec:literature-comparison}

Two-fluid polymer theories contain several structures relevant to the reductions above. Stress--composition coupling and stress-driven migration are treated in Doi--Onuki and Apostolakis--Mavrantzas--Beris~\cite{DoiOnuki1992,Apostolakis2002}. Thermodynamic two-fluid formulations based on generalized brackets, free-energy dissipation or Onsager structure include Mavrantzas--Beris, Beris--Mavrantzas, Beris--Edwards, Zhou--Zhang--E, Hooshyar--Germann and Spiller et al.~\cite{MavrantzasBeris1992,BerisMavrantzas1994,BerisEdwards1994,ZhouZhangE2006,HooshyarGermann2016,Spiller2021}. In particular, Spiller et al.~\cite{Spiller2021} first retain inertial dynamics of macromolecular and relative two-fluid motion and subsequently eliminate those inertial variables; this is a close precedent for treating relative inertia as a separate reduction issue, although their coarse-grained GENERIC construction is not the entropy-invariant Class-II to Class-I identification used here. Relations between convective/deformation velocities and the sharing of polymer stress are discussed by Tanaka, Pleiner--Harden and Zhou--Doi~\cite{Tanaka1997,PleinerHarden2004,ZhouDoi2022}. Kinetic two-fluid descriptions with distinct solvent and polymer velocities are given by Degond--Lozinski--Owens and Singh--Subramanian--Ansumali~\cite{Degond2010,Singh2022}.

Thermodynamically consistent one-fluid conformation models have been derived by generalized-bracket/GENERIC, continuum and microstructural methods~\cite{GrmelaOttinger1997}. A recent synthesis by Theodorou--Mavrantzas reviews the generalized-bracket and GENERIC routes and, for isothermal polymer solutions, develops a two-fluid hydrodynamic model with stress-induced migration~\cite{TheodorouMavrantzas2025}. Non-isothermal formulations distinguish stored configurational energy from entropy production and, in compressible settings, distinguish different conformation-tensor normalizations~\cite{Guaily2015,Guaily2020,Dostalik2020}. Equations \eqref{eq:compressible-nonisothermal-stress} and \eqref{eq:truesdell-M} exhibit the corresponding dependence of tensor transport on compressibility and state-variable choice.

The closest structural overlaps should therefore be separated carefully. Bothe--Dreyer provide the parity-based entropy principle and entropy-invariant Class-II to Class-I reduction for molecular mixtures, but without polymer conformation~\cite{BotheDreyer2015}. Zhou--Doi derive isothermal two-fluid polymer equations from Onsager's principle and show that stress-driven diffusion depends on the chosen sharing of deformation kinematics~\cite{ZhouDoi2022}; their dilute free energy is likewise linear in polymer concentration, so the elastic term does not contribute to osmotic pressure. Dostal\'ik--M\'alek--Pr\r{u}\v{s}a--S\"uli derive a compressible non-isothermal dilute-polymer kinetic model and temperature equation in a one-fluid/solvent-dominated setting~\cite{Dostalik2020}. At mixture level, Prahs recently emphasized that apparent Onsager thermo-diffusive closure structure depends on the representation of diffusion-related transport and on the chosen thermodynamic-force basis~\cite{Prahs2026}. Both results therefore expose representation dependence, although the result of Prahs concerns a different mixture-level energy/entropy-flux repartition rather than the coordinated constituent stress--interaction changes or polymer conformation considered here. The present claims concern the intersection of these structures: a common-temperature Class-II polymer mixture with two constituent momenta, explicit reversible configurational-power cancellation, representation covariance, and entropy-invariant reduction to a diffusive Class-I descendant.

\section{Conclusions}

For the Class-II solvent--polymer model, the total entropy identity permits configurational and mechanical exchange terms to be separated before constitutive sign conditions are imposed. For the dilute conformation free energy considered here, conformation-dependent chemical-potential transport cancels the barycentric-to-polymer convection correction, and the remaining configurational deformation power cancels the elastic partial-stress power when the weights defining the deformation velocity and elastic-stress decomposition match. The residual entropy production contains heat conduction, partial viscous dissipation, interconstituent drag and configurational relaxation. The corresponding signed term in the reduced configurational relative-free-energy balance is therefore an internal mechanical exchange rather than a separate entropy production.

The kinetic distribution description and the conformation-tensor description remain distinct levels of description. For Hookean springs the second-moment evolution closes exactly without a Gaussian hypothesis, but their relaxation productions agree through the finite-dimensional conformation free energy only for the centered Gaussian Hookean invariant class proved in Proposition~\ref{prop:gaussian-dissipation-match}; the non-Gaussian free-energy and dissipation remainders in Proposition~\ref{prop:nongaussian-remainder} are nonnegative and quantify the information omitted by the moment state. The connector population balance fixes the polymer material derivative and yields the two-velocity upper-convected conformation transport by moments; objectivity is a covariance requirement, not the selector of that transport law. The matching relation between deformation velocity and elastic-stress decomposition follows from cancellation of reversible configurational and mechanical powers in the fixed Class-II representative. For the same dumbbell free energy, Kramers stress and no additional reversible channel, the stated mechanism-wise Gordon--Schowalter compatibility test requires the affine upper-convected choice $a=1$; alternative objective rates require a changed state variable or compensating reversible physics.

The Class-II to Class-I constitutive identification preserves the entropy production algebraically on its identification manifold. It eliminates relative acceleration and is not a dynamical equivalence theorem. The Class-I diffusion force inherits thermo-chemical, configurational-stress and partial-viscous-stress terms, and the selected Class-I entropy flux gives $\zeta^{\rm I}\equiv\zeta^{\rm II}$. Coordinated reversible stress--interaction changes leave the native Class-I stress and heat flux invariant and transform the local Class-I entropy pair by the same divergence as the parent pair. The exact summed momentum equation contains the relative-inertia flux $\mathbf R_d=\bm j\otimes\bm j/M_{sp}$. Its associated storage $K_d=|\bm j|^2/(2M_{sp})$ and flux $\bm Q_d$ satisfy the purely reversible specialization in Proposition~\ref{prop:relative-inertia-consistency} only on parent processes that also obey the relative-acceleration identification; the standard Class-I model therefore omits the relative-inertia momentum flux together with its reversible kinetic-energy channel. This truncation does not alter the algebraic entropy-production identity.

The Class-II and Class-I temperature equations separate conductive transport, viscous conversion, polymer--solvent drag, thermo-compressive work, relative constituent transport, thermoelastic heating or cooling and configurational relaxation. The relaxation contribution to entropy production is not in general identical to the local thermal source because the configurational free energy depends on temperature. The non-isothermal Hookean stress equation likewise retains the temperature dependence of the spring stiffness and equilibrium conformation. Temperature-dependent normalization of that conformation requires both kinematic and thermodynamic chain-rule corrections; the stress scale $nk_BT$ alone does not imply purely entropic thermal response.

The zero-diffusion limit converts the Class-I model to a one-velocity conformation theory. Incompressibility, isothermality and constant polymer number density then yield the upper-convected Maxwell/Oldroyd-B equations. These reductions remove constitutive information present in the Class-II parent and are therefore applied only after the entropy and energy identities have been established.

\section*{Acknowledgment}
The author gratefully acknowledges discussions with the late Wolfgang Dreyer on entropy principles and non-Newtonian constitutive modeling.

\appendix
\section{Detailed derivation of the reduced relative-free-energy identity}
\label{app:relative-entropy}

Assume at the selected local test process that $\Dp T=0$. From \eqref{eq:phi-eq},
\begin{equation}
 \nabla_{\bm r}\ln\frac{\varphi}{\varphi_{\rm eq}}
 =\nabla_{\bm r}\ln\varphi+\frac1{k_BT}\nabla_{\bm r}U.
 \label{eq:relative-grad}
\end{equation}
Using \eqref{eq:phi-balance}, normalization and integration by parts,
\begin{align}
 \partial_t\eta_c+\Div(\eta_c\bm v_p)
 &=-nk_B\int \Dp\varphi\,
 \ln\frac{\varphi}{\varphi_{\rm eq}}\,\dd\bm r\notag\\
 &=-nk_B\int\left(\varphi\Lgrad_d\bm r+\Jr\right)\cdot
 \nabla_{\bm r}\ln\frac{\varphi}{\varphi_{\rm eq}}\,\dd\bm r.
 \label{eq:relative-step1}
\end{align}
The affine term equals
\begin{align}
 &-nk_B\int\varphi(\Lgrad_d\bm r)\cdot\nabla_{\bm r}\ln\varphi\,\dd\bm r
 -\frac nT\int\varphi(\Lgrad_d\bm r)\cdot\nabla_{\bm r}U\,\dd\bm r\notag\\
 &\qquad=nk_B\tr\Lgrad_d
 -\frac1T\left(\taup:\Lgrad_d+nk_BT\tr\Lgrad_d\right)
 =-\frac1T\taup:\Lgrad_d.
 \label{eq:relative-affine}
\end{align}
For \eqref{eq:Jr-closure}, the connector-flux term is
\begin{equation}
 -nk_B\int\Jr\cdot\nabla_{\bm r}\ln\frac{\varphi}{\varphi_{\rm eq}}\,\dd\bm r
 =-\frac nT\int\Jr\cdot\nabla_{\bm r}\mu_c\,\dd\bm r
 =\zeta_r.
 \label{eq:relative-diss}
\end{equation}
This proves \eqref{eq:partial-config-entropy}.

\section{Derivation of the Hookean moment equation}
\label{app:hookean}

Multiply \eqref{eq:phi-balance} by $\bm r\otimes\bm r$ and integrate. The affine term gives
\begin{equation}
 -\int \bm r\otimes\bm r\,\nabla_{\bm r}\cdot(\varphi\Lgrad_d\bm r)\,\dd\bm r
 =\Lgrad_d\C+\C\Lgrad_d^{\sf T}.
 \label{eq:moment-affine}
\end{equation}
For \eqref{eq:hookean-Jr},
\begin{align}
 \int(\Jr\otimes\bm r+\bm r\otimes\Jr)\,\dd\bm r
 &=-\frac{4H}{\zeta_b}\C
 -\frac{2k_BT}{\zeta_b}
 \int\left(\nabla_{\bm r}\varphi\otimes\bm r+\bm r\otimes\nabla_{\bm r}\varphi\right)\dd\bm r\notag\\
 &=-\frac{4H}{\zeta_b}\C+\frac{4k_BT}{\zeta_b}\I,
 \label{eq:moment-relax}
\end{align}
which yields \eqref{eq:C-hookean-raw}.

\section{Derivation of the compressible non-isothermal stress equation}

Starting from
\[
 \taup=nH(T)\C-nk_BT\I,
\]
apply the operator $\overset{\nabla(p|d)}{(\cdot)}=\Dp(\cdot)-\Lgrad_d(\cdot)-(\cdot)\Lgrad_d^{\sf T}$. Then
\begin{align}
 \overset{\nabla(p|d)}{\taup}
 ={}&nH\overset{\nabla(p|d)}{\C}
 +H(\Dp n)\C
 +n(\Dp H)\C \notag\\
 &-k_B\Dp(nT)\I
 +2nk_BT\D_d.
 \label{eq:stress-derivation1}
\end{align}
Use \eqref{eq:C-hookean}, $\Dp n=-n\Div\bm v_p$ and
\[
 H\C=\frac{\taup}{n}+k_BT\I.
\]
The isotropic terms proportional to $T\Div\bm v_p$ cancel. The terms proportional to $\Dp H$ are separated into a multiple of $\taup$ and an isotropic contribution. This gives
\[
 \overset{\nabla(p|d)}{\taup}
 +\left(\frac1\lambda+\Div\bm v_p-\Dp\ln H\right)\taup
 =2nk_BT\D_d
 +nk_BT\left(\Dp\ln H-\Dp\ln T\right)\I,
\]
which is \eqref{eq:compressible-nonisothermal-stress}.

\section{Detailed derivation of the temperature equation}
\label{app:temperature}

The derivation of \Cref{thm:temperature} follows directly from the energy equation. From $u=f+Ts$, $s=-f_T$, $\mu_\alpha=f_{\rho_\alpha}$ and $\Zc=f_\C$,
\begin{equation}
 u_T=-Tf_{TT}=\CV,
 \qquad
 u_{\rho_\alpha}=\mu_\alpha-Tf_{T\rho_\alpha}
 =\mu_\alpha+T\bar s_\alpha=h_\alpha,
 \qquad
 u_\C=\Zc-T\Zc_T.
 \label{eq:u-derivatives-app}
\end{equation}
Thus
\begin{equation}
 \Db u
 =\CV\Db T
 +\sum_\alpha h_\alpha\Db\rho_\alpha
 +(\Zc-T\Zc_T):\Db\C.
 \label{eq:Du-app}
\end{equation}
Use
\[
 \Db\rho_\alpha=-\rho_\alpha\Div\bm v-\Div\bm J_\alpha
\]
and $\bm J_u=u\bm v+\bm q+\sum_\alpha h_\alpha\bm J_\alpha$ in \eqref{eq:internal-energy-exact}. The coefficient of $\Div\bm v$ becomes
\begin{align}
 u-\sum_\alpha\rho_\alpha h_\alpha
 &=f+Ts-\sum_\alpha\rho_\alpha(\mu_\alpha+T\bar s_\alpha)\notag\\
 &=-p+T\left(s-\sum_\alpha\rho_\alpha\bar s_\alpha\right)\notag\\
 &=-p+Tp_T,
 \label{eq:pressure-maxwell-app}
\end{align}
where $p_T=s-\sum_\alpha\rho_\alpha\bar s_\alpha$ follows from differentiating \eqref{eq:u-p-def}. This yields \eqref{eq:temperature-raw}.

Next split
\[
 \nabla h_\alpha
 =\nabla\mu_\alpha+\bar s_\alpha\nabla T+T\nabla\bar s_\alpha.
\]
The first two terms are exactly those appearing in the entropy identity \eqref{eq:exact-entropy-identity}. After the reversible Class-II extraction and the elastic cancellation theorem their contribution reduces to $\Phi_d-\Zc:\mathbf R_C$. The remaining term is $-T\sum\bm J_\alpha\cdot\nabla\bar s_\alpha+T\Zc_T:\Db\C$.

For \eqref{eq:dilute-f-split},
\[
 (\bar s_p)_\C=-\frac1{\rho_p}\Zc_T,
 \qquad
 (\bar s_s)_\C=0.
\]
Hence the $\C$-gradient part of $-T\bm J_p\cdot\nabla\bar s_p$ is
\[
 +T\Zc_T:\left[(\bm J_p/\rho_p)\cdot\nabla\C\right],
\]
which cancels the last term in \eqref{eq:C-bary-kinematics} when $T\Zc_T:\Db\C$ is inserted. By isotropy, $\C\Zc_T=\Zc_T\C$, so the deformation part is
\[
 T\Zc_T:(\Lgrad_d\C+\C\Lgrad_d^{\sf T})
 =T(\C\Zc_T+\Zc_T\C):\D_d
 =T(\taup)_T:\D_d,
\]
and the relaxation part combines as
\[
 -\Zc:\mathbf R_C+T\Zc_T:\mathbf R_C
 =-(\Zc-T\Zc_T):\mathbf R_C.
\]
This proves \eqref{eq:temperature-twofluid}.

\end{document}